\documentclass[aps,%
twocolumn,%
floatfix,%
pra,%
amsfonts,%
groupedaddress,%
superscriptaddress]{revtex4-2}%
\usepackage[utf8]{inputenc}
\usepackage[T1]{fontenc}
\usepackage{mathrsfs}
\usepackage{mathtools}
\usepackage{amsmath}
\usepackage{amsfonts}
\usepackage{amssymb}
\usepackage[titletoc,title]{appendix}
\usepackage{algorithm,algorithmic}

\usepackage{bbm}
\usepackage{bm}
\usepackage{amsthm}
\usepackage{makecell}
\usepackage{array}
\usepackage[dvipsnames,xcdraw]{xcolor}
\usepackage{nicematrix}

\definecolor{beamer@blendedblue}{rgb}{0.2,0.2,0.7}
\usepackage[osf,sc]{mathpazo}
\usepackage{caption}
\usepackage{subfig}
\usepackage{overpic}
\usepackage{booktabs}
\usepackage{multirow}
\usepackage[skins,breakable,most]{tcolorbox}
\newcolumntype{C}[1]{>{\centering\arraybackslash}p{#1}}
\usepackage[colorlinks=true,%
linkcolor=beamer@blendedblue,%
bookmarks=true,%
breaklinks=true,%
filecolor=beamer@blendedblue,%
anchorcolor=yellow,%
citecolor=beamer@blendedblue,%
urlcolor=beamer@blendedblue,%
pdfauthor={Zhenyuan Huang, Ping Xu, and Kun Wang},%
pdfsubject={Worst-Case Sample Complexity Bounds for Distributed Inner Product Estimation with Local Randomized Measurements},%
CJKbookmarks=true]{hyperref}
\usepackage{float}
\usepackage{pifont}

\newtheorem{definition}{Definition}
\newtheorem{proposition}[definition]{Proposition}

\newtheorem{lemma}[definition]{Lemma}
\newtheorem{theorem}[definition]{Theorem}
\newtheorem{corollary}[definition]{Corollary}
\newtheorem{conjecture}[definition]{Conjecture}

\mathchardef\ordinarycolon\mathcode`\:
\mathcode`\:=\string"8000
\def\vcentcolon{\mathrel{\mathop\ordinarycolon}}
\begingroup \catcode`\:=\active
  \lowercase{\endgroup
  \let :\vcentcolon
  }

\tcolorboxenvironment{proof}{%
  enhanced,
  boxrule=0.5pt,
  breakable,
  sharp corners,
  before upper={\setlength{\parindent}{1.5em}}, 
  upperbox=visible,
}

\DeclareFontFamily{U}{mathx}{\hyphenchar\font45}
\DeclareFontShape{U}{mathx}{m}{n}{<-> mathx10}{}
\DeclareSymbolFont{mathx}{U}{mathx}{m}{n}
\DeclareMathAccent{\widebar}{0}{mathx}{"73}

\newcommand{\wh}[1]{\widehat{#1}}

\newcommand{\ket}[1]{\vert{#1}\rangle}
\newcommand{\bra}[1]{\langle{#1}\vert}
\newcommand{\ketbra}[1]{\vert{#1}\rangle\!\langle{#1}\vert}

\newcommand\proj[1]{\vert{#1}\rangle\!\langle{#1}\vert}

\newcommand{\density}[1]{\mathscr{D}(#1)}

\newcommand{\abs}[1]{\left\vert{#1}\right\vert}

\DeclareMathOperator{\tr}{Tr}

\newcommand{\ox}{\otimes}
\def\ve{\varepsilon}
\newcommand{\norm}[2]{\ensuremath{\left\lVert#1\right\rVert_{#2}}}%

\newcommand{\id}{\operatorname{id}}

\newcommand{\spn}[1]{{\rm span}\{#1\}}
\newcommand{\Var}{\mathbb{V}}

\newcommand{\Cl}{{\rm Cl}}
\newcommand{\ba}{\bm{a}}
\newcommand{\bb}{\bm{b}}

\newcommand{\bsigma}{\bm{\sigma}}

\makeatletter
\newsavebox{\@brx}
\newcommand{\llangle}[1][]{\savebox{\@brx}{\(\m@th{#1\langle}\)}%
  \mathopen{\copy\@brx\kern-0.5\wd\@brx\usebox{\@brx}}}
\newcommand{\rrangle}[1][]{\savebox{\@brx}{\(\m@th{#1\rangle}\)}%
  \mathclose{\copy\@brx\kern-0.5\wd\@brx\usebox{\@brx}}}
\makeatother

\newcommand*{\cO}{\mathcal{O}}
\newcommand*{\cP}{\mathcal{P}}

\newcommand*{\cU}{\mathcal{U}}

\newcommand{\bE}{\mathbb{E}}
\newcommand{\bC}{\mathbb{C}}

\newcommand{\bR}{\mathbb{R}}
\newcommand{\bP}{\mathbb{P}}
\newcommand{\bF}{\mathbb{F}}

\newcommand{\bI}{\mathbb{I}}

\definecolor{wildstrawberry}{rgb}{1.0, 0.26, 0.64}
\definecolor{googleblue}{HTML}{4285F4}
\definecolor{googlered}{HTML}{DB4437}
\definecolor{googleyellow}{HTML}{F4B400}
\definecolor{googlegreen}{HTML}{0F9D58}
\definecolor{klevinblue}{HTML}{002FA7}
\definecolor{tiffanyblue}{HTML}{0ABAB5}

\begin{document}

\newcommand{\thetitle}{{Worst-Case Sample Complexity Bounds for{\\}
Distributed Inner Product Estimation with Local Randomized Measurements}}

\title{\large\thetitle}

\author{Zhenyuan Huang}
\affiliation{College of Computer Science and Technology, 
National University of Defense Technology, 
Changsha 410073, China}%

\author{Kun Wang}
\email{nju.wangkun@gmail.com}
\affiliation{College of Computer Science and Technology, 
National University of Defense Technology, 
Changsha 410073, China}%

\author{Ping Xu}
\email{pingxu520@nju.edu.cn}
\affiliation{College of Computer Science and Technology, 
National University of Defense Technology, 
Changsha 410073, China}%

\date{\today}

\begin{abstract}
We study distributed inner product estimation for $n$-qubit states using local
randomized measurements, for which rigorous worst-case guarantees are less understood.
We first reduce the minimax kernel optimization to Hamming-distance kernels.
Within this class, unbiasedness fixes a unique kernel.
For this kernel under local Clifford sampling, we prove a sharp fourth-moment
bound using the single-qubit Clifford commutant. This yields worst-case sample complexity
$\cO(\sqrt{4.5^n})$, attained by identical pure product stabilizer states.
For the same kernel under local Haar sampling, we prove a local twirling
identity that compares its fourth moment with the Clifford fourth moment. This
gives the same rigorous upper bound as in the Clifford case, but the comparison
is lossy.
This motivates the conjectured sharper Haar scaling $\cO(\sqrt{3.6^n})$ attained by 
product states, and verify it for several important classes of states.
We also show that independent single-qubit Pauli shadows have worst-case
scaling $\cO(\sqrt{7.5^n})$ for large $n$.
\end{abstract}

\maketitle

\section{Introduction}

Distributed inner product estimation (DIPE)~\cite{elben2020cross} asks two remote parties, given copies of unknown $n$-qubit states $\rho$ and $\sigma$, to estimate the Hilbert--Schmidt inner product $\tr\left[\rho\sigma\right]$ 
using only local operations and classical communication (LOCC). This task lies at the core of cross-platform verification, distributed benchmarking, and certification of quantum devices prepared on separate platforms~\cite{eisert2020quantuma,kliesch2021theory,knorzer2025distributed},
and has motivated extensive theoretical and experimental work~\cite{Zhu2022,qian2024multimodal,zheng2025distributed,dalton2025resourceefficient,zheng2024crossplatform,hinsche2025efficient,gong2024sample,arunachalam2024distributed,zheng2025optimal}.
The optimal sample complexity of DIPE was established in~\cite{anshu2022distributed}, 
where it is achieved by a protocol based on global randomized measurements~\cite{elben2023randomized}. 
The problem addressed in this work is to establish rigorous worst-case sample
complexity bounds for experimentally feasible DIPE protocols based on
\emph{local randomized measurements (LRM)}.

LRM is a practically significant variant of the randomized measurement
toolbox~\cite{elben2023randomized}: one samples an $n$-qubit unitary $U=\bigotimes_{\ell=1}^n U_\ell$
in product form from single-qubit ensembles and then performs computational
basis measurements~\cite{elben2020cross,huang2020shadows}. 
This differs significantly from global randomized measurements, 
where one samples a single $n$-qubit unitary, 
which may be highly entangling across the full system. 
LRM is therefore the experimentally favorable setting for distributed architectures: 
it is compatible with local control and avoids the need for global random circuits 
across the full system~\cite{elben2023randomized,hu2025quantum,chen2025quantum}. 
Despite this practical appeal, the theoretical picture remains incomplete. 
Elben \emph{et al.}~\cite{elben2020cross} proposed the standard local Haar protocol,
and Zhu \emph{et al.}~\cite{Zhu2022} experimentally validated it, but neither work
provides a sample complexity analysis. 
Zheng \emph{et al.}~\cite{zheng2025distributed} obtained average-case and
state-dependent sample complexity bounds for structured random circuits and
related ensembles, while Wu \emph{et al.}~\cite{wu2025state} proved a loose
upper bound $\cO(\sqrt{6^n})$. 
These results do not determine the optimal post-processing kernel under
worst-case variance, nor do they identify the tight worst-case sample
complexity of DIPE with LRM. 
This gap substantially limits the theoretical understanding of DIPE with LRM
and, in turn, impedes its reliable practical use.

In this work, we address these gaps in three steps. 
First, we identify the unique unbiased Hamming-distance kernel 
selected by the worst-case symmetry reduction.
Then, we derive worst-case guarantees for local Clifford and Haar sampling 
based on this unique kernel.
Third, we compare these shared-randomness protocols with independent 
single-qubit Pauli shadows. 
The main results are summarized in Tab.~\ref{tab:summarization}.

\begin{table*}[t]
\centering
\colorlet{resultshade}{googlered!10}
\renewcommand{\arraystretch}{1.8} 
\setlength{\tabcolsep}{0pt} 
\setlength\heavyrulewidth{0.3ex}  
\begin{NiceTabular}{@{}C{0.3\textwidth}C{0.15\textwidth}C{0.35\textwidth}C{0.20\textwidth}@{}}[color-inside]
\toprule
\textbf{Protocol / ensemble} & \textbf{Scaling in $n$}  & \textbf{Tightness / extremal states} & \textbf{Source} \\ \midrule
Single-qubit Haar ensemble & $\cO(\sqrt{6^n})$ 
& {Not achievable} & \cite[Proposition~2]{wu2025state} \\
\rowcolor{resultshade}
Single-qubit Clifford ensemble & $\cO(\sqrt{4.5^n})$
& {Identical pure product stabilizer states}
& Corollary~\ref{lem:clifford-complexity} \\
\rowcolor{resultshade}
Single-qubit Haar ensemble & $\cO(\sqrt{4.5^n})$
& {Not achievable}
& Corollary~\ref{lem:haar-complexity} \\
\rowcolor{resultshade}
Single-qubit Haar ensemble & $\cO(\sqrt{3.6^n})$
& {Identical pure product states}
& Conjecture~\ref{conj:Haar-complexity} \\
\rowcolor{resultshade}
Independent Pauli shadows
& $\cO(\sqrt{7.5^n})$
& {Identical pure product states}
& Proposition~\ref{prop:independent-pauli-shadow-complexity-main} \\ \bottomrule
\end{NiceTabular}
\caption{\raggedright\textbf{Summary of worst-case sample complexity scalings for DIPE with LRM.}
The table reports the exponential dependence on the number of qubits $n$,
omitting common accuracy and confidence prefactors. The tightness column
indicates whether the displayed scaling is attained by an explicit family of
input states or remains only an upper bound. The shaded block records the new
results obtained in this work. 
Conjecture~\ref{conj:Haar-complexity} is supported by analytical and numerical evidence.}
\label{tab:summarization}
\end{table*}

\textbf{Organization.} This work is organized as follows. 
Sec.~\ref{sec:general-framework} recalls the DIPE with LRM framework. 
Sec.~\ref{sec:main-results} presents our main results, including the Hamming distance kernel reduction, and the single-qubit Clifford 
and Haar analyses. 
Sec.~\ref{sec:independent-pauli-shadow-main} analyzes DIPE based on independent Pauli shadows for comparison.

\textbf{Notation and conventions.}
We write $\density{(\bC^2)^{\otimes n}}$ for the set of density operators on the $n$-qubit Hilbert space $(\bC^2)^{\otimes n}$ and $\bR$ for the real numbers. For a positive integer $m$, $S_m$ denotes the symmetric group on $m$ elements. Single-qubit binary measurement outcomes are denoted by $s,t\in\{0,1\}$, whereas bold symbols such as $\bm s,\bm t\in\{0,1\}^n$ denote $n$-qubit outcome strings. Single-qubit subsystems are labeled by letters such as $A$ and $B$, while $\boldsymbol{A}=A_1\cdots A_n$ and $\boldsymbol{B}=B_1\cdots B_n$ denote the corresponding multipartite registers. The identity operator on the joint register is denoted by $\bI_{AB}$, or equivalently $\bI_{\boldsymbol{A}\boldsymbol{B}}$ when the tensor product structure is emphasized. The full swap between the two global registers is denoted by $\bF_{AB}$, or equivalently $\bF_{\boldsymbol{AB}}$ when the tensor product structure is emphasized, and more generally $\bF_S$ denotes the partial swap on a subset $S\subseteq[n]$. If $S$ is a finite set, then $|S|$ denotes its cardinality. We write $\bE[\cdot]$ for expectation and $\Var[\cdot]$ for variance. 
We use $\Theta(\cdot)$ and $\cO(\cdot)$ to denote standard asymptotic notation.

\section{DIPE with Local Randomized Measurements}\label{sec:general-framework}

We recall the standard kernel estimator formalism for DIPE specialized to the LRM setting. 
This subsection reviews standard material; 
the new results begin in the following sections.

Let $\rho_{\boldsymbol{A}},\sigma_{\boldsymbol{B}}\in\density{(\bC^2)^{\otimes n}}$. 
In one round of local DIPE, Alice and Bob apply the same random product unitary
\begin{align}
U=\bigotimes_{\ell=1}^n U_\ell,
\end{align}
where each $U_\ell$ is sampled independently from a single-qubit ensemble $\cU_1$, and then measure in the computational basis, obtaining bit strings $\bm s,\bm t\in\{0,1\}^n$. A kernel estimator is specified by a function
\begin{align}
f:\{0,1\}^n\times\{0,1\}^n\to\bR.
\end{align}
Given $N_U$ independent unitary samples and $N_M$ repeated shots for each unitary sample, the empirical estimator reads
\begin{align}
\label{eq:estimator}
\wh X_f(\rho_{\boldsymbol{A}},\sigma_{\boldsymbol{B}})
=
\frac{1}{N_U N_M^2}
\sum_{l=1}^{N_U}
\sum_{j,k=1}^{N_M}
 f(\bm s_{l,j},\bm t_{l,k}).
\end{align}
Since each shot consumes one copy of $\rho$ and one copy of $\sigma$, the total number of copies used per party is $N:=N_U N_M$; throughout, we refer to this quantity as the sample complexity. All sample complexity statements below refer to sufficient copy guarantees obtained from variance bounds via Chebyshev's inequality. In particular, we do not prove matching minimax lower bounds for these estimators or protocols.

The averaged estimator can be written as
\begin{align}
\bE[\wh X_f(\rho_{\boldsymbol{A}},\sigma_{\boldsymbol{B}})]
=\tr\left[(\rho_{\boldsymbol{A}}\ox\sigma_{\boldsymbol{B}})\,\bar\Omega_f\right],
\end{align}
where
\begin{align}
\bar\Omega_f
:=
\bE_U\!\left[(U^\dagger\ox U^\dagger)\Omega_f(U\ox U)\right],
\end{align}
with
\begin{align}
\Omega_f:=\sum_{\bm s,\bm t}f(\bm s,\bm t)\ketbra{\bm s}_{\boldsymbol{A}}\ox\ketbra{\bm t}_{\boldsymbol{B}}.
\end{align}
Thus unbiased estimation of $\tr\left[\rho\sigma\right]$ is equivalent to the operator identity
$\bar\Omega_f=\bF_{\boldsymbol{AB}}$, where $\bF_{\boldsymbol{AB}}$ 
denotes the full swap between the two $n$-qubit copies $\boldsymbol{A}$ and $\boldsymbol{B}$.

To expose the variance-based copy scaling, it is convenient to group the data by unitary sample. Define
\begin{align}
X_M^{(l)}:=\frac{1}{N_M^2}\sum_{j,k=1}^{N_M}f(\bm s_{l,j},\bm t_{l,k}),
\qquad
\wh X_f=\frac{1}{N_U}\sum_{l=1}^{N_U}X_M^{(l)}.
\end{align}
Let $X_M$ denote a generic copy of the independent and identically distributed (i.i.d.) random variables $X_M^{(1)},\dots,X_M^{(N_U)}$. Then
\begin{align}
\Var[\wh X_f]=\frac{1}{N_U}\Var[X_M].
\end{align}
Applying the law of total variance and classifying the coincidences among the shot indices yields the decomposition into four terms
\begin{align}
\label{eq:four-term-decomp}
\Var[X_M]=\mathbb{V}^{(1)}+\mathbb{V}^{(2)}+\mathbb{V}^{(3)}+\mathbb{V}^{(4)},
\end{align}
where
\begin{subequations}
\label{eq:general-vterms}
\begin{align}
\mathbb{V}^{(1)}&=-\tr\left[\rho\sigma\right]^2,\label{eq:general-vterms-v1}\\
\mathbb{V}^{(2)}&=\frac{1}{N_M^2}\bE_U\bE_{\bm s,\bm t\mid U}[f(\bm s,\bm t)^2],\label{eq:general-vterms-v2}\\
\mathbb{V}^{(3)}&=\frac{N_M-1}{N_M^2}\bE_U\!\left[\Xi_U(\rho,\sigma)+\Xi_U(\sigma,\rho)\right],\label{eq:general-vterms-v3}\\
\mathbb{V}^{(4)}&=\left(\frac{N_M-1}{N_M}\right)^2
\bE_U[(\bE_{\bm s,\bm t\mid U}[f(\bm s,\bm t)])^2].\label{eq:fourth-term}
\end{align}
\end{subequations}
In the above, for a fixed unitary sample $U$, $\bm s,\bm s'$ are independent outcomes from measuring $U\rho U^\dagger$ in the computational basis, $\bm t,\bm t'$ are independent outcomes from measuring $U\sigma U^\dagger$.
$\Xi_U(\rho,\sigma):=\bE_{\bm s,\bm t\mid U}\![f(\bm s,\bm t)\,\bE_{\bm s'\mid U}f(\bm s',\bm t)]$
and the quantity $\Xi_U(\sigma,\rho)$ is defined analogously, with the roles of $\rho$ and $\sigma$ exchanged.
The derivation of the variance decomposition is given in Appx.~\ref{app:general}.

For any unbiased estimator, Chebyshev's inequality gives the failure probability bound
\begin{align}
\label{eq:chebyshev}
\bP(|\wh X_f-\tr\left[\rho\sigma\right]|\ge \ve) \le \Var[\wh X_f]/\ve^2.
\end{align}
Therefore, for a target failure probability $\delta\in(0,1)$, using $\Var[\wh X_f]=\Var[X_M]/N_U$, 
a sufficient condition is
\begin{align}
\label{eq:sample complexity-from-variance}
N:=N_U N_M \geq N_M\Var[X_M]/(\delta\ve^2).
\end{align}
Thus worst-case sample complexity bounds follow by optimizing the variance term 
in Eq.~\eqref{eq:sample complexity-from-variance}.

\section{Main Results for DIPE with Local Randomized Measurements}\label{sec:main-results}

We now present the main theoretical results of this work. We first characterize the optimal kernel structure 
under worst-case variance optimization,
and then derive the resulting Chebyshev-based sufficient-copy guarantees
for the single-qubit Clifford and single-qubit Haar ensembles.

\subsection{Optimal Hamming distance kernels}\label{sec:optimal-kernel}

We first show that the kernel optimization problem admits a sharp symmetry reduction. All proofs in this subsection are collected in Appx.~\ref{app:hamming-sufficiency}.

\begin{lemma}
\label{lem:hamming-sufficiency}
Assume that the local sampling protocol is covariant under the action of
\begin{align}
\Gamma:=(S_2)^n\rtimes S_n,
\end{align}
where $\rtimes$ denotes the semidirect product for the natural action of $S_n$ on $(S_2)^n$ by permuting the $n$ factors. 
That is, $\Gamma$ is generated by local outcome relabelings and qubit permutations. 
Then, for every admissible kernel $f$, its symmetrization
\begin{align}
\label{eq:sym-kernel}
f^{\mathrm{sym}}(\bm s,\bm t)
:=
\bE_{\pi\sim S_n}\bE_{\tau\sim(S_2)^n}
\left[f(\tau(\pi(\bm s)),\tau(\pi(\bm t)))\right]
\end{align}
depends only on the Hamming distance $D(\bm s,\bm t)$ and satisfies
\begin{align}
\sup_{\rho,\sigma}\Var_{\rho,\sigma}[\wh X_{f^{\mathrm{sym}}}]
\le
\sup_{\rho,\sigma}\Var_{\rho,\sigma}[\wh X_f].
\end{align}
Consequently,
\begin{align}
\label{eq:minimax-reduction}
\inf_{f:\bar\Omega_f=\bF_{\boldsymbol{AB}}}\sup_{\rho,\sigma}\Var_{\rho,\sigma}[\wh X_f]
=
\inf_{\substack{g=g(D):\\\bar\Omega_g=\bF_{\boldsymbol{AB}}}}\sup_{\rho,\sigma}\Var_{\rho,\sigma}[\wh X_g].
\end{align}
\end{lemma}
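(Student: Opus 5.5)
The plan has three steps: show that $f^{\mathrm{sym}}$ is a function of the Hamming distance, show that the worst-case variance is invariant under relabeling the kernel by $\Gamma$, and then show that the worst-case variance is convex in $f$.

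\emph{Step 1: Hamming-distance dependence.} Let $\gamma=(\tau,\pi)\in\Gamma$ act on a pair of strings by $(\bm s,\bm t)\mapsto(\tau\pi\bm s,\tau\pi\bm t)$. The Hamming distance is invariant under this diagonal action. Conversely, $\Gamma$ acts transitively on pairs with a fixed distance $d$. To see this, use the bit flips $\tau$ to send $\bm s$ to $\bm 0$. Then $\bm t$ becomes the string $\bm s\oplus\bm t$, of weight $d$, and a permutation $\pi$ moves its support to the first $d$ positions. Since $f^{\mathrm{sym}}$ is the average of $f$ over the orbit of $(\bm s,\bm t)$, it is constant on orbits and therefore depends only on $D(\bm s,\bm t)$.

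\emph{Step 2: invariance of the worst-case variance.} Each $\gamma\in\Gamma$ is implemented on $n$ qubits by a unitary $V_\gamma$, namely a product of Pauli $X$'s times a qubit permutation, with $V_\gamma\ket{\bm s}=\ket{\gamma\bm s}$. Write $f_\gamma(\bm s,\bm t):=f(\gamma\bm s,\gamma\bm t)$. Then $\Omega_{f_\gamma}=(V_\gamma^\dagger\ox V_\gamma^\dagger)\Omega_f(V_\gamma\ox V_\gamma)$.

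The covariance assumption, which I read as requiring the law of $U$ to be invariant under $U\mapsto V_\gamma U V'^\dagger$ for a suitable $V'$ of the same type, shows that the full joint outcome distribution of the protocol with kernel $f_\gamma$ on inputs $(\rho,\sigma)$ equals that of kernel $f$ on $(W\rho W^\dagger,W\sigma W^\dagger)$ for some fixed unitary $W$.

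Several consequences follow.
\begin{itemize}
\item Unbiasedness is preserved: $\bar\Omega_{f_\gamma}=\bF_{\boldsymbol{AB}}$, because $\bF$ commutes with $W\ox W$.
\item Every term $\mathbb V^{(i)}$ in Eq.~\eqref{eq:general-vterms} transforms consistently, so $\Var_{\rho,\sigma}[\wh X_{f_\gamma}]=\Var_{W\rho W^\dagger,W\sigma W^\dagger}[\wh X_f]$.
\item Since $\rho\mapsto W\rho W^\dagger$ is a bijection of $\density{(\bC^2)^{\ox n}}$, taking the supremum gives $\sup\Var[\wh X_{f_\gamma}]=\sup\Var[\wh X_f]$.
\end{itemize}
For local Clifford or Haar sampling this covariance holds because $X$ gates and qubit permutations normalize the product ensemble.

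\emph{Step 3: convexity.} The kernel $f^{\mathrm{sym}}=\bE_\gamma f_\gamma$ is linear in $f$, and $\bar\Omega$ is linear in $f$, so $f^{\mathrm{sym}}$ is admissible. The key claim, which I expect to be the main obstacle, is that $f\mapsto\Var_{\rho,\sigma}[\wh X_f]$ is convex for fixed $(\rho,\sigma)$ on the affine space of unbiased kernels. The expected difficulty is the term $\mathbb V^{(1)}$, which looks non-quadratic.

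It resolves as follows. On unbiased kernels, $\mathbb V^{(1)}=-\tr[\rho\sigma]^2$ is a constant independent of $f$. Moreover, $\Var[\wh X_f]=\bE[\wh X_f^2]-\tr[\rho\sigma]^2$, where $\wh X_f$ is a linear functional of $f$ applied to the fixed random data $(U,\bm s,\bm t)$. Hence $\bE[\wh X_f^2]$ is a positive semidefinite quadratic form in $f$, and so it is convex. Alternatively, one can observe directly that $\mathbb V^{(2)}$, $\mathbb V^{(3)}$ and $\mathbb V^{(4)}$ are expectations of squares of linear functionals; in particular $\Xi_U$ sums with its swap to give $\bE_{\bm t}[(\bE_{\bm s}f)^2]$-type terms.

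A supremum of convex functions is convex. Applying Jensen's inequality over the finite group $\Gamma$ together with Step 2 gives
\[
\sup_{\rho,\sigma}\Var[\wh X_{f^{\mathrm{sym}}}]\le\bE_\gamma\sup_{\rho,\sigma}\Var[\wh X_{f_\gamma}]=\sup_{\rho,\sigma}\Var[\wh X_f].
\]

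Finally, Eq.~\eqref{eq:minimax-reduction} follows. The inequality ``$\le$'' in the direction of restricting the infimum to Hamming-distance kernels holds since those kernels form a subset of all unbiased kernels. The reverse inequality follows by replacing any $f$ with $f^{\mathrm{sym}}$, which is an unbiased Hamming-distance kernel with no larger worst-case variance.
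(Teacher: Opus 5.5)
Your proposal is correct and follows the paper's route. Orbit-averaging over $\Gamma$ gives dependence only on the Hamming distance. Covariance then transfers each relabeled kernel $f_\gamma$ to $f$ evaluated on a locally rotated input pair, which preserves both admissibility and the worst-case variance. Finally, convexity of the variance in the kernel, combined with averaging over $\Gamma$, gives the inequality and the minimax reduction.

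Your explicit argument for convexity, that $\bE[\wh X_f^2]$ is a positive semidefinite quadratic form in $f$ on fixed data, spells out what the paper asserts in one line. It does not actually need the restriction to unbiased kernels, since $\Var[Y]=\bE[(Y-\bE Y)^2]$ is already a positive semidefinite quadratic form in $Y$.
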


The covariance requirement above is natural in the local i.i.d. setting. Qubit permutation covariance follows automatically from exchangeability of the independently sampled single-qubit unitaries. Local bit-flip covariance further requires invariance of the one qubit sampling law under left multiplication by Pauli $X$, which holds for both single-qubit Haar and uniform single-qubit Clifford sampling studied in this work.

Lem.~\ref{lem:hamming-sufficiency} reduces the kernel search space from $4^n$ entries to the $n+1$ values $g(0),\dots,g(n)$. We now show that unbiasedness removes even this residual freedom.

\begin{theorem}
\label{thm:unique-kernel}
Let $f(\bm s,\bm t)=g(D(\bm s,\bm t))$ be a Hamming distance kernel. Then the following are equivalent:
\begin{enumerate}
    \item $\wh X_f$ is unbiased for $\tr\left[\rho\sigma\right]$ for all states $\rho_{\boldsymbol{A}}$ 
          and $\sigma_{\boldsymbol{B}}$;
    \item $\bar\Omega_f=\bF_{\boldsymbol{AB}}$;
    \item $\forall d=0,1,\dots,n$, $g(d)=(-1)^d2^{n-d}$.
\end{enumerate}
As a consequence, the unique unbiased kernel in the Hamming distance class is
\begin{align}\label{eq:product-kernel-main}
f(\bm s,\bm t)=(-1)^{D(\bm s,\bm t)}2^{n-D(\bm s,\bm t)}
=\prod_{\ell=1}^n\left(3\delta_{s_\ell,t_\ell}-1\right).
\end{align}
\end{theorem}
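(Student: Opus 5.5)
The plan is to prove the chain of equivalences $(1)\Leftrightarrow(2)\Leftrightarrow(3)$ by reducing everything to a single-qubit computation, exploiting that a Hamming-distance kernel is a polynomial in the single-qubit projectors.

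First I will establish $(1)\Leftrightarrow(2)$. The implication $(2)\Rightarrow(1)$ is immediate, because $\bE[\wh X_f]=\tr[(\rho\ox\sigma)\bar\Omega_f]$ and $\tr[(\rho\ox\sigma)\bF_{\boldsymbol{AB}}]=\tr[\rho\sigma]$. For $(1)\Rightarrow(2)$, unbiasedness gives $\tr[(\rho\ox\sigma)(\bar\Omega_f-\bF_{\boldsymbol{AB}})]=0$ for all density operators $\rho,\sigma$. Product operators $\rho\ox\sigma$ span all operators on $\boldsymbol{AB}$, and $\bar\Omega_f-\bF_{\boldsymbol{AB}}$ is Hermitian. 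Hence it vanishes.

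Next I rewrite the kernel. Set $P^{\ell}_{=}=\sum_{s}\ketbra{s}_{A_\ell}\ox\ketbra{s}_{B_\ell}$ and $P^\ell_{\neq}=\bI-P^\ell_{=}$. Then
\begin{align}
\Omega_f=\sum_{S\subseteq[n]} g(|S|)\bigotimes_{\ell\in S}P^\ell_{\neq}\bigotimes_{\ell\notin S}P^\ell_{=}.
\end{align}
The twirl $U\mapsto (U^\dagger\ox U^\dagger)(\cdot)(U\ox U)$ factorizes over qubits because the $U_\ell$ are i.i.d. It also suffices that $\cU_1$ be a unitary 2-design, which holds for both Haar and Clifford; I will state this as the standing assumption. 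With this assumption, the single-qubit twirls are
\begin{align}
\bE[(U^\dagger\ox U^\dagger)P_{=}(U\ox U)]=\tfrac{1}{3}(\bI+\bF),\qquad \bE[(U^\dagger\ox U^\dagger)P_{\neq}(U\ox U)]=\tfrac13(2\bI-\bF).
\end{align}
These follow from Schur--Weyl duality using $\tr P_{=}=2$ and $\tr[P_{=}\bF]=2$. Writing $a=\tfrac13(\bI+\bF)$ and $b=\tfrac13(2\bI-\bF)$, we get
\begin{align}
\bar\Omega_f=\sum_S g(|S|)\,b^{\ox S}\ox a^{\ox S^c}.
\end{align}

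For $(3)\Rightarrow(2)$, the choice $g(d)=(-1)^d2^{n-d}$ factorizes as $\prod_\ell(2a_\ell-b_\ell)$ after substitution, i.e.\ the kernel is $\prod_\ell(2\delta_{s_\ell t_\ell}-(1-\delta_{s_\ell t_\ell}))=\prod_\ell(3\delta-1)$. This gives $\bar\Omega_f=\bigotimes_\ell(2a-b)=\bigotimes_\ell\bF_\ell=\bF_{\boldsymbol{AB}}$.

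For $(2)\Rightarrow(3)$, I expand in the basis $\{\bI,\bF\}^{\ox n}$ on each qubit. The operators $\bI$ and $\bF$ are linearly independent on $\bC^2\ox\bC^2$, so the tensor products $\bF_T$ for $T\subseteq[n]$ are linearly independent. The coefficient of $\bF_T$ in $\bar\Omega_f$ depends only on $|T|$ by symmetry. It is convenient to change variables: $a=\tfrac13(\bI+\bF)$ and $b=\tfrac13(2\bI-\bF)$ are themselves linearly independent, so $\{a,b\}^{\ox n}$ is also a basis. In that basis, the coefficient of $b^{\ox S}\ox a^{\ox S^c}$ is exactly $g(|S|)$. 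Since $\bF=2a-b$, the target $\bF_{\boldsymbol{AB}}=\bigotimes(2a-b)$ has coefficient $(-1)^{|S|}2^{n-|S|}$. Uniqueness of coordinates then forces $g(d)=(-1)^d2^{n-d}$.

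The main obstacle is justifying the single-qubit twirl formulas and the linear independence. For Clifford sampling the 2-design property must be invoked, and one must check that $a$ and $b$ are independent; this holds since $\bF\neq c\bI$. Once these are in place, the rest is bookkeeping.
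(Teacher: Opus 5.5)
Your proof is correct, and it takes a somewhat different route from the paper's.

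\textbf{The paper's route.} The paper expands $\bar\Omega_f$ in the partial-swap basis, $\bar\Omega_f=\sum_k\alpha_k(g)\sum_{|S|=k}\bF_S$, where $\alpha_k(g)=3^{-n}\sum_d g(d)K_d(k;n,3)$ is the $q=3$ Krawtchouk transform. It then argues by dimension counting that the constraints $\alpha_k(g)=0$ for $1\le k\le n-1$ leave a two-dimensional solution space. This space is spanned by $g_{\mathrm I}\equiv 1$ and $g_{\mathrm F}(d)=(-1)^d2^{n-d}$, and the coefficients are fixed by $\alpha_0=0$, $\alpha_n=1$. Along the way it takes as known that the Krawtchouk transform is invertible and that $g_{\mathrm F}$ yields $\bF_{\boldsymbol{AB}}$. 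It also takes $(1)\Leftrightarrow(2)$ from the main-text remark.

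\textbf{Your route.} You work in the basis $\{a,b\}^{\ox n}$ of twirled ``equal'' and ``unequal'' projectors. There the coordinates of $\bar\Omega_f$ are literally $g(|S|)$, and you read off the answer from $\bF=2a-b$. This change of basis is exactly the inverse Krawtchouk transform done implicitly. It gives you invertibility, the check that the product kernel produces $\bF_{\boldsymbol{AB}}$, and uniqueness in one step, with no coding-theory machinery. You also supply the spanning argument for $(1)\Leftrightarrow(2)$ explicitly.

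\textbf{What the paper's route buys.} It produces the explicit partial-swap coefficients $\alpha_k(g)$ of an arbitrary Hamming kernel. These characterize the larger class with $\bar\Omega_f\in\spn{\bI_{\boldsymbol{AB}},\bF_{\boldsymbol{AB}}}$.

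Making the single-qubit $2$-design hypothesis explicit is also right. The paper's $q=3$ Krawtchouk formula silently relies on the same twirl, and without some such assumption (e.g.\ a trivial ensemble, where $\bar\Omega_f$ is diagonal) condition (1) cannot hold for any Hamming kernel.
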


The kernel in Eq.~\eqref{eq:product-kernel-main} is exactly the local estimator
introduced in~\cite{elben2020cross}. Thus Lem.~\ref{lem:hamming-sufficiency}
and Thm.~\ref{thm:unique-kernel} show that this estimator is not an ad hoc
choice: within the Hamming distance class singled out by the worst-case
symmetry reduction, it is the unique unbiased kernel.

\subsection{DIPE with single-qubit Clifford ensemble}

The single-qubit Clifford ensemble is experimentally attractive and forms an
exact unitary $3$-design~\cite{webb2016clifford}. 
Practically, single-qubit Clifford randomized measurements are implemented
by choosing, independently for each qubit, one of the Pauli $X$, $Y$, and $Z$
bases uniformly at random and measuring in that basis.
All discussions and proofs in this subsection 
are collected in Appx.~\ref{app:clifford}.

Since the single-qubit Clifford ensemble form an unitary $3$-design,
for the unique kernel in Eq.~\eqref{eq:product-kernel-main}, 
the terms defined in Eq.~\eqref{eq:general-vterms} can be analytically calculated as
\begin{subequations}
\label{eq:shared-vterms}
\begin{align}
\mathbb{V}^{(1)}_{\Cl}
&= -\tr\left[\rho\sigma\right]^2,\label{eq:shared-vterms-v1}\\
\mathbb{V}^{(2)}_{\Cl}
&=
\frac{1}{N_M^2}
\tr\left[(2\bI+\bF)^{\otimes n}(\rho\ox\sigma)\right],\label{eq:shared-vterms-v2}\\
\mathbb V^{(3)}_{\rm Cl}
&=
\frac{N_M-1}{N_M^2} \left(\operatorname{Tr}[R_{AA'B}^{\otimes n}(\rho\otimes\rho\otimes\sigma)]\right.\nonumber\\
&\qquad\qquad\; + \left.\operatorname{Tr}[R_{ABB'}^{\otimes n}(\rho\otimes\sigma\otimes\sigma)]\right),
\label{eq:shared-vterms-v3}
\end{align}
\end{subequations}
where $\Cl$ denotes Clifford sampling and
\begin{align}
R_{AA'B}
=
\frac32\bF_{AA'}+\frac12\bF_{AB}+\frac12\bF_{A'B}-\bI,\\
R_{ABB'}
=
\frac32\bF_{BB'}+\frac12\bF_{AB}+\frac12\bF_{AB'}-\bI.
\end{align}
The genuinely Clifford-specific input enters only through the fourth term, 
where the failure of the $4$-design property becomes essential.
To evaluate $\mathbb{V}^{(4)}_{\Cl}$, we invoke the recent complete characterization of the Clifford commutant \cite{bittel2025commutant}. For one qubit, uniform Clifford conjugation maps $Z$ to a uniformly random element of $\{\pm X,\pm Y,\pm Z\}$, so the single-qubit Clifford fourth moment operator can be written exactly as
\begin{align}
\label{eq:cliff-r4-main}
R_{4,\Cl}
=
-\bI
+\frac12\left(\bF_{A_1B_1}+\bF_{A_2B_2}\right)
+\frac32\,\Omega_4^{(1)},
\end{align}
where
\begin{align}
\Omega_4^{(1)}:=\frac12\left(\bI^{\otimes 4}+X^{\otimes 4}+Y^{\otimes 4}+Z^{\otimes 4}\right)
\end{align}
is the one qubit Clifford commutant generator beyond the permutation sector.  
Since the local Cliffords are sampled independently, we obtain
\begin{align}\label{eq:Clifford-vterms-v4}
\mathbb{V}^{(4)}_{\Cl}
=
\left(\frac{N_M-1}{N_M}\right)^2
\tr\left[R_{4,\Cl}^{\otimes n}
(\rho\ox\sigma\ox\rho\ox\sigma)\right].
\end{align}
We prove the following state independent upper bound.
\begin{theorem}
\label{thm:clifford-fourth-term-bound}
For every $n\ge 1$,
\begin{align}\label{eq:Clifford-fourth-term-bound}
\max_{\rho,\sigma\in\density{(\bC^2)^{\otimes n}}}
\tr\left[R_{4,\Cl}^{\otimes n}
(\rho\ox\sigma\ox\rho\ox\sigma)\right]
=
\left(\frac32\right)^n.
\end{align}
Notably, the maximum is attained by a pair of 
identical pure product stabilizer states.
\end{theorem}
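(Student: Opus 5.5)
The plan is to rewrite the fourth moment in the Pauli basis, reduce to the case $\sigma=\rho$ with a Cauchy--Schwarz inequality, and then bound the resulting quartic by $(3/2)^n$ using purity constraints on reduced states. Throughout, we use the representation of Eq.~\eqref{eq:Clifford-vterms-v4} as $\bE_U[(\bE_{\bm s,\bm t\mid U}f)^2]$ rather than working with $R_{4,\Cl}^{\otimes n}$ as an operator. Its operator norm is too large, since $\Omega_4^{(1)}=2P$ with $P$ the rank-$4$ projector onto the $\langle X^{\otimes4},Z^{\otimes4}\rangle$ stabilizer code.

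\textbf{Step 1: rewrite in the Pauli basis.} Per qubit the kernel operator is
\begin{align}
\sum_{s,t}(3\delta_{s,t}-1)\ketbra{s}\otimes\ketbra{t}=\tfrac12\bI+\tfrac32 Z\otimes Z .
\end{align}
A uniform single-qubit Clifford sends $Z\otimes Z$ to $P_\ell\otimes P_\ell$ with $P_\ell$ uniform in $\{X,Y,Z\}$; the signs cancel. Hence for a basis string $\bm b\in\{x,y,z\}^n$,
\begin{align}
x_{\bm b}(\rho,\sigma):=\bE_{\bm s,\bm t\mid \bm b}f=2^{-n}\sum_{S\subseteq[n]}3^{|S|}\,a_{\bm b,S}\,c_{\bm b,S},
\end{align}
where $a_{\bm b,S}=\tr[\rho P^{\bm b}_S]$ and $c_{\bm b,S}=\tr[\sigma P^{\bm b}_S]$, and $P^{\bm b}_S$ is the Pauli string acting as $b_\ell$ on $S$ and trivially elsewhere. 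The target quantity is $\bE_{\bm b}[x_{\bm b}(\rho,\sigma)^2]$ with $\bm b$ uniform.

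\textbf{Step 2: the lower bound.} Take $\rho=\sigma=\ketbra{0}^{\otimes n}$. The expectation factorizes over qubits. A matching basis gives $(\tfrac12+\tfrac32)^2=4$ and each of the two mismatched bases gives $(1/2)^2$. The per-qubit average is therefore $(4+\tfrac14+\tfrac14)/3=\tfrac32$, which gives $(3/2)^n$ and the claimed attainment by identical pure product stabilizer states.

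\textbf{Step 3: reduction to $\sigma=\rho$.} Since $x_{\bm b}$ is an inner product of the coefficient vectors with positive weights $3^{|S|}$, Cauchy--Schwarz gives $x_{\bm b}(\rho,\sigma)^2\le x_{\bm b}(\rho,\rho)\,x_{\bm b}(\sigma,\sigma)$. Applying Cauchy--Schwarz again over $\bm b$ yields
\begin{align}
\bE_{\bm b}[x_{\bm b}(\rho,\sigma)^2]\le\sqrt{\bE_{\bm b}[y_{\bm b}(\rho)^2]\,\bE_{\bm b}[y_{\bm b}(\sigma)^2]},
\end{align}
with $y_{\bm b}(\rho)=x_{\bm b}(\rho,\rho)$. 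It therefore suffices to prove $\bE_{\bm b}[y_{\bm b}(\rho)^2]\le(3/2)^n$ for every single state $\rho$.

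\textbf{Step 4: the single-state bound (main obstacle).} Expanding the square and averaging over $\bm b$ gives
\begin{align}
\bE_{\bm b}[y_{\bm b}^2]=4^{-n}\sum_{P,Q\ \text{compatible}}3^{|S_P\cap S_Q|}\,a_P^2\,a_Q^2 .
\end{align}
Here $P,Q$ range over Pauli strings with supports $S_P,S_Q$, and ``compatible'' means they coincide on $S_P\cap S_Q$. This follows because the probability that both strings are consistent with $\bm b$ is $3^{-|S_P\cup S_Q|}$, and $|S_P|+|S_Q|-|S_P\cup S_Q|=|S_P\cap S_Q|$.

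I would write $3^{|S_P\cap S_Q|}=\sum_{K\subseteq S_P\cap S_Q}2^{|K|}$ and group the pairs by $K$ together with the common non-identity restriction $R$ of $P$ and $Q$ to $K$. For fixed $(K,R)$ the inner sum factorizes as $\bigl(\sum_{P:\,P|_K=R}a_P^2\bigr)^2$. This inner quantity is controlled by purities, because $\sum_{P\ \text{supported in}\ T}a_P^2=2^{|T|}\tr[\rho_T^2]\le2^{|T|}$. A supporting constraint is $\sum_{R\ne\bI\ \text{on}\ K}\sum_{P|_K=R}a_P^2\le 2^n$.

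The delicate point is to combine these so that the total is $4^{-n}\sum_K 2^{|K|}\cdot(\text{at most }4^{n}\cdot 3^{-|K|}\text{-type mass})$, matching the equality case. I would first try to prove the bound by induction on $n$: condition on the last qubit via the block Pauli decomposition $\rho=\tfrac12\sum_{\alpha}\rho_\alpha\otimes\sigma_\alpha$, and show the per-qubit recursion factor is at most $\tfrac32$. This would use $\sum_\alpha\|\rho_\alpha\|_2^2\le 2\|\rho\|_2^2$-type constraints together with the single-qubit fact $\sum_{\alpha\ne0}r_\alpha^2\le1$. The single-qubit case then serves as the base case and as a check.
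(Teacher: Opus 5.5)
Your Steps 1--3 are correct and match the paper's first reduction: weighted Cauchy--Schwarz on the Pauli/Walsh expansion reduces to identical states, and your per-qubit value $(4+\tfrac14+\tfrac14)/3=\tfrac32$ is the paper's equality case. The gap is Step 4, which carries the whole content of the theorem. What you sketch there does not work, for three reasons.

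First, the claimed factorization for fixed $(K,R)$ is not an identity. The pairs $(P,Q)$ must still be compatible on $(S_P\cap S_Q)\setminus K$, so $\bigl(\sum_{P|_K=R}a_P^2\bigr)^2$ is only an upper bound. Using it amounts to replacing the per-site kernel $\kappa_1(a,b)$ by $1+2\delta_{a=b\neq I}$, and that relaxation overshoots. For $\ket{\Phi^+}$ (weights $a_P^2=1$ on $II,XX,YY,ZZ$) the true value is $34/16$, but the relaxed sum is $(16+6+6+12)/16=5/2>(3/2)^2$.

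Second, the constraints you list cannot suffice for any argument. These are the reduced purities $\sum_{\mathrm{supp}P\subseteq T}a_P^2\le 2^{|T|}$ and $a_P^2\le 1$. The weight vector equal to $1$ on $\{II,XI,XX,XY\}$ satisfies all of them, yet in the exact compatible formula it gives $(1+6+3+6+6+9+9)/16=5/2$. What is missing is the commutation structure of the Pauli weights, for example $a_{XX}^2+a_{XY}^2\le 1$ for anticommuting strings. This is needed at every scale, not only as the single-qubit Bloch constraint.

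Third, the induction is not actually formulated, and its natural reading fails. A per-qubit factor of $3/2$ relative to the reduced block $\rho_0$ is false: for the Bell state, $\rho_0=\bI/2$ has value $1/4$ while the two-qubit value is $17/8$, a factor of $17/2$. So the inductive hypothesis must control the blocks $\rho_\alpha$ with $\alpha\neq0$, which are not states, together with the cross terms $G(\rho_0,\rho_\alpha)$. You have not said how this would be done.

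The paper supplies exactly these missing ingredients.
\begin{itemize}
\item It reduces to pure states by convexity.
\item It notes that the objective is a convex quadratic form in the characteristic distribution $\Xi_\psi(P)=2^{-n}a_P^2$, since $\kappa_n=\kappa_1^{\otimes n}\succeq 0$.
\item It invokes the result of Hinsche et al.\ that $\Xi_\psi$ lies in the convex hull of stabilizer characteristic distributions. This imports all the quantum constraints at once and reduces the problem to pure stabilizer states.
\item It then proves, by induction on $r$ uniformly in the affine dimension $m$, that $F_r(A)\le(3/4)^{r-m}(3/2)^m$ for every affine isotropic $A\subseteq(\mathbb{F}_2^2)^r$.
\end{itemize}
The dimension-dependent hypothesis, applied at dimension $d+1$, is what bounds the cross-fiber terms $C(B,B+h)$ in the two-fiber estimate. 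Without a comparably strengthened statement that uses this structure, Step 4 remains a plan rather than a proof.
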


In view of Eq.~\eqref{eq:Clifford-vterms-v4}, Thm.~\ref{thm:clifford-fourth-term-bound} immediately implies
\begin{align}
\label{eq:cliff-v4-main}
\mathbb{V}^{(4)}_{\Cl}
\le\left(\frac{N_M-1}{N_M}\right)^2\left(\frac32\right)^n.
\end{align}
Together with the state-independent upper bounds
\begin{align}\label{eq:v2-and-v3-upper-bounds}
\mathbb{V}^{(2)}_{\Cl}\le \frac{3^n}{N_M^2},
\qquad
\mathbb{V}^{(3)}_{\Cl}\le 2\,\frac{N_M-1}{N_M^2}\left(\frac74\right)^n,
\end{align}
this yields the following sufficient-copy corollary.

\begin{corollary}
\label{lem:clifford-complexity}
Let $\wh X_f$ be the estimator in Eq.~\eqref{eq:estimator} built from the unique kernel in Thm.~\ref{thm:unique-kernel}. 
Under single-qubit Clifford sampling, 
for arbitrary $n$-qubit states $\rho$ and $\sigma$, the guarantee
\begin{align}
\bP(|\wh X_f-\tr\left[\rho\sigma\right]|\ge \ve)\le \delta
\end{align}
holds with worst-case sample complexity per party satisfying
\begin{align}
\label{eq:cliff-sample-main}
N_{\star,\Cl} \leq \cO\!\left(\frac{\sqrt{4.5^n}}{\delta\ve^2}\right).
\end{align}
This bound is minimized by choosing $N_M=\Theta(2^{n/2})$.
\end{corollary}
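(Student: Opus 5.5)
The plan is to combine the four-term decomposition in Eq.~\eqref{eq:four-term-decomp} with the state-independent bounds already available, and then optimize over $N_M$.

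\emph{Step 1: bound $\Var[X_M]$ uniformly over states.} Drop $\mathbb V^{(1)}_{\Cl}=-\tr[\rho\sigma]^2\le 0$. For $\mathbb V^{(2)}_{\Cl}$ and $\mathbb V^{(3)}_{\Cl}$ use Eq.~\eqref{eq:v2-and-v3-upper-bounds}. For $\mathbb V^{(4)}_{\Cl}$ use Eq.~\eqref{eq:cliff-v4-main}, which is obtained from Thm.~\ref{thm:clifford-fourth-term-bound}. Bounding $(N_M-1)/N_M^2\le 1/N_M$ and $((N_M-1)/N_M)^2\le 1$ gives
\begin{align}
\sup_{\rho,\sigma}\Var[X_M]\le \frac{3^n}{N_M^2}+\frac{2(7/4)^n}{N_M}+\left(\frac32\right)^n .
\end{align}

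\emph{Step 2: insert into the Chebyshev condition.} By Eq.~\eqref{eq:sample complexity-from-variance}, it suffices to take
\begin{align}
N\ge \frac{1}{\delta\ve^2}\left(\frac{3^n}{N_M}+2\left(\frac74\right)^n+N_M\left(\frac32\right)^n\right).
\end{align}

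\emph{Step 3: optimize over $N_M$.} The first and third terms trade off against each other. By AM--GM their sum is minimized at $N_M=\sqrt{3^n/(3/2)^n}=2^{n/2}$. At that point each of the two terms equals $\sqrt{3^n(3/2)^n}=\sqrt{4.5^n}$. The middle term satisfies $(7/4)^n=\sqrt{3.0625^n}\le\sqrt{4.5^n}$, so it is subleading. Since $N_M$ must be an integer, take $N_M=\lceil 2^{n/2}\rceil=\Theta(2^{n/2})$; this changes the first and third terms only by constant factors. We also need $N_U=\lceil N/N_M\rceil\ge 1$, which holds because the resulting $N$ is at least $N_M$ up to constants. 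Together these give Eq.~\eqref{eq:cliff-sample-main}.

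The same computation shows that any $N_M$ off by more than a constant factor from $2^{n/2}$ makes either $3^n/N_M$ or $N_M(3/2)^n$ exceed $\sqrt{4.5^n}$ by a growing factor. This justifies the claim that $N_M=\Theta(2^{n/2})$ minimizes the bound.

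The only substantive ingredient is the bound on $\mathbb V^{(4)}$. That is already Thm.~\ref{thm:clifford-fourth-term-bound}, together with the bounds in Eq.~\eqref{eq:v2-and-v3-upper-bounds}, which I take as given. The remaining obstacle is bookkeeping: the integer rounding of $N_M$ and $N_U$ has to cost only constant factors.
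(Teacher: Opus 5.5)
Your proposal is correct and follows essentially the paper's own proof: drop $\mathbb{V}^{(1)}_{\Cl}\le 0$, insert the bounds $3^n/N_M^2$, $2\frac{N_M-1}{N_M^2}(7/4)^n$ and $\left(\frac{N_M-1}{N_M}\right)^2(3/2)^n$ into the Chebyshev condition, simplify with $(N_M-1)/N_M\le 1$ and $(N_M-1)^2/N_M\le N_M$, and balance $3^n/N_M$ against $N_M(3/2)^n$ at $N_M=2^{n/2}$ (the paper also takes $N_M=\lceil 2^{n/2}\rceil$). Your remarks on the integer rounding of $N_U$ and on why $N_M$ must stay within a constant factor of $2^{n/2}$ are harmless additions that the paper leaves implicit.
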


Cor.~\ref{lem:clifford-complexity} resolves the open problem raised in~\cite[Theorem 13]{zheng2025distributed},
which established an $O(\sqrt{4.5^n})$ behavior for product stabilizer states under local Clifford sampling.
Our result closes this gap by proving a worst-case guarantee with the same scaling for arbitrary $n$-qubit states.

\subsection{DIPE with single-qubit Haar ensemble}

The single-qubit Haar ensemble is an exact unitary $t$-design for every positive integer $t$~\cite{webb2016clifford}. 
Accordingly, the first three terms in the variance decomposition coincide with those of the single-qubit Clifford ensemble in Eq.~\eqref{eq:shared-vterms}:
\begin{align}
\mathbb{V}^{(r)}_{\mathrm{H}}=\mathbb{V}^{(r)}_{\Cl},\qquad r=1,2,3.
\end{align}
The only nontrivial change occurs in the fourth term, where the full $4$-design property becomes relevant. All proofs in this subsection 
are collected in Appx.~\ref{app:haar}.

The fourth term for the Haar ensemble is
\begin{align}
\mathbb{V}^{(4)}_{\mathrm{H}}
=
\left(\frac{N_M-1}{N_M}\right)^2
\tr\left[R_{4,\mathrm{H}}^{\otimes n}(\rho\ox\sigma\ox\rho\ox\sigma)\right],
\label{eq:haar-vterms-v4}
\end{align}
where $R_{4,\mathrm{H}}$ is the single-qubit Haar fourth moment operator whose explicit form is derived in Appx.~\ref{app:haar}.
We show that the single-qubit Haar fourth moment 
is the Haar average of conjugates of the single-qubit Clifford fourth moment.
Taking tensor products yields the desired $n$-qubit identity.

\begin{theorem}
\label{thm:haar-local-twirl-of-clifford}
For $n$ qubits,
\begin{align}\label{eq:haar-local-twirl-of-clifford-identity}
R_{4,\mathrm{H}}^{\otimes n}
=
\bE_{\bm W\sim\nu^{\otimes n}}\!\left[
(\bm W^{\otimes 4})^\dagger R_{4,\Cl}^{\otimes n}\bm W^{\otimes 4}
\right],
\end{align}
where $\nu$ denotes the single-qubit Haar measure and $\bm W:=W_1\ox\cdots\ox W_n$ with $W_1,\dots,W_n$ sampled independently from $\nu$.
Consequently, for arbitrary $n$-qubit states $\rho$ and $\sigma$,
\begin{align}\label{eq:haar-local-twirl-of-clifford-induced-bound}
\max_{\rho,\sigma\in\density{(\bC^2)^{\otimes n}}}
\tr\left[R_{4,\mathrm{H}}^{\otimes n}(\rho\ox\sigma\ox\rho\ox\sigma)\right]
\le
\left(\frac32\right)^n.
\end{align}
\end{theorem}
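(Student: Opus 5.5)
The identity reduces to a single-qubit statement, and the bound then follows by averaging the Clifford bound over conjugated states. Both $R_{4,\mathrm{H}}^{\otimes n}$ and $R_{4,\Cl}^{\otimes n}$ are tensor products over qubits, and $\bm W=W_1\ox\cdots\ox W_n$ has independent factors. After regrouping the four copies qubit by qubit, $\bm W^{\otimes 4}$ acts as $\bigotimes_\ell W_\ell^{\otimes 4}$ on the qubit-$\ell$ blocks, so the expectation factorizes. It therefore suffices to prove the one-qubit identity
\begin{align*}
R_{4,\mathrm{H}}=\bE_{W\sim\nu}\!\left[(W^{\otimes 4})^\dagger R_{4,\Cl}W^{\otimes 4}\right].
\end{align*}

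For the single-qubit identity I would start from the defining moment formula. With the kernel of Theorem~\ref{thm:unique-kernel}, the one-qubit fourth moment operator is
\begin{align*}
R_{4,\mathcal E}=\bE_{U\sim\mathcal E}\big[(U^{\dagger})^{\otimes 4}\,\Omega^{(1)}\ox\Omega^{(1)}\,U^{\otimes 4}\big],
\end{align*}
where $\Omega^{(1)}=\sum_{s,t}(3\delta_{st}-1)\ketbra{s}\ox\ketbra{t}$ is the per-qubit kernel operator, placed on $A_1B_1$ and on $A_2B_2$. The Haar measure is invariant under right multiplication by a fixed Clifford $C$, so $U=CW$ with $C$ uniform Clifford and $W\sim\nu$ independent is again Haar distributed. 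Substituting $U=CW$ and first averaging over $C$ with $W$ fixed yields $(W^{\dagger})^{\otimes 4}R_{4,\Cl}W^{\otimes 4}$. Averaging over $W$ then gives $R_{4,\mathrm{H}}$, which establishes Eq.~\eqref{eq:haar-local-twirl-of-clifford-identity}. One can also check this directly from Eq.~\eqref{eq:cliff-r4-main}: the terms $\bI$ and $\bF$ are unitarily invariant, and twirling $\Omega_4^{(1)}$ yields the Haar average of $(\hat n\cdot\vec\sigma)^{\otimes 4}$. The ordering of unitaries in the paper's convention, left versus right multiplication, has to be tracked carefully, but the argument is symmetric under either choice.

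For the bound, I would use Eq.~\eqref{eq:haar-local-twirl-of-clifford-identity} and cyclicity of the trace to write
\begin{align*}
\tr\left[R_{4,\mathrm{H}}^{\otimes n}(\rho\ox\sigma\ox\rho\ox\sigma)\right]
=\bE_{\bm W}\tr\left[R_{4,\Cl}^{\otimes n}(\rho_{\bm W}\ox\sigma_{\bm W}\ox\rho_{\bm W}\ox\sigma_{\bm W})\right],
\end{align*}
where $\rho_{\bm W}=\bm W\rho\bm W^\dagger$ and $\sigma_{\bm W}=\bm W\sigma\bm W^\dagger$. The same $\bm W$ conjugates all four copies, so the integrand keeps the required $\rho\ox\sigma\ox\rho\ox\sigma$ structure for the conjugated pair of states. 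Theorem~\ref{thm:clifford-fourth-term-bound} bounds each integrand by $(3/2)^n$, hence so does the average, and taking the maximum over $\rho,\sigma$ gives Eq.~\eqref{eq:haar-local-twirl-of-clifford-induced-bound}.

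The only delicate step is confirming that the Haar fourth moment is defined with the same kernel and tensor-slot ordering as the Clifford one, so that the invariance substitution applies verbatim. Once that bookkeeping is in place, everything else follows from Theorem~\ref{thm:clifford-fourth-term-bound}.
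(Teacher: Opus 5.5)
Your proposal is correct and follows essentially the same route as the paper: reduce to one qubit using the independence of the $W_\ell$, absorb the Clifford average into the Haar average via invariance of $\nu$, and then bound each locally rotated Clifford value pointwise by Theorem~\ref{thm:clifford-fourth-term-bound}. The paper phrases the middle step through $R_{4,\Cl}=\frac13\sum_{P\in\{X,Y,Z\}}\mathcal{K}(P)$ and the fact that $W^\dagger PW$ is uniformly distributed on the Bloch sphere, whereas your substitution $U=CW$ works directly on the moment definition and would hold with any ensemble in place of the Clifford group (note that $CW$ is a left translate of $W$, not a right one, which is harmless since the Haar measure is bi-invariant).
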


In view of Eq.~\eqref{eq:haar-vterms-v4}, Thm.~\ref{thm:haar-local-twirl-of-clifford} implies that the Haar fourth term obeys the same worst-case bound as the Clifford fourth term. Since the first three variance terms also coincide, the same Chebyshev-based sufficient-copy upper bound given in Cor.~\ref{lem:clifford-complexity} follows as well.

\begin{corollary}
\label{lem:haar-complexity}
Let $\wh X_f$ be the estimator in Eq.~\eqref{eq:estimator} built from the unique kernel in Thm.~\ref{thm:unique-kernel}. Under single-qubit Haar sampling, the same worst-case additive-error guarantee as in Cor.~\ref{lem:clifford-complexity} holds. In particular,
\begin{align}
\label{eq:haar-sample-main}
N_{\star,\mathrm H} \leq \cO\!\left(\frac{\sqrt{4.5^n}}{\delta\ve^2}\right).
\end{align}
This bound is minimized by choosing $N_M=\Theta(2^{n/2})$.
\end{corollary}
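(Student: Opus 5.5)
The plan is to reduce Cor.~\ref{lem:haar-complexity} to the Clifford analysis term by term and then reuse the optimization of Cor.~\ref{lem:clifford-complexity}. Since the single-qubit Haar ensemble is an exact $3$-design, we have $\mathbb V^{(r)}_{\mathrm H}=\mathbb V^{(r)}_{\Cl}$ for $r=1,2,3$. The state-independent bounds in Eq.~\eqref{eq:v2-and-v3-upper-bounds} therefore apply verbatim:
\begin{align}
\mathbb V^{(2)}_{\mathrm H}\le 3^n/N_M^2,\qquad \mathbb V^{(3)}_{\mathrm H}\le 2(N_M-1)N_M^{-2}(7/4)^n,
\end{align}
and $\mathbb V^{(1)}_{\mathrm H}\le 0$. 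For the fourth term, I would combine Eq.~\eqref{eq:haar-vterms-v4} with the bound Eq.~\eqref{eq:haar-local-twirl-of-clifford-induced-bound} of Thm.~\ref{thm:haar-local-twirl-of-clifford}. This gives $\mathbb V^{(4)}_{\mathrm H}\le (3/2)^n$, uniformly in $\rho,\sigma$.

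Summing the four terms gives the worst-case bound
\begin{align}
\Var[X_M]\le \frac{3^n}{N_M^2}+\frac{2(7/4)^n}{N_M}+\left(\frac32\right)^n .
\end{align}
Unbiasedness of the kernel (Thm.~\ref{thm:unique-kernel}) lets me apply Chebyshev, and Eq.~\eqref{eq:sample complexity-from-variance} then gives the sufficient condition
\begin{align}
N\ge \frac{N_M\Var[X_M]}{\delta\ve^2},\qquad N_M\Var[X_M]\le \frac{3^n}{N_M}+2\left(\frac74\right)^n+N_M\left(\frac32\right)^n .
\end{align}

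Next I balance the first and last terms. Choosing $N_M=\Theta(\sqrt{3^n/(3/2)^n})=\Theta(2^{n/2})$ makes both of order $\sqrt{3^n(3/2)^n}=\sqrt{4.5^n}$. The middle term is $(7/4)^n=\sqrt{3.0625^n}\le\sqrt{4.5^n}$, so it is subleading. This proves Eq.~\eqref{eq:haar-sample-main}. One detail is that $N_U=\lceil N/N_M\rceil\ge 1$, which costs only rounding.

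For the minimality claim, I would argue that for any $N_M$ the bound is at least $\max\{3^n/N_M,\,N_M(3/2)^n\}$. By AM--GM this maximum is at least $\sqrt{4.5^n}$, with equality up to constants only when $N_M=\Theta(2^{n/2})$. There is no substantive obstacle here, since everything rests on Thm.~\ref{thm:haar-local-twirl-of-clifford}. The only care point is that this argument shows the choice of $N_M$ minimizes the \emph{upper bound}; it does not minimize the true Haar variance, which by Conjecture~\ref{conj:Haar-complexity} is expected to be smaller.
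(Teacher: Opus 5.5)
Your proposal is correct and follows the paper's own proof: the first three variance terms are inherited from the shared $3$-design calculation, the fourth term is bounded by $(3/2)^n$ via Eq.~\eqref{eq:haar-local-twirl-of-clifford-induced-bound} of Thm.~\ref{thm:haar-local-twirl-of-clifford}, and the Chebyshev bound with balancing at $N_M=\Theta(2^{n/2})$ is carried over unchanged from Cor.~\ref{lem:clifford-complexity}. Your AM--GM lower bound on $\max\{3^n/N_M,\,N_M(3/2)^n\}$ is a slightly cleaner justification of the optimal scale than the paper's derivative argument. Your caveat that this choice minimizes only the upper bound, not the true Haar variance, is also accurate.
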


However,
Eq.~\eqref{eq:haar-local-twirl-of-clifford-induced-bound} is not saturated in general, 
and thus the bound in Cor.~\ref{lem:haar-complexity} should be viewed as a comparison sufficient-copy bound rather than a sharp sample complexity characterization. 
Indeed, Thm.~\ref{thm:haar-local-twirl-of-clifford} expresses every Haar value as a $\nu^{\otimes n}$ average of the corresponding Clifford value over the full local unitary orbit. Since each integrand is bounded by $(3/2)^n$ from Thm.~\ref{thm:clifford-fourth-term-bound}, equality in Eq.~\eqref{eq:haar-local-twirl-of-clifford-induced-bound} would force the Clifford value to equal $(3/2)^n$ for $\nu^{\otimes n}$-almost every local rotation. The twirling step is therefore strictly lossy: it averages the Clifford fourth moment over a continuum of locally rotated states instead of evaluating it on a fixed Clifford extremizer. Notably, this loss is already visible from the exact Haar
value $6/5<3/2$ in the single-qubit case.

\vspace{0.1in}
\textbf{Conjectured tighter bound.}
It is therefore natural to look for sharper benchmarks. 
A sharper benchmark may be obtained by
evaluating this optimization on identical pure product states of the form
$\ket{\psi}=\bigotimes_{\ell=1}^n\ket{\psi_\ell}$, with each
$\ket{\psi_\ell}$ a single-qubit pure state:
\begin{align}
\max_{\psi=\bigotimes_{\ell=1}^n\psi_\ell}
\tr\left[R_{4,\mathrm{H}}^{\otimes n}\psi^{\ox 4}\right]
= \left(\frac65\right)^n.
\label{eq:haar-product-fourth-main}
\end{align}
This product state benchmark, together with the Clifford result that the
fourth-moment upper-bound scale is saturated by pure product stabilizer states,
motivates the following conjecture on the tighter 
worst-case Chebyshev-based sufficient-copy upper bound for the Haar case.

\begin{conjecture}
\label{conj:Haar-complexity}
Let $\wh X_f$ be the estimator in Eq.~\eqref{eq:estimator} built from the unique kernel in Thm.~\ref{thm:unique-kernel}. 
Under single-qubit Haar sampling, 
for arbitrary $n$-qubit states $\rho$ and $\sigma$, the guarantee
\begin{align}
\bP(|\wh X_f-\tr\left[\rho\sigma\right]|\ge \ve)\le \delta
\end{align}
holds with worst-case sample complexity per party satisfying
\begin{align}
\label{eq:haar-conjectured-sample-main}
N_{\star,\mathrm H} \leq  \cO\!\left(\frac{\sqrt{3.6^n}}{\delta\ve^2}\right).
\end{align}
This bound is minimized by choosing $N_M=\Theta((5/2)^{n/2})$.
\end{conjecture}

Appx.~\ref{app:haar} proves the conjecture on several supporting classes:
arbitrary product state pairs, the complete case $n=1$, the identical pure
sector for $n=2$, pure biseparable states for $n=3$, and the GHZ, $W$, and
Bell-dimer families. The general case remains open. For fixed $N_M$, the main
obstruction is the tradeoff between $\mathbb{V}^{(2)}_{\mathrm H}$ and
$\mathbb{V}^{(4)}_{\mathrm H}$. Entanglement can enhance the fourth variance
term, but it can also suppress the second variance term through reduced
subsystem purities. For example, a two-qubit Bell state has a larger Haar
fourth-moment contribution than a two-qubit product state, and tensor products
of Bell pairs amplify this fourth-moment enhancement. However, the accompanying
decrease in $\mathbb{V}^{(2)}_{\mathrm H}$ prevents this enhancement from
directly improving the optimized sample complexity bound. This tradeoff is why
the fourth-moment contribution alone is not the right optimization target.
The detailed proofs, examples, and arguments are collected in
Appx.~\ref{app:haar}.

\section{DIPE with independent single-qubit classical shadow}
\label{sec:independent-pauli-shadow-main}

For comparison with the shared-randomness LRM protocols analyzed above, 
we also consider an independent local Pauli shadow estimator,
\begin{align}
\wh g_{\mathrm P}
:=
\tr\left[
\left(\frac1N\sum_{i=1}^N \hat{\rho}_i\right)
\left(\frac1N\sum_{j=1}^N \hat{\sigma}_j\right)
\right],
\end{align}
where $\hat{\rho}_i$ and $\hat{\sigma}_j$ are independent Pauli shadow
snapshots of $\rho$ and $\sigma$~\cite{huang2020shadows}, respectively. 
This estimator was experimentally investigated in~\cite{Zhu2022}, but without a variance-based copy analysis. 
We derive its Chebyshev-based worst-case sufficient copy bound below; the proof is deferred to Appx.~\ref{app:independent-pauli-shadow}.

\begin{proposition}
\label{prop:independent-pauli-shadow-complexity-main}
For arbitrary $n$-qubit states $\rho$ and $\sigma$, the estimator
$\wh g_{\mathrm P}$ is unbiased for $\tr\left[\rho\sigma\right]$. Moreover, 
\begin{align}
\bP\left(|\wh g_{\mathrm P}-\tr\left[\rho\sigma\right]|\ge \ve\right)\le \delta
\end{align}
holds for any $\ve,\delta\in(0,1)$ provided that the sample complexity satisfies
\begin{align}
\label{eq:independent-pauli-shadow-sample-main}
N_{\star,\mathrm P}
\geq
C\max\left\{
\frac{\sqrt{7.5^n}}{\sqrt{\delta}\,\ve},
\frac{2^n}{\delta\ve^2}
\right\}
\end{align}
for a universal constant $C>0$. For fixed $\ve$ and $\delta$, the leading
large $n$ scaling $\cO(\sqrt{7.5^n})$ is tight and is attained by identical pure
product states.
\end{proposition}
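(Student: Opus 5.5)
We will analyze the independent Pauli shadow estimator directly through its U-statistic structure. Write $\hat\rho:=\frac1N\sum_i\hat\rho_i$ and $\hat\sigma:=\frac1N\sum_j\hat\sigma_j$. Each snapshot has the product form $\hat\rho_i=\bigotimes_{\ell}(3U_\ell^\dagger\ketbra{s_\ell}U_\ell-\bI)$, and $\bE[\hat\rho_i]=\rho$. Since the two parties use independent randomness, $\bE[\hat\rho]=\rho$ and $\bE[\hat\sigma]=\sigma$ are independent. Hence $\bE[\wh g_{\mathrm P}]=\tr[\rho\sigma]$, which gives unbiasedness immediately.

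For the variance we use the decomposition for a product of independent averages:
\begin{align}
\Var[\tr(\hat\rho\hat\sigma)]=\bE\,\tr[(\hat\rho-\rho)\sigma]^2+\bE\,\tr[\rho(\hat\sigma-\sigma)]^2+\bE\,\tr[(\hat\rho-\rho)(\hat\sigma-\sigma)]^2 .
\end{align}
The cross terms vanish by independence and zero mean. The first two terms equal $\frac1N\Var[\tr(\hat\rho_1\sigma)]$ and its mirror. We bound these by the shadow norm of Pauli snapshots, $\bE\,\tr[\hat\rho_1 O]^2\le\|O\|_{\rm sh}^2$. Expanding $\sigma$ in the Pauli basis and using the single-qubit second moment $\bE[(3U^\dagger\ketbra{s}U-\bI)^{\otimes2}]$, we get
\begin{align}
\bE\,\tr[\hat\rho_1\sigma]^2=\tr\big[(\rho\ox\sigma\ox\sigma)\,T^{\otimes n}\big]
\end{align}
for an explicit single-qubit operator $T$ (a weighted sum over Paulis $P$ with weight $3^{|P|}$). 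Its worst case is bounded by $2^n$, since $\sum_P 3^{|P|}\tr[\sigma P]^2/4^n\le 2^n$ holds via the single-qubit purity constraint $\sum_{P\ne I}\tr[\sigma P]^2\le1$ applied locally. This yields the $2^n/(\delta\ve^2)$ branch.

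The cross term equals $\frac1{N^2}\bE\,\tr[(\hat\rho_1-\rho)(\hat\sigma_1-\sigma)]^2$, which is at most $\frac1{N^2}\bE\,\tr[\hat\rho_1\hat\sigma_1]^2$. Using independence, this factorizes qubitwise as
\begin{align}
\tr\big[(\rho\ox\sigma)\,M^{\otimes n}\big],\qquad M=\bE_{U,V}\big[(\text{1-qubit snapshot kernel})^2\big],
\end{align}
where on a single qubit $\tr[\hat\rho\hat\sigma]$ takes the value $5$ for equal basis and equal outcome, $-4$ for equal basis and opposite outcome, and $1/2$ for different bases. Computing $M$ explicitly gives the form $a\bI+b\bF$ plus Pauli-diagonal corrections. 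Maximizing its per-qubit expectation over product pure states yields $7.5$. Monotonicity in each tensor factor, obtained by writing $M$ with nonnegative Pauli-expansion coefficients and bounding coefficientwise with $|\tr[\rho P]|\le1$-type constraints, extends this to all states, giving $\le 7.5^n$. Chebyshev then requires $7.5^n/(N^2\delta)\lesssim\ve^2$, which is the $\sqrt{7.5^n}/(\sqrt\delta\,\ve)$ branch.

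For tightness, evaluate the cross term on identical pure product states $\psi^{\otimes 2}$, with $\psi$ aligned with a Pauli axis. This gives exactly $7.5^n$ minus lower-order terms, so the leading scaling is attained. The main obstacle is the rigorous maximization of $\tr[(\rho\ox\sigma)M^{\otimes n}]$ over entangled states. The plan is to show that $M$ is, up to positive scaling, dominated by a product-positive operator whose largest eigenvalue per qubit is $7.5$, i.e.\ to prove $\|M\|_\infty=7.5$ on $\bC^2\ox\bC^2$ so that $\|M^{\otimes n}\|_\infty=7.5^n$. If the operator norm of $M$ exceeds $7.5$ because of entangled eigenvectors across $A$ and $B$, we would instead exploit that $\rho\ox\sigma$ is a product across $A|B$ and bound via a Pauli-basis expansion with Cauchy–Schwarz.
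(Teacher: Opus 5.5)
Your skeleton matches the paper's proof. Your three-term split is algebraically the paper's exact formula $\Var[\wh g_{\mathrm P}]=\frac{A^{\mathrm P}_{\rho,\sigma}}{N^2}+\frac{N-1}{N^2}(B^{\mathrm P}_{\rho,\sigma}+B^{\mathrm P}_{\sigma,\rho})-\frac{2N-1}{N^2}\tr[\rho\sigma]^2$. Your double-centering inequality for the cross term is valid, and your $T$ and $M$ are the paper's $\omega_3^{\mathrm P}$ and $\omega_2^{\mathrm P}$. The gap is in the arguments you give for bounding them, most seriously the one-sided term.

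The quantity $4^{-n}\sum_P3^{|P|}\tr[\sigma P]^2$ is not an upper bound on $\bE\,\tr[\hat\rho_1\sigma]^2$. It is only the part of $\tr[(\rho\ox\sigma\ox\sigma)T^{\ox n}]$ in which $\rho$ meets the identity, i.e.\ its value at $\rho=\bI/2^n$. Write $T=\frac14\bI^{\ox3}+\frac34\sum_W\bI\ox W\ox W+\frac14\sum_W(W\ox W\ox\bI+W\ox\bI\ox W)$ with $W\in\{X,Y,Z\}$. The last two families produce $\rho$-dependent products $\tr[\rho P]\tr[\sigma Q]\tr[\sigma Q']$ of no definite sign. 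Already for $n=1$ the true value is $\frac14(1+2\bm r\cdot\bm s+3|\bm s|^2)$, which equals $3/2$ at $\rho=\sigma$ pure, while your sum never exceeds $1$. Your appeal to \emph{purity applied locally} also says nothing for entangled $\sigma$.

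This step is load-bearing. Falling back on the generic local-Pauli shadow-norm bound $4^n\|\sigma\|_\infty^2$ would put $4^n/(\delta\ve^2)$ in the second branch, which dominates $\sqrt{7.5^n}$ and would invalidate the claimed leading scaling. What is needed is the explicit one-qubit operator $T=-\bI+\frac32\bF_{BB'}+\frac12\bF_{AB}+\frac12\bF_{AB'}$, whose spectrum is $\{3/2,0,-2\}$. Hence $\|T\|_\infty=2$ and $B^{\mathrm P}_{\rho,\sigma}\le 2^n$ for all states. This $T$ is exactly the paper's $R_{ABB'}$, so Prop.~\ref{prop:third-term-bound} even gives $(7/4)^n$.

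For the cross term, your primary argument fails quantitatively. $M=7\,\bI\ox\bI+\frac12(X\ox X+Y\ox Y+Z\ox Z)$ does have nonnegative Pauli coefficients, but bounding them with $|\tr[\rho P]|,|\tr[\sigma P]|\le 1$ yields $(7+\frac32)^n=8.5^n$, not $7.5^n$. Your fallback is exactly the paper's route and closes immediately. $M$ commutes with $C\ox C$ for every single-qubit Clifford $C$, since conjugation permutes the six Pauli eigenstates and preserves $\tr[s_\psi s_\varphi]$. This group is a unitary $2$-design, so there are no \emph{Pauli-diagonal corrections}: $M=\frac{13}{2}\bI+\bF$ exactly. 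Its eigenvalues are $15/2$ on the symmetric subspace and $11/2$ on the singlet, so $\tr[(\rho\ox\sigma)M^{\ox n}]\le(15/2)^n$ for all states, entangled or not; your worry about entangled eigenvectors is moot.

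With these two operator-norm bounds in place, your proof coincides with the paper's. Your tightness evaluation is fine, and every identical pure product state works, not only Pauli-axis ones, since $\tr[(\psi\ox\psi)M]=15/2$ for any pure $\psi$.
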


The Pauli shadow is constructed from the single-qubit Clifford ensemble~\cite{huang2020shadows}.
Replacing it by a local classical shadow constructed from the single-qubit Haar
ensemble does not improve the scaling, because the proof of
Prop.~\ref{prop:independent-pauli-shadow-complexity-main} uses only the
single-qubit projective $3$-design identities. Consequently, independent
single-qubit classical shadows have worse sample complexity scaling than DIPE with
shared single-qubit Clifford or Haar randomized measurements. This comparison
highlights shared randomness as a dominant resource for distributed estimation
tasks.

\begin{figure*}[t]
\centering
\subfloat[Single-qubit Clifford ensemble.\label{fig:num-clifford}]{%
\includegraphics[width=0.48\textwidth]{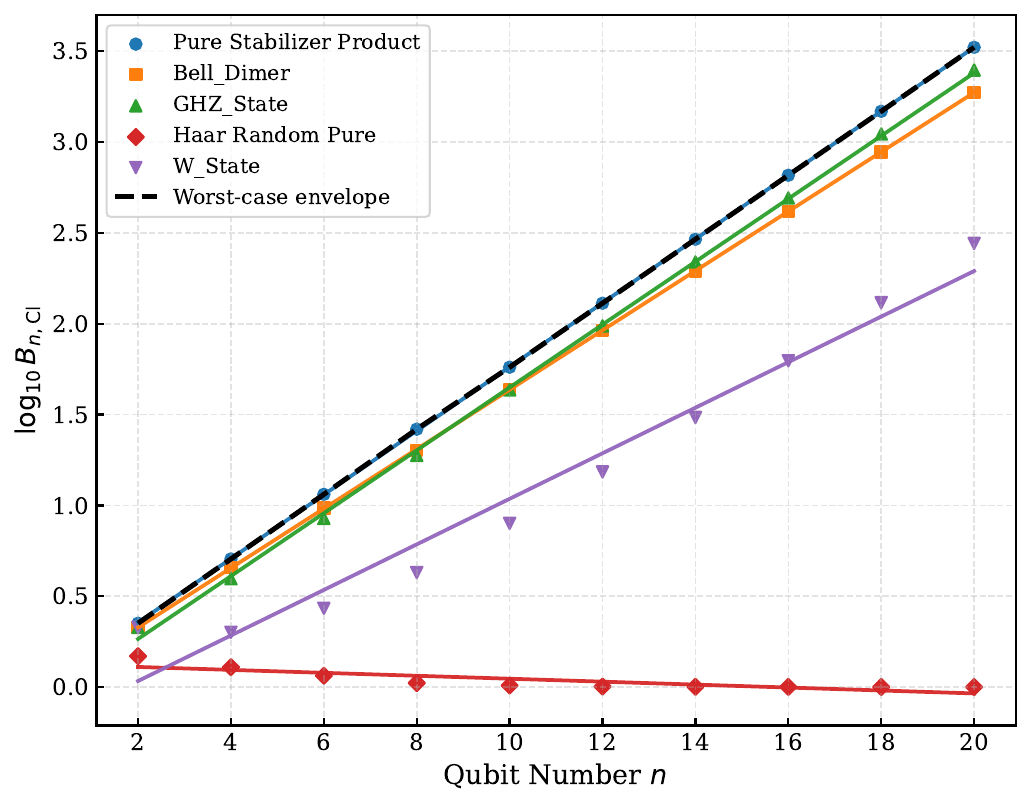}}
\hfill
\subfloat[Single-qubit Haar ensemble.\label{fig:num-haar}]{%
\includegraphics[width=0.48\textwidth]{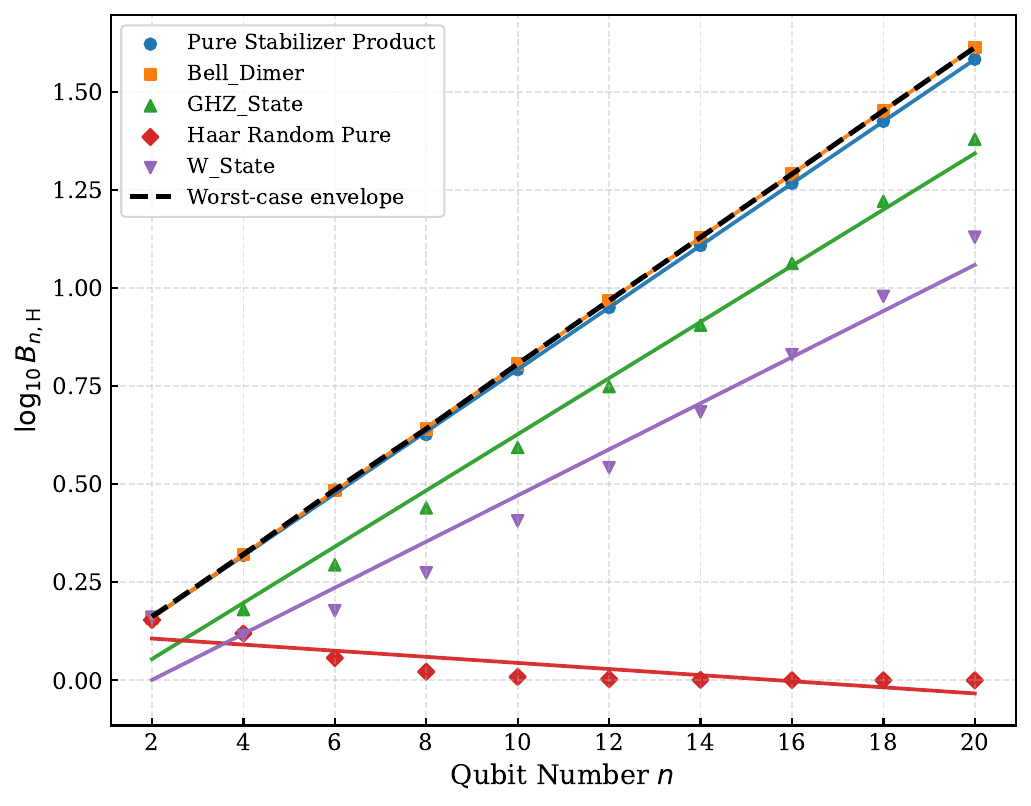}}
\caption{\raggedright\textbf{Fourth-moment quantities for DIPE with LRM.}
Both panels use the same benchmark state families and plot
$\log_{10}B_{n,\mathrm E}$ for (a) single-qubit Clifford and
(b) single-qubit Haar sampling.}
\label{fig:numerical-leading-scales}
\end{figure*}

\section{Numerical Simulations}\label{sec:numerical-simulation}

Our numerical study compares the fourth-moment quantities for local
Clifford and local Haar randomized measurements on the same benchmark families.
We consider the product state $\ket{+}^{\otimes n}$, Haar-random pure states,
GHZ states, Bell-dimer states, and $W$ states. These examples probe the rigid
stabilizer-controlled behavior of local Clifford sampling and the competition,
specific to local Haar sampling, between fourth-moment enhancement and
subsystem-purity loss. More detailed decompositions of the individual variance
coefficients, together with continuous deformations in purity and entanglement
structure, are deferred to Appx.~\ref{app:numerics}.

For a benchmark pair $(\rho,\sigma)$ and unitary ensemble $\mathrm E\in\{\Cl,\mathrm H\}$,
we plot
\begin{align}
B_{n,\mathrm E}(\rho,\sigma)
&:=
\tr\left[R_{4,\mathrm E}^{\otimes n}
(\rho\ox\sigma\ox\rho\ox\sigma)\right],
\end{align}
which determines the fourth variance term through
\begin{align}
\mathbb{V}^{(4)}_{\mathrm E}
=
\left(\frac{N_M-1}{N_M}\right)^2
B_{n,\mathrm E}(\rho,\sigma).
\end{align}
Thus Fig.~\ref{fig:numerical-leading-scales} isolates the
$N_M$-independent ensemble-dependent fourth-moment coefficient.

Fig.~\ref{fig:numerical-leading-scales}(a) illustrates the rigidity of the
local Clifford ensemble: the product stabilizer state lies uniformly above the
other benchmark families, matching the analytical result that product
stabilizers saturate the relevant Clifford fourth-moment bound. In contrast,
Fig.~\ref{fig:numerical-leading-scales}(b) shows that the Haar fourth moment
responds differently to entanglement. Bell-dimer states can exceed the pure
product benchmark at the level of $B_{n,\mathrm H}$, illustrating that the Haar
fourth moment alone is not maximized by product states. The Haar-random pure
state curve is a typical state benchmark. As the Hilbert-space dimension grows
exponentially with $n$, random sampling is unlikely to hit the
structured extremal regions; typical $n$-qubit Haar-random states are highly
entangled and have strongly suppressed local correlations. This behavior is
consistent with the discussion in Appx.~\ref{app:haar}, where entanglement can
enhance $B_{n,\mathrm H}$ though it may simultaneously reduce
second-moment contributions.

\section{Conclusions}\label{sec:conclusion}

We established worst-case sample complexity bounds for distributed inner product
estimation with local randomized measurements and identified the kernel selected
by the worst-case framework. The kernel optimization can be reduced to Hamming
distance kernels, and unbiasedness in this class fixes the unique kernel. 
The Clifford and Haar analyses share the same first three variance terms
because both single-qubit ensembles are unitary $3$-designs, but their
fourth-moment structures are different and must be treated separately. In the
single-qubit Clifford case, the Clifford commutant gives a sharp fourth-moment
bound and hence worst-case scaling $\cO(\sqrt{4.5^n})$, with extremizers given
by identical pure product stabilizer states. In the single-qubit Haar case, a
local twirling relation gives the same rigorous upper bound as the Clifford
case, but this comparison bound is not attainable. Identical pure product
states instead yield the sharper conjectured sample complexity scaling
$\cO(\sqrt{3.6^n})$. We proved the conjecture on several supporting classes:
arbitrary product state pairs, the complete one-qubit case, identical pure
two-qubit pairs, pure biseparable states for $n=3$, and the GHZ, $W$, and
Bell-dimer families.
For comparison, we showed that the independent local Pauli-shadow protocol has worse
worst-case scaling $\cO(\sqrt{7.5^n})$, already attained by identical pure
product states. This separation highlights shared randomness as a dominant
resource for DIPE with LRM.
The main results in this work are summarized in Tab.~\ref{tab:summarization} 
for reference and comparison.

Two questions remain open. The most important one is to prove or disprove
Conj.~\ref{conj:Haar-complexity} beyond the cases established here. 
The main difficulty is the tradeoff exposed in this work: 
entanglement can increase the Haar fourth-moment contribution 
while decreasing the universal second-moment contribution. 
The second question is to derive matching lower bounds for DIPE with LRM, 
which would determine whether the present local protocols are worst-case optimal.

\section*{Acknowledgements}

This work was supported by 
the National Key R\&D Program of China (Grant No. 2022YFF0712800),
the Quantum Science and Technology-National Science and Technology Major Project (Grant No. 2025ZD0300300), and 
the National Natural Science Foundation of China (Grant No. 12504584).
\bibliographystyle{apsrev4-2}
\bibliography{references}

\makeatletter
\newcommand{\appendixtitle}[1]{\gdef\@title{#1}}
\makeatother

\makeatletter%
\newcommand{\appendixmaketitle}{%
\begin{center}%
\vspace{0.4in}%
{\large \@title \par}%
\end{center}%
\par%
}%
\makeatother%

\makeatletter
\newcommand{\appendixtableofcontents}{%
\begingroup
\setcounter{tocdepth}{4}%
\@starttoc{atoc}%
\endgroup}
\makeatother

\newcommand{\appendixtocdivider}{%
\par\medskip
\noindent\hbox to \linewidth{%
\leaders\hbox{\rule[0.6ex]{1pt}{0.4pt}}\hfill
\hspace{0.8em}\textsc{Table of Contents}\hspace{0.8em}%
\leaders\hbox{\rule[0.6ex]{1pt}{0.4pt}}\hfill}%
\par}

\makeatletter
\newcounter{subsubsubsection}[subsubsection]
\renewcommand{\thesubsubsubsection}{\thesubsubsection.\arabic{subsubsubsection}}
\providecommand{\theHsubsubsubsection}{}
\renewcommand{\theHsubsubsubsection}{\theHsubsubsection.\arabic{subsubsubsection}}
\providecommand*{\toclevel@subsubsubsection}{4}
\newcommand{\subsubsubsection}[1]{%
\refstepcounter{subsubsubsection}%
\paragraph*{\thesubsubsubsection\space #1}}
\newcommand*\l@subsubsubsection[2]{\@dottedtocline{4}{4.4em}{3.0em}{#1}{#2}}
\makeatother

\newcommand{\appsection}[1]{%
\section{#1}%
\addcontentsline{atoc}{section}{\protect\numberline{\thesection}#1}}

\newcommand{\appsubsection}[1]{%
\subsection{#1}%
\addcontentsline{atoc}{subsection}{\protect\numberline{\thesubsection}#1}}

\newcommand{\appsubsubsection}[1]{%
\subsubsection{#1}%
\addcontentsline{atoc}{subsubsection}{\protect\numberline{\thesubsubsection}#1}}

\newcommand{\appsubsubsubsection}[1]{%
\subsubsubsection{#1}}

\setcounter{secnumdepth}{4}
\appendix
\widetext
\newpage

\appendixtitle{\bf 
Supplemental Material for\\``\thetitle''}
\appendixmaketitle
\vspace{0.1in}

The appendices provide supporting details for the main text (MT) and are organized as follows. Appx.~\ref{app:general} derives the general variance decomposition for DIPE with shared local randomized measurements. Appx.~\ref{app:hamming-sufficiency} proves the reduction to Hamming-distance kernels and the uniqueness of the unbiased kernel in that class. Appx.~\ref{app:clifford} analyzes single-qubit Clifford sampling, including the common first three variance terms, the Clifford fourth-moment bound, and the resulting sufficient-copy corollary. Appx.~\ref{app:haar} analyzes single-qubit Haar sampling, proves the local twirling comparison with the Clifford fourth moment, explains why the comparison bound is not saturated, and collects analytical evidence for Conj.~\ref{conj:Haar-complexity}. Appx.~\ref{app:independent-pauli-shadow} treats the independent single-qubit Pauli-shadow protocol. Finally, Appx.~\ref{app:numerics} records the numerical details and additional validation supporting the discussion in Sec.~\ref{sec:numerical-simulation} of MT.

\appendixtocdivider
{%
\appendixtableofcontents%
}%

\appsection{General framework of DIPE}
\label{app:general}

This appendix section complements Sec.~\ref{sec:general-framework} of MT.
We first spell out the local randomized measurement protocol used throughout 
the main text. It has the same block structure as the global shadow protocol in
Ref.~\cite{anshu2022distributed}, but each shared random unitary is restricted to be a
product of independently sampled single-qubit unitaries.

\begin{algorithm}[H]
\caption{DIPE with shared single-qubit randomized measurements}
\label{alg:lrm-dipe}
\begin{algorithmic}[1]
\REQUIRE $N_U$ unitary samples, $N_M$ repeated shots per unitary sample, a single-qubit ensemble $\cU_1$, 
and an unbiased kernel $f:\{0,1\}^n\times\{0,1\}^n\to\bR$.
\ENSURE An estimate of $\tr\left[\rho\sigma\right]$.
\FOR{$l=1,\ldots,N_U$}
\STATE Alice and Bob sample the shared product unitary
$U_l=\bigotimes_{\ell=1}^n U_{l,\ell}$, where $U_{l,\ell}\sim\cU_1$ independently.
\STATE Alice applies $U_l$ to $N_M$ fresh copies of $\rho$ and measures in the computational basis, obtaining $\bm s_{l,1},\ldots,\bm s_{l,N_M}$.
\STATE Bob applies $U_l$ to $N_M$ fresh copies of $\sigma$ and measures in the computational basis, obtaining $\bm t_{l,1},\ldots,\bm t_{l,N_M}$.
\ENDFOR
\STATE Return
\[
\wh X_f
:=
\frac{1}{N_U N_M^2}
\sum_{l=1}^{N_U}
\sum_{j,k=1}^{N_M}
f(\bm s_{l,j},\bm t_{l,k}).
\]
\end{algorithmic}
\end{algorithm}

Based on Algorithm~\ref{alg:lrm-dipe}, we derive the four-term variance
decomposition stated in Eq.~\eqref{eq:general-vterms}. For each unitary block
$l$, define
\begin{align}
X_M^{(l)}
:=
\frac{1}{N_M^2}\sum_{j,k=1}^{N_M}f(\bm s_{l,j},\bm t_{l,k}).
\end{align}
Then
\begin{align}
\wh X_f=\frac1{N_U}\sum_{l=1}^{N_U}X_M^{(l)}.
\end{align}
The random variables $X_M^{(1)},\ldots,X_M^{(N_U)}$ are i.i.d.; hence, if
$X_M$ denotes a generic block,
\begin{align}
\Var[\wh X_f]=\frac{1}{N_U}\Var[X_M].
\end{align}
It remains to compute the variance of this single-block estimator. For one
unitary sample, write
\begin{align}
X_M=\frac{1}{N_M^2}\sum_{j,k=1}^{N_M}f(\bm s_j,\bm t_k),
\end{align}
where, conditioned on the sampled unitary $U$, the strings $\bm s_1,\dots,\bm s_{N_M}$ are i.i.d. according to the measurement distribution of $\rho$, and $\bm t_1,\dots,\bm t_{N_M}$ are i.i.d. according to that of $\sigma$. Set
\begin{align}
\mu_U &:= \bE[f(\bm s,\bm t)\mid U], \\
\Xi_U(\rho,\sigma) &:=
\bE_{\bm s,\bm t\mid U}\!\left[
f(\bm s,\bm t)\,\bE_{\bm s'\mid U}f(\bm s',\bm t)
\right], \\
\Xi_U(\sigma,\rho) &:=
\bE_{\bm s,\bm t\mid U}\!\left[
f(\bm s,\bm t)\,\bE_{\bm t'\mid U}f(\bm s,\bm t')
\right].
\end{align}
When the kernel $f$ is unbiased, we have $\bE[X_M]=\tr\left[\rho\sigma\right]$, and hence
\begin{align}
\Var[X_M]
=
\bE_U\!\left[\bE[X_M^2\mid U]\right]-\tr\left[\rho\sigma\right]^2.
\end{align}
It therefore suffices to evaluate the conditional second moment. Expanding the square gives
\begin{align}
\bE[X_M^2\mid U]
=
\frac{1}{N_M^4}
\sum_{j,k,j',k'=1}^{N_M}
\bE\!\left[f(\bm s_j,\bm t_k)f(\bm s_{j'},\bm t_{k'})\mid U\right].
\end{align}
The quadruple of indices $(j,k,j',k')$ falls into four disjoint coincidence classes.

\vspace*{0.1in}
\textbf{1. Full coincidence: $(j,k)=(j',k')$.}
There are $N_M^2$ such terms, and each contributes
\begin{align}
\bE\!\left[f(\bm s_j,\bm t_k)^2\mid U\right]
=
\bE_{\bm s,\bm t\mid U}[f(\bm s,\bm t)^2].
\end{align}
Hence this class contributes
\begin{align}
\frac{N_M^2}{N_M^4}\bE_{\bm s,\bm t\mid U}[f(\bm s,\bm t)^2]
=
\frac{1}{N_M^2}\bE_{\bm s,\bm t\mid U}[f(\bm s,\bm t)^2].
\end{align}

\vspace*{0.1in}
\textbf{2. Coincidence in Bob's index only: $j\neq j'$ and $k=k'$.}
There are $N_M^2(N_M-1)$ such terms. Conditioned on $U$, the variables $\bm s_j$ and $\bm s_{j'}$ are independent, while the same outcome $\bm t_k$ is shared. Therefore each term equals
\begin{align}
\bE_{\bm s,\bm t,\bm s'\mid U}[f(\bm s,\bm t)f(\bm s',\bm t)]
=
\Xi_U(\rho,\sigma).
\end{align}
Thus this class contributes
\begin{align}
\frac{N_M^2(N_M-1)}{N_M^4}\Xi_U(\rho,\sigma)
=
\frac{N_M-1}{N_M^2}\Xi_U(\rho,\sigma).
\end{align}

\vspace*{0.1in}
\textbf{3. Coincidence in Alice's index only: $j=j'$ and $k\neq k'$.}
Again there are $N_M^2(N_M-1)$ such terms. Now the shared variable is $\bm s_j$, while $\bm t_k$ and $\bm t_{k'}$ are independent conditioned on $U$. Hence each term equals
\begin{align}
\bE_{\bm s,\bm t,\bm t'\mid U}[f(\bm s,\bm t)f(\bm s,\bm t')]
=
\Xi_U(\sigma,\rho),
\end{align}
and the total contribution is
\begin{align}
\frac{N_M^2(N_M-1)}{N_M^4}\Xi_U(\sigma,\rho)
=
\frac{N_M-1}{N_M^2}\Xi_U(\sigma,\rho).
\end{align}

\vspace*{0.1in}
\textbf{4. No coincidence: $j\neq j'$ and $k\neq k'$.}
There are $N_M^2(N_M-1)^2$ such terms. In this case all sampled outcomes are independent conditioned on $U$, so each term factorizes as
\begin{align}
\bE_{\bm s,\bm t,\bm s',\bm t'\mid U}[f(\bm s,\bm t)f(\bm s',\bm t')]
=
\mu_U^2.
\end{align}
Therefore this class contributes
\begin{align}
\frac{N_M^2(N_M-1)^2}{N_M^4}\mu_U^2
=
\left(\frac{N_M-1}{N_M}\right)^2\mu_U^2.
\end{align}

Collecting the four classes, we obtain
\begin{align}
\bE[X_M^2\mid U]
=
\frac{1}{N_M^2}\bE_{\bm s,\bm t\mid U}[f(\bm s,\bm t)^2]
+
\frac{N_M-1}{N_M^2}\bigl(\Xi_U(\rho,\sigma)+\Xi_U(\sigma,\rho)\bigr)
+
\left(\frac{N_M-1}{N_M}\right)^2\mu_U^2.
\end{align}
Substituting this into the formula for $\Var[X_M]$ yields
\begin{align}
\Var[X_M]
=
-\tr\left[\rho\sigma\right]^2
+
\frac{1}{N_M^2}\bE_U\bE_{\bm s,\bm t\mid U}[f(\bm s,\bm t)^2]
+
\frac{N_M-1}{N_M^2}\bE_U\!\left[\Xi_U(\rho,\sigma)+\Xi_U(\sigma,\rho)\right]
+
\left(\frac{N_M-1}{N_M}\right)^2\bE_U[\mu_U^2].
\end{align}
This is exactly Eq.~\eqref{eq:general-vterms} of MT, with the four summands identified as $\mathbb{V}^{(1)}$, $\mathbb{V}^{(2)}$, $\mathbb{V}^{(3)}$, and $\mathbb{V}^{(4)}$, respectively.

The terms $\mathbb{V}^{(2)}$, $\mathbb{V}^{(3)}$, and $\mathbb{V}^{(4)}$ correspond, respectively, to the coincidence patterns $(j,k)=(j',k')$, exactly one of the equalities $j=j'$ or $k=k'$ holding, and $j\neq j'$, $k\neq k'$.
This decomposition makes the moment requirements transparent: $\mathbb{V}^{(2)}$ depends only on the second moment of the unitary ensemble, $\mathbb{V}^{(3)}$ on the third moment, and $\mathbb{V}^{(4)}$ on the fourth moment.
Since the $N_U$ unitary blocks are independent, $\Var[\wh X_f]=\Var[X_M]/N_U$.

\appsection{Optimal Hamming distance kernels}\label{app:hamming-sufficiency}

This appendix section proves the results presented in Sec.~\ref{sec:optimal-kernel} of MT.

\appsubsection{Proof of Lem.~\ref{lem:hamming-sufficiency}}

Let $\Gamma=(S_2)^n\rtimes S_n$ act on pairs $(\bm s,\bm t)$ by simultaneous relabeling and permutation. For $\gamma\in\Gamma$, write
\begin{align}
f^\gamma(\bm s,\bm t):=f(\gamma(\bm s),\gamma(\bm t)).
\end{align}
Then $f^{\mathrm{sym}}=\bE_{\gamma\in\Gamma} f^\gamma$. Two pairs lie in the same orbit under $\Gamma$ if and only if they have the same Hamming distance. Hence the orbit average in Eq.~\eqref{eq:sym-kernel} depends only on $D(\bm s,\bm t)$.

The covariance property implies that for every $\gamma\in\Gamma$ there exists a unitary $W_\gamma$ such that
\begin{align}
\wh X_{f^\gamma}(\rho,\sigma)\stackrel{d}{=}\wh X_f(W_\gamma\rho W_\gamma^\dagger,W_\gamma\sigma W_\gamma^\dagger).
\end{align}
Assume now that $f$ is admissible. Taking expectations in the preceding distributional identity gives, for all states $\rho$ and $\sigma$,
\begin{align}
\tr\left[(\rho\ox\sigma)\bar\Omega_{f^\gamma}\right]
&=
\tr\left[((W_\gamma\rho W_\gamma^\dagger)\ox(W_\gamma\sigma W_\gamma^\dagger))\bar\Omega_f\right]  \\
&=
\tr\left[(\rho\ox\sigma)(W_\gamma^\dagger\ox W_\gamma^\dagger)
\bar\Omega_f(W_\gamma\ox W_\gamma)\right]  \\
&=
\tr\left[(\rho\ox\sigma)\bF_{\boldsymbol{AB}}\right],
\end{align}
where the last step uses $\bar\Omega_f=\bF_{\boldsymbol{AB}}$ and the invariance of the full swap under simultaneous conjugation by $W_\gamma\ox W_\gamma$. Since this holds for all $\rho$ and $\sigma$, $\bar\Omega_{f^\gamma}=\bF_{\boldsymbol{AB}}$, and by linearity $\bar\Omega_{f^{\mathrm{sym}}}=\bF_{\boldsymbol{AB}}$. Thus symmetrization maps admissible kernels to admissible Hamming-distance kernels.

Variance is convex under averaging of estimators, so for any fixed $(\rho,\sigma)$,
\begin{align}
\Var_{\rho,\sigma}[\wh X_{f^{\mathrm{sym}}}]
\le \bE_{\gamma\in\Gamma}\Var_{\rho,\sigma}[\wh X_{f^\gamma}].
\end{align}
Taking the supremum and using the bijectivity of $(\rho,\sigma)\mapsto (W_\gamma\rho W_\gamma^\dagger,W_\gamma\sigma W_\gamma^\dagger)$ gives
\begin{align}
\sup_{\rho,\sigma}\Var_{\rho,\sigma}[\wh X_{f^{\mathrm{sym}}}]
\le
\bE_{\gamma\in\Gamma}
\sup_{\rho,\sigma}\Var_{\rho,\sigma}[\wh X_{f^\gamma}]
=
\sup_{\rho,\sigma}\Var_{\rho,\sigma}[\wh X_f],
\end{align}
where the last step again uses the covariance identity above. Let
\begin{align}
V(f):=\sup_{\rho,\sigma}\Var_{\rho,\sigma}[\wh X_f].
\end{align}
For every admissible $f$, we have constructed an admissible Hamming-distance kernel $f^{\mathrm{sym}}$ with $V(f^{\mathrm{sym}})\le V(f)$, so
\begin{align}
\inf_{g=g(D):\bar\Omega_g=\bF_{\boldsymbol{AB}}}V(g)
\le
\inf_{f:\bar\Omega_f=\bF_{\boldsymbol{AB}}}V(f).
\end{align}
The reverse inequality is immediate because admissible Hamming-distance kernels are a subclass of all admissible kernels. This proves Eq.~\eqref{eq:minimax-reduction} and the lemma.

\appsubsection{Proof of Thm.~\ref{thm:unique-kernel}}
\label{app:unique-kernel}

For a Hamming distance kernel $f(\bm s,\bm t)=g(D(\bm s,\bm t))$, the averaged operator can be expanded as
\begin{align}
\bar\Omega_f=\sum_{k=0}^n\alpha_k(g)\sum_{|S|=k}\bF_S,
\end{align}
where
\begin{align}
\alpha_k(g)=\frac{1}{3^n}\sum_{d=0}^n g(d)K_d(k;n,3)
\end{align}
is the $q=3$ Krawtchouk transform. Here
\begin{align}
K_d(k;n,3)=\sum_{j=0}^d(-1)^j2^{d-j}\binom{k}{j}\binom{n-k}{d-j}.
\end{align}

Unbiasedness requires $\bar\Omega_f=\bF_{\boldsymbol{AB}}$, so all partial swap sectors must vanish:
\begin{align}
\alpha_k(g)=0,\qquad k=0,1,\dots,n-1,
\qquad
\alpha_n(g)=1.
\end{align}
More generally, the condition
\begin{align}
\bar\Omega_f\in\spn{\bI_{\boldsymbol{AB}},\bF_{\boldsymbol{AB}}}
\end{align}
is equivalent to
\begin{align}
\alpha_k(g)=0,\qquad k=1,\dots,n-1.
\end{align}
Since the Krawtchouk transform
\begin{align}
g\longmapsto \left(\alpha_0(g),\alpha_1(g),\dots,\alpha_n(g)\right)
\end{align}
is invertible on the $(n+1)$-dimensional space of functions $g:\{0,\dots,n\}\to\mathbb{R}$, the subspace cut out by the $n-1$ homogeneous constraints $\alpha_k(g)=0$ for $1\le k\le n-1$ has dimension exactly $2$.
Now $g_{\mathrm{I}}(d)=1$ and $g_{\mathrm{F}}(d)=(-1)^d2^{n-d}$ are two linearly independent solutions of these homogeneous constraints, so they form a basis of the solution space. Therefore every Hamming distance kernel satisfying $\bar\Omega_f\in\spn{\bI_{\boldsymbol{AB}},\bF_{\boldsymbol{AB}}}$ has the form
\begin{align}
g(d)=a+b(-1)^d2^{n-d}.
\end{align}
To impose unbiasedness, we require $\bar\Omega_f=\bF_{\boldsymbol{AB}}$, equivalently
\begin{align}
\alpha_0(g)=0,\qquad \alpha_n(g)=1.
\end{align}
For $g_{\mathrm{I}}(d)=1$, one has $\bar\Omega_f=\bI_{\boldsymbol{AB}}$, while for $g_{\mathrm{F}}(d)=(-1)^d2^{n-d}$ one has $\bar\Omega_f=\bF_{\boldsymbol{AB}}$. Hence $\alpha_0(g_{\mathrm{I}})=1$, $\alpha_n(g_{\mathrm{I}})=0$, $\alpha_0(g_{\mathrm{F}})=0$, and $\alpha_n(g_{\mathrm{F}})=1$. It follows that unbiasedness fixes $a=0$ and $b=1$, which proves Eq.~\eqref{eq:product-kernel-main}.

\appsection{DIPE with single-qubit Clifford ensemble}
\label{app:clifford}

Because the single-qubit Clifford group forms an exact unitary $3$-design \cite{webb2016clifford}, this section begins with the common analysis of $\mathbb{V}^{(1)}$, $\mathbb{V}^{(2)}$, and $\mathbb{V}^{(3)}$, which also applies to the single-qubit Haar ensemble. We then turn to the genuinely Clifford-specific contribution, namely the fourth term, prove the sharp fourth-term bound stated in Thm.~\ref{thm:clifford-fourth-term-bound}, and finally combine these ingredients to derive Cor.~\ref{lem:clifford-complexity}.

\appsubsection{Calculations of the first three terms}\label{app:clifford-first-three-terms}

We derive the common first three variance formulas stated in Eqs.~\eqref{eq:shared-vterms-v1}--\eqref{eq:shared-vterms-v3} of MT from the general formulas in Eq.~\eqref{eq:general-vterms} of MT, and then establish the corresponding sharp upper bounds. Since the single-qubit Clifford group is an exact unitary $3$-design, these identities apply equally to single-qubit Clifford and single-qubit Haar sampling. Here, ``sharp'' means that equality can hold in each upper bound.

For one qubit, let
\begin{align}
P_b(V):=V^\dagger\ketbra{b}V=\frac12\left(\bI+(-1)^bA(V)\right),
\qquad
A(V):=\bm n(V)\cdot\bsigma,
\end{align}
where $V$ is sampled from the single-qubit Clifford ensemble and $\bm n(V)$ is uniformly distributed over the six Pauli directions. Then $A(V)^2=\bI$, $\bE_V[A(V)]=0$, and
\begin{align}
\label{eq:clifford-second-moment-A}
\bE_V[A(V)\ox A(V)]=\frac{2\bF-\bI}{3}.
\end{align}
For the one qubit kernel $h(a,b)=3\delta_{a,b}-1$, define
\begin{align}\label{eq:O-1-V}
O_1(V)
:=
\sum_{a,b\in\{0,1\}}h(a,b)\,P_a(V)\ox P_b(V)
=
\frac12\bI+\frac32A(V)\ox A(V).
\end{align}

\vspace*{0.1in}
\textbf{The first term $\mathbb{V}^{(1)}_{\Cl}$.}
This term is given by 
$\mathbb{V}^{(1)}_{\Cl} = - \tr\left[\rho\sigma\right]^2$ and is ensemble independent. 
It holds that $\mathbb{V}^{(1)}_{\Cl}\le 0$
with equality attained whenever $\rho$ and $\sigma$ have orthogonal supports, for example for
\begin{align}
\rho=\ketbra{0}^{\otimes n},
\qquad
\sigma=\ketbra{1}^{\otimes n}.
\end{align}

\vspace*{0.1in}
\textbf{The second term $\mathbb{V}^{(2)}_{\Cl}$.}
We derive the operator form of $\mathbb{V}^{(2)}_{\Cl}$ from Eq.~\eqref{eq:general-vterms-v2}.
Let $U=\bigotimes_{\ell=1}^n U_\ell$ be a product Clifford sample. Since
\begin{align}
f(\bm s,\bm t)^2=\prod_{\ell=1}^n h(s_\ell,t_\ell)^2,
\end{align}
the general formula in Eq.~\eqref{eq:general-vterms-v2} becomes
\begin{align}
\mathbb{V}^{(2)}_{\Cl}
&=
\frac{1}{N_M^2}
\tr\left[
\bE_U
\sum_{\bm s,\bm t\in\{0,1\}^n}
f(\bm s,\bm t)^2
\left(U^\dagger\ketbra{\bm s}U\right)\ox\left(U^\dagger\ketbra{\bm t}U\right)
(\rho\ox\sigma)
\right]
\nonumber\\
&=
\frac{1}{N_M^2}
\tr\left[
\bE_U\!\left[\bigotimes_{\ell=1}^n O_1(U_\ell)^2\right]
(\rho\ox\sigma)
\right]
\nonumber\\
&=
\frac{1}{N_M^2}
\tr\left[
\left(\bE_V[O_1(V)^2]\right)^{\otimes n}
(\rho\ox\sigma)
\right].
\end{align}
Now
\begin{align}
O_1(V)^2
=
\left(\frac12\bI+\frac32A(V)\ox A(V)\right)^2
=
\frac52\bI+\frac32A(V)\ox A(V),
\end{align}
because $A(V)^2=\bI$. Averaging and using Eq.~\eqref{eq:clifford-second-moment-A} gives
\begin{align}
\bE_V[O_1(V)^2]
=
\frac52\bI+\frac32\cdot\frac{2\bF-\bI}{3}
=
2\bI+\bF.
\end{align}
Hence
\begin{align}
\mathbb{V}^{(2)}_{\Cl}
=
\frac{1}{N_M^2}
\tr\left[(2\bI+\bF)^{\otimes n}(\rho\ox\sigma)\right],
\end{align}
which is Eq.~\eqref{eq:shared-vterms-v2}. Since the eigenvalues of $2\bI+\bF$ are $3,3,3,1$, we have $\|2\bI+\bF\|_\infty=3$ and therefore
\begin{align}
\mathbb{V}^{(2)}_{\Cl}\le \frac{3^n}{N_M^2}.
\end{align}
This is the bound stated in Eq.~\eqref{eq:v2-and-v3-upper-bounds}, 
and is attained by any identical pure product state; for example,
$\rho=\sigma=\ketbra{0}^{\otimes n}$ gives equality because $(2\bI+\bF)\ket{00}=3\ket{00}$.

\vspace*{0.1in}
\textbf{The third term $\mathbb{V}^{(3)}_{\Cl}$.}
We derive the operator form of $\mathbb{V}^{(3)}_{\Cl}$ directly from
Eq.~\eqref{eq:general-vterms-v3}, namely
\begin{align}
\mathbb{V}^{(3)}_{\Cl}
=
\frac{N_M-1}{N_M^2}
\bE_U\!\left[\Xi_U(\rho,\sigma)+\Xi_U(\sigma,\rho)\right].
\end{align}
Define
\begin{align}
K_b(V)
:=
\sum_{a\in\{0,1\}} h(a,b)\,P_a(V)
=
\frac12\bI+\frac32(-1)^bA(V).
\end{align}
Since $h(a,b)=h(b,a)$, the same operators also satisfy
\begin{align}
K_a(V)=\sum_{b\in\{0,1\}} h(a,b)\,P_b(V).
\end{align}
For a product sample $U=\bigotimes_{\ell=1}^n U_\ell$ and a bit string $\bm b\in\{0,1\}^n$, set
\begin{align}
K_{\bm b}(U):=\bigotimes_{\ell=1}^n K_{b_\ell}(U_\ell),
\qquad
P_{\bm b}(U):=U^\dagger\ketbra{\bm b}U.
\end{align}
Then the first ordered contraction is
\begin{align}
\bE_U[\Xi_U(\rho,\sigma)]
&=
\bE_U\sum_{\bm b\in\{0,1\}^n}
\tr\left[P_{\bm b}(U)\sigma\right]
\left(
\sum_{\bm a\in\{0,1\}^n}
f(\bm a,\bm b)\,
\tr\left[P_{\bm a}(U)\rho\right]
\right)^2
\nonumber\\
&=
\bE_U\sum_{\bm b\in\{0,1\}^n}
\tr\left[K_{\bm b}(U)\rho\right]^2
\tr\left[P_{\bm b}(U)\sigma\right]
\nonumber\\
&=
\tr\left[
R_{AA'B}^{\otimes n}
(\rho\ox\rho\ox\sigma)
\right],
\end{align}
where
\begin{align}
R_{AA'B}
:=
\bE_V\sum_{b=0}^1
K_b(V)_{A}\ox K_b(V)_{A'}\ox P_b(V)_{B}.
\end{align}
Likewise, the second ordered contraction is
\begin{align}
\bE_U[\Xi_U(\sigma,\rho)]
&=
\bE_U\sum_{\bm a\in\{0,1\}^n}
\tr\left[P_{\bm a}(U)\rho\right]
\left(
\sum_{\bm b\in\{0,1\}^n}
f(\bm a,\bm b)\,
\tr\left[P_{\bm b}(U)\sigma\right]
\right)^2
\nonumber\\
&=
\bE_U\sum_{\bm a\in\{0,1\}^n}
\tr\left[P_{\bm a}(U)\rho\right]
\tr\left[K_{\bm a}(U)\sigma\right]^2
\nonumber\\
&=
\tr\left[
R_{ABB'}^{\otimes n}
(\rho\ox\sigma\ox\sigma)
\right],
\end{align}
with
\begin{align}
R_{ABB'}
:=
\bE_V\sum_{a=0}^1
P_a(V)_{A}\ox K_a(V)_{B}\ox K_a(V)_{B'}.
\end{align}
Therefore
\begin{align}
\mathbb{V}^{(3)}_{\Cl}
=
\frac{N_M-1}{N_M^2}\left(
\tr\left[R_{AA'B}^{\otimes n}(\rho\ox\rho\ox\sigma)\right]
+
\tr\left[R_{ABB'}^{\otimes n}(\rho\ox\sigma\ox\sigma)\right]
\right),
\label{eq:appx-shared-vterms-v3}
\end{align}
which is Eq.~\eqref{eq:shared-vterms-v3} in MT.

Now we compute state independent upper bound on $\mathbb{V}^{(3)}_{\Cl}$.
We first compute the one qubit operators $R_{AA'B}$ and $R_{ABB'}$ explicitly. 
Note that
\begin{align}
\sum_{b=0}^1 K_b(V)\ox K_b(V)\ox P_b(V)
=
\frac14\bI^{\otimes 3}
+\frac94A(V)\ox A(V)\ox\bI
+\frac34A(V)\ox\bI\ox A(V)
+\frac34\bI\ox A(V)\ox A(V).
\end{align}
Averaging and using Eq.~\eqref{eq:clifford-second-moment-A} gives
\begin{align}
R_{AA'B}
=
-\bI+\frac32\bF_{AA'}+\frac12\bF_{AB}+\frac12\bF_{A'B},
\end{align}
and, by the same calculation with the roles of $A$ and $B$ exchanged,
\begin{align}
R_{ABB'}
=
-\bI+\frac32\bF_{BB'}+\frac12\bF_{AB}+\frac12\bF_{AB'}.
\end{align}
A direct diagonalization yields the eigenvalues
\begin{align}
\{3/2,3/2,3/2,3/2,0,0,-2,-2\}
\end{align}
for both $R_{AA'B}$ and $R_{ABB'}$. 

The operator norms of $R_{AA'B}$ and $R_{ABB'}$ are both $2$.
This yields a loose universal upper bound $2^n$ on $\tr\left[R_{AA'B}^{\otimes n}(\rho\ox\rho\ox\sigma)\right]$. 
We can further tighten this bound by exploiting the state structure.

\begin{proposition}
\label{prop:third-term-bound}
For any $\rho,\sigma\in\density{(\bC^2)^{\otimes n}}$, it holds that
\begin{align}
\tr\left[R_{AA'B}^{\otimes n}(\rho_{\boldsymbol{A}}\ox\rho_{\boldsymbol{A}'}\ox\sigma_{\boldsymbol{B}})\right]
&\le
\left(\frac74\right)^n,
\label{eq:third-term-ordered-bound-AAB}\\
\tr\left[R_{ABB'}^{\otimes n}(\rho_{\boldsymbol{A}}\ox\sigma_{\boldsymbol{B}}\ox\sigma_{\boldsymbol{B}'})\right]
&\le
\left(\frac74\right)^n.
\label{eq:third-term-ordered-bound-ABB}
\end{align}
Consequently,
\begin{align}
\mathbb{V}^{(3)}_{\Cl}
\le
2\,\frac{N_M-1}{N_M^2}\left(\frac74\right)^n.
\label{eq:third-term-bound}
\end{align}
\end{proposition}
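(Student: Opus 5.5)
The plan is to reduce the $n$-qubit bound to a single-qubit operator comparison against a family of operators whose tensor powers are easy to control on inputs of the form $\rho\ox\rho\ox\sigma$. I expect the tensorization step (Step 3) to be the main obstacle, and I am not sure the specific comparison will close. It suffices to treat Eq.~\eqref{eq:third-term-ordered-bound-AAB}. Eq.~\eqref{eq:third-term-ordered-bound-ABB} then follows by exchanging the roles of $\rho$ and $\sigma$, since $R_{ABB'}$ is obtained from $R_{AA'B}$ by the relabeling $A\leftrightarrow B$. Eq.~\eqref{eq:third-term-bound} follows by adding the two bounds and inserting them into Eq.~\eqref{eq:appx-shared-vterms-v3}.

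\emph{Step 1: reference operators.} The key observation is that for every subset $S\subseteq[n]$,
\begin{align}
\tr\left[\bigl(\bF_{AA'}\bigr)_S(\rho\ox\rho\ox\sigma)\right]=\tr\left[\rho_S^2\right]\le 1 .
\end{align}
Here $\bigl(\bF_{AA'}\bigr)_S$ acts as $\bF_{A_\ell A'_\ell}$ on the sites $\ell\in S$ and as $\bI$ elsewhere, and $\rho_S$ is the reduced state of $\rho$ on $S$. Consequently, for any $c_0,c_1\ge0$, expanding the tensor power over subsets $S$ gives
\begin{align}
\tr\left[(c_0\bI+c_1\bF_{AA'})^{\otimes n}(\rho\ox\rho\ox\sigma)\right]\le (c_0+c_1)^n .
\end{align}

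\emph{Step 2: single-qubit comparison.} On one qubit I will split
\begin{align}
R_{AA'B}=Y-\Delta,\qquad Y=c_0\bI+c_1\bF_{AA'},\qquad c_0+c_1=\tfrac74 .
\end{align}
I will then tune $c_0,c_1$, for instance by diagonalizing within the $S_3$-isotypic decomposition of $(\bC^2)^{\otimes 3}$, where all the swap operators act by explicit small matrices. The aim is a choice for which $\Delta$ is positive semidefinite and has additional structure, ideally $\Delta$ a nonnegative combination of operators of the form $\bI-\bF_{XY}$ or projectors onto antisymmetric subspaces. On the fully symmetric subspace, $R_{AA'B}$ acts as $-1+\tfrac32+\tfrac12+\tfrac12=\tfrac32$ while $Y$ acts as $\tfrac74$, so there is slack there. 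The eigenvalues $\{3/2,0,-2\}$ computed above suggest that $7/4$ is what remains once the mixed-symmetry sectors are also accommodated. I would also sanity-check this against the Bloch-vector form of the single-qubit value, $\tfrac14+\tfrac34|\bm r|^2+\tfrac12\bm r\cdot\bm s\le\tfrac32$, which is obtained by averaging over the six Pauli directions.

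\emph{Step 3: tensorization (main obstacle).} An operator inequality $R\le Y$ does not tensorize, because $R_{AA'B}$ is not positive semidefinite. I would therefore expand
\begin{align}
R_{AA'B}^{\otimes n}=\sum_{T\subseteq[n]}(-1)^{|T|}\,Y^{\otimes T^c}\ox\Delta^{\otimes T}.
\end{align}
The goal is to show that the sum of all terms with $T\ne\emptyset$ contributes nonpositively on $\rho\ox\rho\ox\sigma$, or at least that these terms can be regrouped into nonpositive pieces. The tool would be positivity of partial traces of $\rho\ox\rho$ against the structured $\Delta$: for example, $\tr[(\bI-\bF)_{S}\,\rho\ox\rho]=1-\tr[\rho_S^2]\ge0$, and such quantities are monotone in $S$.

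If this direct route fails, my fallback is a Pauli/Bloch expansion over random Pauli axes. Writing $\rho_{a,T}$ for the expectation value in $\rho$ of the Pauli string with axes $a$ restricted to the sites in $T$, and $\sigma_{a,T}$ analogously, the quantity becomes
\begin{align}
\bE_{a}\sum_{T,T'}\left(\tfrac12\right)^{2n-|T|-|T'|}\left(\tfrac32\right)^{|T|+|T'|}\rho_{a,T}\,\rho_{a,T'}\,\sigma_{a,T\triangle T'}.
\end{align}
I would then bound it using $|\sigma_{a,\cdot}|\le1$, Cauchy--Schwarz, and the purity constraint $\sum_P\tr[P\rho]^2\le 2^n$ applied sector by sector. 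This should produce a per-site factor of $7/4$ by splitting the cross terms with AM--GM.
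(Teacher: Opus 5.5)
Your reduction of the $R_{ABB'}$ bound and of Eq.~\eqref{eq:third-term-bound} to the $R_{AA'B}$ bound is fine, and Step~1 is correct. However, there is a genuine gap: neither of your routes closes, and Step~3 is where the whole proof would have to happen.

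\textbf{Step 3.} In $\sum_T(-1)^{|T|}Y^{\otimes T^c}\ox\Delta^{\otimes T}$, every term with $|T|$ even enters with a plus sign. You give no mechanism that forces the $T\neq\emptyset$ part to be nonpositive. With $c_1=0$ that claim is literally the proposition, and with $c_1>0$ it is strictly stronger. Also, $1-\tr[\rho_S^2]$ is not monotone in $S$: for a pure entangled $\rho$ it is positive on some proper subsets but vanishes at $S=[n]$.

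\textbf{Domination with $c_0,c_1\ge0$.} Your Step~1 family cannot be rescued by dominating with a positive operator. $Y\ge|R_{AA'B}|$ requires $c_0-c_1\ge2$ on the range of $\Pi_-^{AA'}\ox\bI_B$, where $|R_{AA'B}|$ has eigenvalue $2$. Hence $c_0+c_1\ge2$, which only recovers the operator-norm bound $2^n$.

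\textbf{The fallback.} Any bound obtained by treating the $\sigma_{a,\cdot}$ as arbitrary numbers in $[-1,1]$ is at least its value at $\sigma_{a,\cdot}\equiv1$. Take the pure product state with Bloch vector $(1,1,1)/\sqrt3$ on every qubit, so $\rho_{a,T}=3^{-|T|/2}$, and set $w_T=(\tfrac12)^{n-|T|}(\tfrac32)^{|T|}$. That value is $\bE_a(\sum_T w_T\rho_{a,T})^2=((1+\sqrt3)/2)^{2n}=(1+\sqrt3/2)^n>(7/4)^n$. Discarding $\sigma$ axis by axis effectively lets $\sigma$ adapt to the sampled bases. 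No later Cauchy--Schwarz or AM--GM step can undo that loss.

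\textbf{The missing idea.} Tensorize through the operator absolute value rather than a signed expansion. For $X=\rho\ox\rho\ox\sigma\ge0$ one has $|\tr[R_{AA'B}^{\otimes n}X]|\le\tr[|R_{AA'B}|^{\otimes n}X]$, and inequalities between positive operators do tensorize.
\begin{itemize}
\item The spectral decomposition $R_{AA'B}=\tfrac32\Pi_{\mathrm{sym}}^{AA'B}-2\Pi_-^{AA'}\ox\bI_B$, with orthogonal supports, gives $|R_{AA'B}|\le\tfrac32\bI+\tfrac12\Pi_-^{AA'}\ox\bI_B=\tfrac74\bI-\tfrac14\bF_{AA'}$.
\item This is your family with $c_1=-1/4<0$, which is outside the range of Step~1.
\item The fix is to expand in the positive basis $\{\bI,\Pi_-^{AA'}\}$. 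Then replace the purity bound by the singlet-overlap bound $\tr[(\bigotimes_{\ell\in S}\Pi_-^{A_\ell A'_\ell})(\rho_S\ox\rho_S)]\le2^{-|S|}$. This holds because $\bigotimes_{\ell\in S}\Pi_-^{A_\ell A'_\ell}$ is a rank-one maximally entangled projector across $A_S|A'_S$.
\item Each site then contributes $\tfrac32+\tfrac12\cdot\tfrac12=\tfrac74$.
\end{itemize}

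In your own language, write $c_0\bI+c_1\bF_{AA'}=(c_0+c_1)\bI-2c_1\Pi_-^{AA'}$. The argument above then gives $\tr[Y^{\otimes n}X]\le c_0^n$ whenever $c_1\le0\le c_0+c_1$. Minimizing $c_0$ subject to $Y\ge|R_{AA'B}|$, that is $c_0+c_1\ge\tfrac32$ and $c_0-c_1\ge2$, yields exactly $c_0=\tfrac74$ and $c_1=-\tfrac14$. So your Step~1/Step~2 framework does work, provided you dominate $|R_{AA'B}|$ rather than $R_{AA'B}$ and allow $c_1<0$ through the singlet bound.
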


\begin{proof}
It suffices to prove Eq.~\eqref{eq:third-term-ordered-bound-AAB}; the proof of
Eq.~\eqref{eq:third-term-ordered-bound-ABB} is identical after exchanging the
roles of the two repeated registers.

Let
\begin{align}
\Pi_-^{AA'}:=\frac{\bI_{AA'}-\bF_{AA'}}{2}
\end{align}
be the antisymmetric projector on the one-qubit pair $AA'$, and let
$\Pi_{\mathrm{sym}}^{AA'B}$ be the projector onto the fully symmetric subspace
of the three one-qubit registers $AA'B$. Since the local Hilbert space is
two-dimensional, the fully antisymmetric three-qubit representation is absent,
and
\begin{align}
\Pi_{\mathrm{sym}}^{AA'B}
=
\frac13\left(
\bF_{AA'}+\bF_{AB}+\bF_{A'B}
\right).
\end{align}
Using
\begin{align}
R_{AAB}
=
-\bI+\frac32\bF_{AA'}+\frac12\bF_{AB}+\frac12\bF_{A'B},
\end{align}
we obtain the spectral decomposition
\begin{align}
R_{AAB}
=
\frac32\,\Pi_{\mathrm{sym}}^{AA'B}
-
2\,\Pi_-^{AA'}\ox\bI_B.
\label{eq:R-AAB-spectral-decomp}
\end{align}
The two projectors in Eq.~\eqref{eq:R-AAB-spectral-decomp} have orthogonal
supports, because the fully symmetric subspace is symmetric under
$A\leftrightarrow A'$, whereas $\Pi_-^{AA'}\ox\bI_B$ is antisymmetric under
$A\leftrightarrow A'$. Therefore
\begin{align}
\abs{R_{AAB}}
=
\frac32\,\Pi_{\mathrm{sym}}^{AA'B}
+
2\,\Pi_-^{AA'}\ox\bI_B.
\end{align}
Moreover,
\begin{align}
\Pi_{\mathrm{sym}}^{AA'B}
\le
\bI_{AA'B}-\Pi_-^{AA'}\ox\bI_B,
\end{align}
and hence
\begin{align}
\abs{R_{AAB}}
\le
\frac32\bI_{AA'B}
+
\frac12\,\Pi_-^{AA'}\ox\bI_B.
\label{eq:R-AAB-abs-bound}
\end{align}

Now set
\begin{align}
Q_\ell:=\Pi_-^{A_\ell A'_\ell}\ox\bI_{B_\ell},
\qquad \ell=1,\ldots,n.
\end{align}
For the positive operator
$X:=\rho_{\boldsymbol{A}}\ox\rho_{\boldsymbol{A}'}\ox\sigma_{\boldsymbol{B}}$,
we have
\begin{align}
\abs{
\tr\left[R_{AAB}^{\otimes n}X\right]
}
\le
\tr\left[
\abs{R_{AAB}^{\otimes n}}X
\right]
=
\tr\left[
\abs{R_{AAB}}^{\otimes n}X
\right]
\le
\tr\left[
\bigotimes_{\ell=1}^n
\left(
\frac32\bI_{A_\ell A'_\ell B_\ell}
+
\frac12 Q_\ell
\right)
X
\right],
\label{eq:third-term-abs-expansion-start}
\end{align}
where the last inequality uses Eq.~\eqref{eq:R-AAB-abs-bound} and positivity
under tensor products.

Expanding the product in Eq.~\eqref{eq:third-term-abs-expansion-start} gives
\begin{align}
\abs{
\tr\left[R_{AAB}^{\otimes n}X\right]
}
\le
\sum_{S\subseteq[n]}
\left(\frac32\right)^{n-|S|}
\left(\frac12\right)^{|S|}
\tr\left[
\Pi_{-,S}^{\boldsymbol{A}\boldsymbol{A}'}
(\rho_{\boldsymbol{A}}\ox\rho_{\boldsymbol{A}'})
\right],
\label{eq:third-term-subset-expansion}
\end{align}
where
\begin{align}
\Pi_{-,S}^{\boldsymbol{A}\boldsymbol{A}'}
:=
\bigotimes_{\ell\in S}\Pi_-^{A_\ell A'_\ell}
\end{align}
with identity operators on the complementary tensor factors. If
$\rho_S:=\tr_{S^c}[\rho]$ denotes the reduced state of $\rho$ on the subsystem
$S$, then
\begin{align}
\tr\left[
\Pi_{-,S}^{\boldsymbol{A}\boldsymbol{A}'}
(\rho_{\boldsymbol{A}}\ox\rho_{\boldsymbol{A}'})
\right]
=
\tr\left[
\left(\bigotimes_{\ell\in S}\Pi_-^{A_\ell A'_\ell}\right)
(\rho_S\ox\rho_S)
\right].
\label{eq:third-term-reduced-state}
\end{align}

Let $k:=|S|$. The projector
$\bigotimes_{\ell\in S}\Pi_-^{A_\ell A'_\ell}$ is a rank-one maximally
entangled projector across the bipartition $A_S|A'_S$. Thus there exists a
unitary $W_S$ on $A'_S$ such that, with
\begin{align}
\ket{\Phi_S}
:=
2^{-k/2}\sum_{\bm x\in\{0,1\}^k}
\ket{\bm x}_{A_S}\ket{\bm x}_{A'_S},
\end{align}
one has
\begin{align}
\bigotimes_{\ell\in S}\Pi_-^{A_\ell A'_\ell}
=
(\bI_{A_S}\ox W_S)\ketbra{\Phi_S}(\bI_{A_S}\ox W_S^\dagger).
\end{align}
Therefore,
\begin{align}
\tr\left[
\left(\bigotimes_{\ell\in S}\Pi_-^{A_\ell A'_\ell}\right)
(\rho_S\ox\rho_S)
\right]
=
2^{-k}
\tr\left[
\rho_S\,W_S\rho_S^T W_S^\dagger
\right],
\label{eq:third-term-max-ent-overlap}
\end{align}
where the transpose is taken in the computational basis. Since
$W_S\rho_S^T W_S^\dagger$ is again a density operator,
\begin{align}
\tr\left[
\rho_S\,W_S\rho_S^T W_S^\dagger
\right]
\le 1.
\end{align}
Combining this with Eq.~\eqref{eq:third-term-reduced-state} gives
\begin{align}
\tr\left[
\Pi_{-,S}^{\boldsymbol{A}\boldsymbol{A}'}
(\rho_{\boldsymbol{A}}\ox\rho_{\boldsymbol{A}'})
\right]
\le
2^{-|S|}.
\label{eq:third-term-singlet-overlap-bound}
\end{align}

Substituting Eq.~\eqref{eq:third-term-singlet-overlap-bound} into
Eq.~\eqref{eq:third-term-subset-expansion}, we obtain
\begin{align}
\abs{
\tr\left[R_{AAB}^{\otimes n}X\right]
}
\le
\sum_{S\subseteq[n]}
\left(\frac32\right)^{n-|S|}
\left(\frac12\right)^{|S|}
2^{-|S|}
=
\sum_{k=0}^n
\binom{n}{k}
\left(\frac32\right)^{n-k}
\left(\frac14\right)^k
=
\left(\frac74\right)^n.
\end{align}
This proves Eq.~\eqref{eq:third-term-ordered-bound-AAB}. The same argument with
$\rho\ox\sigma\ox\sigma$ and the antisymmetric projector on $BB'$ proves
Eq.~\eqref{eq:third-term-ordered-bound-ABB}. Finally,
Eq.~\eqref{eq:third-term-bound} follows immediately from
Eq.~\eqref{eq:shared-vterms-v3}.
\end{proof}

\appsubsection{Calculations of the fourth term}

We compute the analytic forms of the single-qubit Clifford fourth-moment operator $R_{4,\Cl}$ in Eq.~\eqref{eq:cliff-r4-main} of MT.

\vspace*{0.1in}
\textbf{Clifford fourth-moment operator.}
For a product Clifford unitary $C=\bigotimes_{\ell=1}^n C_\ell$, define
\begin{align}
O(C)
:=
\sum_{\bm s,\bm t\in\{0,1\}^n}
f(\bm s,\bm t)
(C^\dagger\ketbra{\bm s}C)
\ox
(C^\dagger\ketbra{\bm t}C).
\end{align}
Since $f(\bm s,\bm t)=\prod_{\ell=1}^n h(s_\ell,t_\ell)$ and $C$ is a product unitary, this factorizes as
\begin{align}
O(C)=\bigotimes_{\ell=1}^n O_1(C_\ell),
\end{align}
where $O_1(\cdot)$ is defined in Eq.~\eqref{eq:O-1-V}.
Therefore
\begin{align}
\bE_{\bm s,\bm t\mid C}[f(\bm s,\bm t)]
=
\sum_{\bm s,\bm t}
f(\bm s,\bm t)
\tr\left[C^\dagger\ketbra{\bm s}C\,\rho\right]
\tr\left[C^\dagger\ketbra{\bm t}C\,\sigma\right]
=
\tr\left[O(C)(\rho\ox\sigma)\right],
\end{align}
and hence
\begin{align}
\bE_C\!\left[
\left(\bE_{\bm s,\bm t\mid C}[f(\bm s,\bm t)]\right)^2
\right]
=
\tr\left[
\bE_C\!\left[O(C)_{A_1B_1}\ox O(C)_{A_2B_2}\right]
(\rho\ox\sigma\ox\rho\ox\sigma)
\right].
\end{align}
By independence of the single-qubit Clifford unitaries, the averaged operator factorizes as
\begin{align}
\bE_C\!\left[O(C)_{A_1B_1}\ox O(C)_{A_2B_2}\right]
=
R_{4,\Cl}^{\otimes n},
\end{align}
where the single-qubit local Clifford fourth-moment operator $R_{4,\Cl}$ is defined as
\begin{align}
R_{4,\Cl}
:=
\bE_D\!\left[O_1(D)_{A_1B_1}\ox O_1(D)_{A_2B_2}\right]
\end{align}
with $D$ a uniformly random single-qubit Clifford unitary. 
For one qubit, uniform Clifford conjugation maps $Z$ to a uniformly random Pauli from $\{\pm X,\pm Y,\pm Z\}$. Hence
\begin{align}
O_1(C)=\frac12\bI+\frac32P(C)\ox P(C),
\end{align}
where $P(C)$ is uniformly distributed over $\{\pm X,\pm Y,\pm Z\}$.
Averaging $O_1(C)\ox O_1(C)$ over the one-qubit Clifford group yields the
one-qubit operator $R_{4,\Cl}$ appearing above; explicitly,
\begin{align}
\label{eq:cliff-r4-app}
R_{4,\Cl} = \mathbb{E}_{\Cl}[O_1(C)\otimes O_1(C)] = 
-\bI
+\frac12\left(\bF_{A_1B_1}+\bF_{A_2B_2}\right)
+\frac32\,\Omega_4^{(1)},
\end{align}
where
\begin{align}
\Omega_4^{(1)}:=\frac12\left(\bI^{\otimes 4}+X^{\otimes 4}+Y^{\otimes 4}+Z^{\otimes 4}\right).
\end{align}
Note that the appearance of $\Omega_4^{(1)}$ reflects the nontrivial $k=4$ Clifford commutant described in Ref.~\cite{bittel2025commutant}.

\appsubsection{Proof of Thm.~\ref{thm:clifford-fourth-term-bound}}
\label{app:clifford-fourth-term-bound}

We now prove Thm.~\ref{thm:clifford-fourth-term-bound} of MT
via the following three steps:

\begin{enumerate}
  \item Sec.~\ref{appendix:sec:optimality-identical-pure} (containing Prop.~\ref{prop:cliff-v4-identical-pure}) 
      shows that the maximum in Eq.~\eqref{eq:Clifford-fourth-term-bound} 
      is attained by a pair of identical pure states.
  \item Sec.~\ref{appendix:sec:extremizers-pure} (containing Prop.~\ref{prop:cliff-pure-stabilizer})  
      then shows that, within this pure state optimization, an optimizer may be chosen to be a pure stabilizer state.
  \item Finally, Sec.~\ref{appx:sec:product-pure-stabilizer-upper-bound} (containing Prop.~\ref{prop:cliff-stabilizer-bound}) proves the sharp upper bound $(3/2)^n$ over the stabilizer family, with equality attained by pure product stabilizer states.
\end{enumerate}

We now establish these three ingredients in turn.

\appsubsubsection{Fourth-moment optimum occurs for identical pure states}
\label{appendix:sec:optimality-identical-pure}

\begin{proposition}
\label{prop:cliff-v4-identical-pure}
It holds that
\begin{align}
\max_{\rho,\sigma\in\density{(\bC^2)^{\otimes n}}}
\tr\left[
R_{4,\Cl}^{\otimes n}
(\rho\ox\sigma\ox\rho\ox\sigma)
\right]
=
\max_{\ket{\psi}\in(\bC^2)^{\otimes n}}
\tr\left[
R_{4,\Cl}^{\otimes n}\psi^{\ox4}
\right].
\end{align}
\end{proposition}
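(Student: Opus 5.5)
The approach is to write the fourth-moment quantity as an average of squares of bilinear forms and use positivity of the one-qubit kernel. From the derivation of Eq.~\eqref{eq:cliff-r4-app} we have, for all $\rho,\sigma$,
\begin{align}
F(\rho,\sigma):=\tr\left[R_{4,\Cl}^{\otimes n}(\rho\ox\sigma\ox\rho\ox\sigma)\right]=\bE_C\left[\mu_C(\rho,\sigma)^2\right],\qquad \mu_C(\rho,\sigma):=\tr\left[O(C)(\rho\ox\sigma)\right].
\end{align}
Writing $p_C(\bm s)=\tr[C^\dagger\ketbra{\bm s}C\,\rho]$ and $q_C(\bm t)=\tr[C^\dagger\ketbra{\bm t}C\,\sigma]$ gives $\mu_C(\rho,\sigma)=p_C^{T}H^{\otimes n}q_C$. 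Here $H=\begin{pmatrix}2&-1\\-1&2\end{pmatrix}$ is the matrix of $h(a,b)=3\delta_{a,b}-1$. The key observation is that $H$ has eigenvalues $1$ and $3$, so $H^{\otimes n}$ is positive definite. Hence, for each fixed $C$, $\mu_C$ is a symmetric positive semidefinite bilinear form in the pair of probability vectors. It is also a positive semidefinite bilinear form in $(\rho,\sigma)$ as linear functionals of Hermitian operators.

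\textbf{Step 1 (reduction to $\rho=\sigma$).} Apply Cauchy--Schwarz for the PSD form $H^{\otimes n}$ to get $\mu_C(\rho,\sigma)^2\le\mu_C(\rho,\rho)\,\mu_C(\sigma,\sigma)$ pointwise in $C$. Averaging over $C$ and applying Cauchy--Schwarz in $L^2$ of the Clifford measure gives
\begin{align}
F(\rho,\sigma)\le\bE_C\left[\mu_C(\rho,\rho)\mu_C(\sigma,\sigma)\right]\le\sqrt{F(\rho,\rho)F(\sigma,\sigma)}\le\max\{F(\rho,\rho),F(\sigma,\sigma)\}.
\end{align}
Therefore the maximum over pairs equals the maximum over identical pairs $\rho=\sigma$.

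\textbf{Step 2 (reduction to pure states).} For fixed $C$, the map $\rho\mapsto\mu_C(\rho,\rho)=p_C^{T}H^{\otimes n}p_C$ is a nonnegative convex quadratic function of $\rho$, because $p_C$ is linear in $\rho$ and $H^{\otimes n}\ge0$. The square of a nonnegative convex function is convex, since $x\mapsto x^2$ is convex and nondecreasing on $[0,\infty)$. Averaging preserves convexity, so $\rho\mapsto F(\rho,\rho)$ is convex on the compact convex set $\density{(\bC^2)^{\otimes n}}$. Its maximum is therefore attained at an extreme point, that is, at a pure state $\psi$. There $F(\psi,\psi)=\tr[R_{4,\Cl}^{\otimes n}\psi^{\ox4}]$, which is the claimed identity. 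The reverse inequality is trivial, since pure identical pairs are admissible.

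The only delicate step is the positivity of the kernel matrix $H^{\otimes n}$, which underlies both the pointwise Cauchy--Schwarz in Step 1 and the convexity in Step 2; everything else is routine. I would double-check it directly by diagonalizing $H$ in the $\ket{\pm}$ basis, or equivalently via the identity $O_1(V)=\frac12\bI+\frac32A(V)\ox A(V)$ from Eq.~\eqref{eq:O-1-V}. That identity shows that $\mu_C$ is a sum of products with nonnegative coefficients $\frac12$ and $\frac32$ of the form $\tr[P\rho]\tr[P\sigma]$, i.e.\ a Gram-type form. Existence of the maxima follows from compactness and continuity.
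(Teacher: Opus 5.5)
Your proof is correct, and it uses the same two ingredients as the paper, but in the opposite order. The first ingredient is positivity of the kernel weights, which gives the pointwise Cauchy--Schwarz bound. Your $H^{\otimes n}\ge 0$ is exactly the paper's Walsh expansion $\mu_C(\rho,\sigma)=\sum_S 3^{|S|}2^{-n}\,\wh{p}_C^{\rho}(S)\,\wh{p}_C^{\sigma}(S)$. The second ingredient is convexity of a square.

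The paper first reduces to pure states. That step uses only that $\mu_C(\rho,\sigma)$ is affine in each argument separately, so $\bE_C[\mu_C^2]$ is biconvex for any real kernel. Only then does the paper invoke positivity of the weights, together with $\mu_C(\rho,\rho)\mu_C(\sigma,\sigma)\le\frac12\bigl(\mu_C(\rho,\rho)^2+\mu_C(\sigma,\sigma)^2\bigr)$, to make the pure pair identical.

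You reduce to the diagonal first, with $L^2$ Cauchy--Schwarz in place of AM--GM. You then need positivity a second time: it makes $\rho\mapsto\mu_C(\rho,\rho)$ a nonnegative convex quadratic, so that its square, and hence $F(\rho,\rho)$, is convex. Without $H^{\otimes n}\ge 0$ this step would not go through, whereas the paper's purity reduction would.

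In exchange, your ordering yields cleaner structural statements. You get convexity of the single-variable quartic $\rho\mapsto F(\rho,\rho)$ and the slightly sharper bound $F(\rho,\sigma)\le\sqrt{F(\rho,\rho)F(\sigma,\sigma)}$. It also avoids the bookkeeping of fixing a maximizing pair and propagating equalities through two separate convexity steps.
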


\begin{proof}
Define 
\begin{align}
    \mu_C(\rho,\sigma) := \bE_{\bm s,\bm t\mid C}[f(\bm s,\bm t)].
\end{align}
Then 
\begin{align}
\tr\left[R_{4,\Cl}^{\otimes n}(\rho\ox\sigma\ox\rho\ox\sigma)\right]
=
\bE_C[\mu_C(\rho,\sigma)^2].
\end{align}
For every fixed product Clifford sample $C$ and every fixed state $\sigma$, 
the quantity $\mu_C(\rho,\sigma)$ is affine in $\rho$. 
Therefore, for $\rho_1,\rho_2\in\density{(\bC^2)^{\otimes n}}$ and $\lambda\in[0,1]$,
\begin{align}
&\; \tr\left[
R_{4,\Cl}^{\otimes n}
((\lambda\rho_1+(1-\lambda)\rho_2)\ox\sigma
\ox(\lambda\rho_1+(1-\lambda)\rho_2)\ox\sigma)
\right] \\
=&\;
\bE_C\!\left[
\left(
\lambda\mu_C(\rho_1,\sigma)+(1-\lambda)\mu_C(\rho_2,\sigma)
\right)^2
\right] \\
\leq&\;
\lambda\tr\left[
R_{4,\Cl}^{\otimes n}(\rho_1\ox\sigma\ox\rho_1\ox\sigma)
\right]
+(1-\lambda)\tr\left[
R_{4,\Cl}^{\otimes n}(\rho_2\ox\sigma\ox\rho_2\ox\sigma)
\right],
\end{align}
because $x\mapsto x^2$ is convex. Hence, for fixed $\sigma$, the fourth-moment
objective is convex in $\rho$ on the compact convex set of density operators,
so its maximum is attained at a pure state. The same argument applies to the
second argument. Therefore a maximizing pair may be chosen pure.

It remains to show that the two pure states may be chosen identical. For a fixed product Clifford sample $C$ and state $\tau$, define
\begin{align}
p_C^\tau(\bm s):=\tr\left[C^\dagger\ketbra{\bm s}C\,\tau\right],
\qquad
\wh p_C^\tau(S):=\sum_{\bm s\in\{0,1\}^n}
(-1)^{\sum_{\ell\in S}s_\ell}p_C^\tau(\bm s).
\end{align}
Here $p_C^\tau$ is the computational basis measurement distribution of $\tau$ after applying $C$, and $\wh p_C^\tau(S)$ is its Walsh coefficient on $S\subseteq[n]$. Expanding the product kernel in Eq.~\eqref{eq:product-kernel-main} in the Walsh basis gives
\begin{align}
\mu_C(\rho,\sigma)
=
\sum_{S\subseteq[n]}
\frac{3^{|S|}}{2^n}\,
\wh p_C^\rho(S)\,\wh p_C^\sigma(S),
\end{align}
where $p_C^\tau$ and $\wh p_C^\tau$ are defined above. Since all weights $3^{|S|}/2^n$ are nonnegative, the weighted Cauchy--Schwarz inequality and the arithmetic-geometric mean inequality imply
\begin{align}
\mu_C(\rho,\sigma)^2
\leq
\mu_C(\rho,\rho)\,\mu_C(\sigma,\sigma)
\leq
\frac{\mu_C(\rho,\rho)^2+\mu_C(\sigma,\sigma)^2}{2}.
\end{align}
Averaging over $C$ yields
\begin{align}
\tr\left[R_{4,\Cl}^{\otimes n}(\rho\ox\sigma\ox\rho\ox\sigma)\right]
\le
\frac{
\tr\left[R_{4,\Cl}^{\otimes n}\rho^{\ox4}\right]
+
\tr\left[R_{4,\Cl}^{\otimes n}\sigma^{\ox4}\right]
}{2}
\le
\max\left\{
\tr\left[R_{4,\Cl}^{\otimes n}\rho^{\ox4}\right],
\tr\left[R_{4,\Cl}^{\otimes n}\sigma^{\ox4}\right]
\right\}.
\end{align}
Applying this to a maximizing pure state pair shows that at least one identical pure state pair also attains the global maximum.
\end{proof}

\appsubsubsection{Extremizers of the pure state optimization}
\label{appendix:sec:extremizers-pure}

By Prop.~\ref{prop:cliff-v4-identical-pure}, it remains to optimize over identical pure inputs. 
We now show that, in this pure state optimization, the maximum can be attained within the \emph{pure stabilizer family}.

For one qubit, define
\begin{align}
\kappa_1(a,b):=
\begin{cases}
1, & a=b=I,\\
1, & a=I,\ b\in\{X,Y,Z\}\ \text{or}\ b=I,\ a\in\{X,Y,Z\},\\
3, & a=b\in\{X,Y,Z\},\\
0, & a,b\in\{X,Y,Z\},\ a\neq b,
\end{cases}
\end{align}
and for $P,Q\in\cP_n$ set
\begin{align}
\kappa_n(P,Q):=\prod_{\ell=1}^n \kappa_1(P_\ell,Q_\ell).
\end{align}
The exact Pauli basis formula for the Clifford fourth moment can be written as
\begin{align}
\tr\left[R_{4,\Cl}^{\otimes n}(\rho\ox\sigma\ox\rho\ox\sigma)\right]
=
\frac{1}{4^n}
\sum_{P,Q\in\cP_n}
\kappa_n(P,Q)
\wh\rho(P)\wh\sigma(P)\wh\rho(Q)\wh\sigma(Q).
\label{eq:cliff-kappa-form}
\end{align}

Specialize to identical pure inputs, $\rho=\sigma=\proj{\psi}$. Define
\begin{align}
X_\psi(P):=\wh\rho(P)^2,\qquad \Xi_\psi(P):=2^{-n}X_\psi(P).
\end{align}
Here $\Xi_\psi$ is the Pauli squared characteristic distribution of $\ket{\psi}$. By Pauli orthogonality, purity, and the bound $|\wh\rho(P)|\le 1$, one has
\begin{align}
0\le \Xi_\psi(P)\le 2^{-n},
\qquad
\sum_{P\in\cP_n}\Xi_\psi(P)=1.
\end{align}
Hence
\begin{align}
\tr\left[R_{4,\Cl}^{\otimes n}(\rho^{\ox4})\right]
=
\sum_{P,Q\in\cP_n}\Xi_\psi(P)\Xi_\psi(Q)\kappa_n(P,Q).
\label{eq:cliff-pure-quadratic}
\end{align}
Since the one qubit kernel matrix associated with $\kappa_1$ is positive semidefinite, so is $\kappa_n=\kappa_1^{\otimes n}$; therefore Eq.~\eqref{eq:cliff-pure-quadratic} is a convex quadratic functional of the probability vector $\Xi_\psi$.

\begin{proposition}
\label{prop:cliff-pure-stabilizer}
In the optimization
\begin{align}
\max_{\ket{\psi}\in(\bC^2)^{\otimes n}}
\tr\left[R_{4,\Cl}^{\otimes n}(\proj{\psi}^{\ox4})\right],
\end{align}
there exists an optimal state $\ket{\psi_*}$ that is a pure stabilizer state.
\end{proposition}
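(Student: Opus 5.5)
The plan is to use the convex quadratic structure recorded just before the statement, Eq.~\eqref{eq:cliff-pure-quadratic}, to pass from the non-convex set of pure states to a polytope. There the maximum of a convex function sits at an extreme point, and I will identify the extreme points with Pauli-uniform distributions of the kind produced by stabilizer states. Concretely, write $F(\Xi):=\sum_{P,Q}\Xi(P)\Xi(Q)\kappa_n(P,Q)$. Every pure $\ket{\psi}$ yields a vector $\Xi_\psi$ in the polytope
\begin{align}
\mathcal{K}_n:=\Bigl\{\Xi\in\bR^{4^n}:\ 0\le\Xi(P)\le 2^{-n},\ \Xi(\bI)=2^{-n},\ \textstyle\sum_P\Xi(P)=1\Bigr\}.
\end{align}
Hence $\max_\psi F(\Xi_\psi)\le\max_{\mathcal{K}_n}F$. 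Since $\kappa_n=\kappa_1^{\otimes n}$ is positive semidefinite, $F$ is convex, so $\max_{\mathcal K_n}F$ is attained at a vertex of $\mathcal K_n$.

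Because $\mathcal K_n$ is a box intersected with a single hyperplane (with one coordinate pinned), its vertices have all coordinates in $\{0,2^{-n}\}$. Each vertex is therefore the uniform distribution $2^{-n}\mathbb 1_S$ on a set $S\ni\bI$ of exactly $2^n$ Pauli labels, with value $F=4^{-n}\sum_{P,Q\in S}\kappa_n(P,Q)$. For a pure stabilizer state, $\Xi_\psi$ is exactly such a uniform distribution with $S$ its stabilizer group (up to signs). So the proposition follows once I show that
\begin{align}
\Phi(S):=\sum_{P,Q\in S}\kappa_n(P,Q)
\end{align}
is maximized, over all $2^n$-element subsets $S$, by some set that is a stabilizer group. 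Then the relaxed maximum is attained by $\Xi_{\psi_*}$ for a stabilizer $\ket{\psi_*}$, so it coincides with the pure-state maximum. The natural candidate is $S=\{\bI,Z\}^{\otimes n}$. There the per-qubit block sum is $1+1+1+3=6$, giving $F=(6/4)^n=(3/2)^n$, consistent with Prop.~\ref{prop:cliff-stabilizer-bound}.

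The combinatorial step is the main obstacle. Note that $\kappa_n(P,Q)=0$ unless $P$ and $Q$ agree on every qubit where both are non-identity, in which case it equals $3^{m(P,Q)}$ with $m$ the number of shared non-identity positions. I would prove $\Phi(S)\le 6^n$ for all $|S|=2^n$ by induction on $n$, using $\kappa_n=\kappa_1\otimes\kappa_{n-1}$.

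Split $S$ by the first-qubit letter into fibers $S_a$, $a\in\{I,X,Y,Z\}$, with sizes $m_a$ summing to $2^n$. Then
\begin{align}
\Phi(S)=\sum_{a,b}\kappa_1(a,b)\,\Phi_{n-1}(S_a,S_b),
\end{align}
where $\Phi_{n-1}(S_a,S_b)$ is the cross-sum of $\kappa_{n-1}$ over $S_a\times S_b$. I would bound each cross term by Cauchy--Schwarz for the PSD kernel, $\Phi_{n-1}(S_a,S_b)\le\sqrt{\Phi_{n-1}(S_a)\Phi_{n-1}(S_b)}$. The inductive hypothesis must be strengthened to general set sizes, for example to a bound of the form $\Phi_{n-1}(T)\le c(|T|)$ that is concave-type in $|T|$ and equals $6^{n-1}$ at $|T|=2^{n-1}$; a natural guess is a $\log_2 6$ power law.

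This reduces the step to a finite optimization over $(m_I,m_X,m_Y,m_Z)$ with the $4\times4$ matrix $\kappa_1$. There I expect concentration on two letters $\{I,a\}$ with equal halves, which reproduces the stabilizer value. If the size-dependent induction proves too lossy, my fallback is a Fourier/Lov\'asz-theta-type bound: diagonalize $\kappa_1$, whose eigenvectors are related to the Walsh structure of $\{I,X,Y,Z\}$, and bound $\mathbb 1_S^{T}\kappa_n\mathbb 1_S$ using the eigenvalue decomposition and $|S|=2^n$.
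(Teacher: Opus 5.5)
There is a genuine gap. Your reduction is correct up to the vertex description of $\mathcal K_n$, but the step everything then depends on is false. Over arbitrary $2^n$-element sets $S\ni\bI$, $\Phi(S)$ is \emph{not} maximized by a stabilizer group, and $\Phi(S)\le 6^n$ already fails for $n=2$.

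Take $S=\{II,ZI,ZZ,ZX\}$, whose uniform distribution is a vertex of $\mathcal K_2$. Using $\kappa_2=\kappa_1\ox\kappa_1$, the row sums of $\kappa_2$ on $S$ are:
\begin{itemize}
\item $4$ for $II$;
\item $1+3+3+3=10$ for $ZI$;
\item $1+3+9+0=13$ for each of $ZZ$ and $ZX$, since these two clash on the second qubit.
\end{itemize}
Therefore
\begin{align}
\Phi(S)=40>36=6^2,
\qquad
F=\frac{40}{16}=\frac52>\left(\frac32\right)^2.
\end{align}
By contrast, every stabilizer support has $\Phi\le 6^n$ by Prop.~\ref{prop:cliff-stabilizer-bound}. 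For $n=2$ you can also check directly: $36$ for product groups and $34$ for $\{II,XX,YY,ZZ\}$.

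Hence $\max_{\mathcal K_2}F$ is attained at a non-stabilizer vertex, and the identification ``relaxed maximum $=$ pure-state maximum'' breaks down. No argument that uses only $|S|=2^n$ can close this gap. That rules out your Cauchy--Schwarz induction with a size-only bound $c(|T|)$: the $|T|^{\log_2 6}$ guess gives $36<40$ here, and it already fails on singletons, e.g.\ $\Phi(\{Z\})=3>1$. It also rules out your spectral/Lov\'asz-type fallback.

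The missing idea is the pure-state structure that $\mathcal K_n$ discards. If $\Xi_\psi(P)=2^{-n}$, then $\ket{\psi}$ is a $\pm1$ eigenvector of $P$, so the saturated labels must form a commuting group. In the example, $ZZ$ and $ZX$ anticommute, and $ZI\cdot ZZ=IZ\notin S$. More generally, pure-state characteristic distributions satisfy linear constraints such as the symplectic Fourier self-duality $\sum_P(-1)^{\langle P,Q\rangle_{\mathrm{sp}}}\wh\psi(P)^2=2^n\wh\psi(Q)^2$, which your polytope ignores.

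The paper attempts no combinatorial optimization at this step. It invokes the result of Ref.~\cite{hinsche2025efficient} that $\Xi_\psi$ lies in the convex hull of the characteristic distributions of pure stabilizer states. Convexity of $F$ then immediately bounds $F(\Xi_\psi)$ by a stabilizer value. The counting of $\Phi$ is done only over stabilizer supports, in Prop.~\ref{prop:cliff-stabilizer-bound}, where the isotropic affine-subspace structure is used essentially---exactly the structure your relaxation throws away.
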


\begin{proof}
Ref.~\cite{hinsche2025efficient} shows that the characteristic distribution of any pure state belongs to the convex hull of the characteristic distributions of pure stabilizer states. Hence, for every pure state $\ket{\psi}$, there exist pure stabilizer states $\ket{\phi_j}$ and probabilities $\lambda_j$ such that
\begin{align}
\Xi_\psi=\sum_j \lambda_j \Xi_{\phi_j}.
\end{align}
Since Eq.~\eqref{eq:cliff-pure-quadratic} is convex in $\Xi_\psi$, we obtain
\begin{align}
\tr\left[R_{4,\Cl}^{\otimes n}(\proj{\psi}^{\ox4})\right]
\leq
\sum_j \lambda_j
\tr\left[R_{4,\Cl}^{\otimes n}(\proj{\phi_j}^{\ox4})\right]
\leq
\max_j
\tr\left[R_{4,\Cl}^{\otimes n}(\proj{\phi_j}^{\ox4})\right].
\end{align}
Therefore every pure state value is bounded above by a value realized within the stabilizer family, so an optimal pure state may be chosen to be a pure stabilizer state.
\end{proof}

\appsubsubsection{Sharp bound within the stabilizer family}
\label{appx:sec:product-pure-stabilizer-upper-bound}

By Prop.~\ref{prop:cliff-pure-stabilizer}, it remains to optimize Eq.~\eqref{eq:Clifford-fourth-term-bound} over states of the form $\psi=\proj{\psi}$, where $\ket{\psi}$ is a pure stabilizer state. Specializing the exact Pauli basis formula to identical inputs gives
\begin{align}
\tr\left[R_{4,\Cl}^{\otimes n}(\psi^{\ox 4})\right]
=
\frac{1}{4^n}
\sum_{\substack{P,Q\in\cP_n\\P\sim Q}}
3^{c(P,Q)}\,\wh\psi(P)^2\,\wh\psi(Q)^2,
\end{align}
where $\wh\psi(P):=\tr\left[\psi P\right]$ is the Pauli coefficient of $\psi$, $P\sim Q$ means that on each qubit the local Paulis are either identical or one of them is identity, and $c(P,Q)$ counts the positions where the two local Paulis coincide nontrivially.

\begin{proposition}
\label{prop:cliff-stabilizer-bound}
Let $\psi=\proj{\psi}$ be a pure stabilizer state on $n$ qubits. Then
\begin{align}
\tr\left[R_{4,\Cl}^{\otimes n}(\psi^{\ox 4})\right]
\le
\left(\frac32\right)^n.
\label{eq:cliff-stabilizer-bound}
\end{align}
Equality is attained by product stabilizer states, equivalently by identical local Pauli eigenstates.
\end{proposition}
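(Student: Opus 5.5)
The plan is to reduce the left-hand side of Eq.~\eqref{eq:cliff-stabilizer-bound} to a purely combinatorial count over the stabilizer group and then bound that count by induction on $n$. For a pure stabilizer state with stabilizer group $G$ ($|G|=2^n$, taken modulo phases), one has $\wh\psi(P)^2=1$ if $\pm P\in G$ and $\wh\psi(P)^2=0$ otherwise. The Pauli-basis formula stated just before the proposition then becomes
\begin{align}
\tr\left[R_{4,\Cl}^{\otimes n}(\psi^{\ox 4})\right]=\frac{1}{4^n}F(G),\qquad F(G):=\sum_{P,Q\in G}\kappa_n(P,Q).
\end{align}
The claim is therefore equivalent to $F(G)\le 6^n$.

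\textbf{Equality case.} I check this first. For $\ket{0}^{\otimes n}$ we have $G=\{I,Z\}^{\otimes n}$, and $F$ factorizes over qubits. On one qubit the pairs $(I,I),(I,Z),(Z,I),(Z,Z)$ contribute $1+1+1+3=6$, so $F=6^n$ and the value is $(3/2)^n$. Any product stabilizer state is a product of local Pauli eigenstates. Since $\kappa_1$ is invariant under permutations of $\{X,Y,Z\}$, it gives the same value.

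\textbf{Normal form.} Next I use the invariance of $\kappa_n$ under local Cliffords. A single-qubit Clifford permutes $\{X,Y,Z\}$ up to signs, and signs are invisible because only $\wh\psi(P)^2$ enters. Hence $F(G)$ is invariant under local Clifford equivalence. By the standard result that every stabilizer state is LC-equivalent to a graph state, I may assume $G=\{X^{\bm x}Z^{\Gamma\bm x}:\bm x\in\bF_2^n\}$ for an adjacency matrix $\Gamma$.

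\textbf{Induction.} The empty graph gives exactly $6^n$. The goal is to show that edges can only decrease $F$. I would split on a vertex $v$, organizing $G$ by the local Pauli at qubit $v$. The local letter is $I$, $X$, $Z$, or $Y$ depending on $(x_v,(\Gamma\bm x)_v)$. Only pairs $(P,Q)$ whose qubit-$v$ letters are equal or involve $I$ contribute, with weights $1$ or $3$. Each block sum is a sum of the same form for a graph on the remaining $n-1$ vertices, restricted to an affine coset. Two tools control these pieces. First, the Gram/PSD structure of $\kappa_{n-1}$ gives Cauchy--Schwarz for cross terms, $\sum_{P\in A,Q\in B}\kappa\le\sqrt{F(A)F(B)}$. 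Second, the mismatch rule $\kappa_1(a,b)=0$ for $a\neq b$ nontrivial kills all cross terms between the $X$-, $Y$- and $Z$-cosets. The intended recursion is $F_n\le 6F_{n-1}$, mirroring the per-qubit factor $1+1+1+3$.

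\textbf{Main obstacle.} The hard step is when vertex $v$ has neighbours. Then all four local letters occur, and each coset has only a quarter of $G$ rather than half. A naive bound of each coset by $6^{n-1}$ gives a factor of $16$, which is far too big. I would first try to strengthen the inductive hypothesis to a size-sensitive bound over Pauli cosets $A$ of isotropic subgroups, for example of the form $F(A)\le |A|\cdot 3^{\,n}$ or a weight-enumerator variant tracking the support patterns. The quarter-size cosets would then recover the loss. Independently, I would sanity-check against the alternative representation $\bE_{\bm b}\bigl[\bigl(2^{-n}\sum_{S\in V_{\bm b}}3^{|S|}\bigr)^2\bigr]$. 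Here $\bm b\in\{X,Y,Z\}^n$ is the random local basis, and $V_{\bm b}\subseteq\bF_2^n$ is the subspace of $S$ with $\bigotimes_{\ell\in S}b_\ell\in\pm G$. Purity constrains the joint dimensions of the $V_{\bm b}$ across bases, for instance through local complementation moving weight between bases. If the coset induction resists, I would aim to exploit that constraint directly.
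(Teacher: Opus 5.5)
\textbf{There is a genuine gap.} Your reduction to $F(G)=\sum_{P,Q\in G}\kappa_n(P,Q)\le 6^n$ and your equality computation are correct. The size-sensitive hypothesis you guess, $F(A)\le |A|\,3^r$ for affine isotropic $A\subseteq(\mathbb F_2^2)^r$, is exactly the paper's induction statement $F_r(A)\le(3/4)^{r-m}(3/2)^m$ once the $4^{-r}$ normalization is removed, since $4^r(3/4)^{r-m}(3/2)^m=2^m3^r$. However, the proposal stops at the decisive step, and the tools you list do not close it. Split on the last qubit. When its projection is all of $\{I,X,Y,Z\}$ (your ``vertex with neighbours''), the four fibres $A_0,A_X,A_Y,A_Z$ each have size $|A|/4$. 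With $C_{ab}:=\sum_{x\in A_a,\,y\in A_b}\kappa_{r-1}(x,y)$ one has $F(A)=C_{00}+2\sum_{a\neq0}C_{0a}+3\sum_{a\neq0}C_{aa}$, while the target is $3|A|\,3^{r-1}$. Bounding each $C_{aa}$ by the hypothesis and each cross term by Cauchy--Schwarz gives only $(1+6+9)\frac{|A|}{4}3^{r-1}=4|A|\,3^{r-1}$, which is too large by a factor $4/3$. So the quarter-size fibres do \emph{not} by themselves recover the loss. Cauchy--Schwarz does suffice when the projection has one or two points, so this full-projection case is precisely where your argument is missing.

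The missing idea is to control each pair $(A_0,A_a)$ jointly rather than via Cauchy--Schwarz. The union $A_0\cup A_a$ is itself affine isotropic, and it has size $|A|/2$ because $A_a\neq A_0$. To see this, write $A_a=A_0+h_a$ with $(h_a,a)\in K$. If $h_a$ lay in the common fibre direction $L$, then $(0,a)\in K$. But $(0,a)$ fails to commute with some element $(h_b,b)\in K$ with $\langle a,b\rangle_{\mathrm{sp}}=1$, and such an element exists because $K$ projects onto all of $\mathbb F_2^2$. Applying the induction hypothesis to this larger union gives $C_{00}+2C_{0a}+C_{aa}\le\frac{|A|}{2}3^{r-1}$. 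Adding $2C_{aa}\le\frac{|A|}{2}3^{r-1}$ gives $C_{00}+2C_{0a}+3C_{aa}\le|A|\,3^{r-1}$. Summing over the three nonzero $a$ and discarding $2C_{00}\ge0$ yields $F(A)\le|A|\,3^r$. Separately, the graph-state normal form does not help: the fibres are not graph states. The induction must run over all affine isotropic sets, with the number of qubits $r$ and the affine dimension of $A$ decoupled, which is how the paper sets it up.
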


\begin{proof}
Let
\begin{align}
\psi=\frac{1}{2^n}\sum_{P\in S}\ve(P)P,
\end{align}
where $S\subset\cP_n$ is the phase-free stabilizer support of $\psi$ and
$\ve(P)\in\{\pm1\}$. Since $\psi$ is a pure stabilizer state, $S$ has
cardinality $2^n$ and may be identified with an $n$-dimensional totally
isotropic linear subspace of $(\mathbb F_2^2)^n$. Thus
\begin{align}
\tr\left[R_{4,\Cl}^{\otimes n}(\psi^{\ox 4})\right]
=
\frac{1}{4^n}
\sum_{P,Q\in S}
\prod_{j=1}^n\kappa(P_j,Q_j),
\label{eq:cliff-stabilizer-exact-repaired}
\end{align}
where the local kernel is
\begin{align}
\kappa(a,b)
:=
\begin{cases}
1, & a=b=0,\\
1, & a=0,\ b\neq 0\ \text{or}\ a\neq 0,\ b=0,\\
3, & a=b\neq 0,\\
0, & a\neq 0,\ b\neq 0,\ a\neq b.
\end{cases}
\end{align}
Here $0$ denotes the identity Pauli and the three nonzero elements of
$\mathbb F_2^2$ denote the three phase-free nonidentity Paulis.

For $r\ge 0$, write $V_r:=(\mathbb F_2^2)^r$ and
\begin{align}
\kappa_r(u,v):=\prod_{j=1}^r\kappa(u_j,v_j),
\qquad
F_r(A):=\frac{1}{4^r}\sum_{u,v\in A}\kappa_r(u,v)
\end{align}
for $A\subseteq V_r$. An affine set $A\subseteq V_r$ will be called affine
isotropic if $A=u+K$, where $K\subseteq V_r$ is a totally isotropic linear
subspace and $u\in K^\perp$, with orthogonality taken with respect to the
standard symplectic form. Equivalently, every two elements of $A$ are
symplectically orthogonal.

We prove the following induction statement. If $A\subseteq V_r$ is
affine isotropic of affine dimension $m$, then
\begin{align}
F_r(A)
\le
B_{r,m}
:=
\left(\frac34\right)^{r-m}\left(\frac32\right)^m.
\label{eq:cliff-affine-claim-repaired}
\end{align}
The desired stabilizer bound follows by taking $A=S$, $r=m=n$.

The proof is by induction on $r$, uniformly over all possible affine
dimensions $m$. The case $r=0$ is immediate. Assume the claim has been proved
for $V_{r-1}$ and all affine dimensions. Put $s=r-1$, and let
$\pi:V_r\to\mathbb F_2^2$ be the projection onto the last coordinate.
For each $a\in\pi(A)$ define the fiber
\begin{align}
A_a:=\{x\in V_s:(x,a)\in A\}.
\end{align}
If $A=u+K$, then all nonempty fibers have common direction
\begin{align}
L:=\{x\in V_s:(x,0)\in K\}.
\end{align}
Moreover, each fiber is affine isotropic in $V_s$. Indeed, if
$x,y\in A_a$, then $(x,a),(y,a)\in A$, so
$\langle x,y\rangle_{\mathrm{sp}}+\langle a,a\rangle_{\mathrm{sp}}=0$, and
$\langle a,a\rangle_{\mathrm{sp}}=0$.

We shall also use the following consequence of the induction hypothesis. Let
$B\subseteq V_s$ be affine isotropic of affine dimension $d$, and suppose
$h\notin L_B$, where $L_B$ is the direction of $B$, is such that
$B\cup(B+h)$ is affine isotropic. Define
\begin{align}
C(B,B+h):=\frac{1}{4^s}\sum_{x\in B,\,y\in B+h}\kappa_s(x,y).
\end{align}
Then
\begin{align}
\frac14\left(F_s(B)+2C(B,B+h)+3F_s(B+h)\right)
\le B_{s,d}.
\label{eq:proper-two-fiber-estimate}
\end{align}
Indeed, $B\cup(B+h)$ is affine isotropic of dimension $d+1$, so by the
induction hypothesis
\begin{align}
F_s\bigl(B\cup(B+h)\bigr)
=
F_s(B)+2C(B,B+h)+F_s(B+h)
\le B_{s,d+1}=2B_{s,d}.
\end{align}
Also $F_s(B+h)\le B_{s,d}$. Therefore
\begin{align}
\frac14\left(F_s(B)+2C(B,B+h)+3F_s(B+h)\right)
=
\frac14\left(F_s\bigl(B\cup(B+h)\bigr)+2F_s(B+h)\right)
\le B_{s,d}.
\end{align}
This is the estimate that rules out the missing full-projection case.

We now distinguish the possible affine projections $\pi(A)\subseteq\mathbb F_2^2$.

\emph{Case 1: $\pi(A)$ is a single point.}
Write $\pi(A)=\{a\}$. Then $A=A_a\times\{a\}$ and $A_a$ has affine dimension
$m$. Hence
\begin{align}
F_r(A)=\frac{\kappa(a,a)}{4}F_s(A_a).
\end{align}
If $a=0$, then $\kappa(a,a)=1$; if $a\neq0$, then $\kappa(a,a)=3$. In either
case,
\begin{align}
F_r(A)\le \frac34 B_{s,m}=B_{r,m}.
\end{align}

\emph{Case 2: $\pi(A)$ has two points.}
Let $\pi(A)=\{a,b\}$, where $a\neq b$. Put $A_b=A_a+h$. The common fiber
direction has dimension $d=m-1$.

If one of $a,b$ is zero, relabel so that $a=0$ and $b\neq0$. In this subcase
$A_a\cup A_b$ is affine isotropic in $V_s$, because the two last-coordinate
labels $0$ and $b$ are symplectically orthogonal. If $h\in L$, then
$A_b=A_a$ and
\begin{align}
F_r(A)
=
\frac14\left(F_s(A_a)+2F_s(A_a)+3F_s(A_a)\right)
=
\frac32 F_s(A_a)
\le
\frac32 B_{s,d}
=
B_{r,m}.
\end{align}
If $h\notin L$, Eq.~\eqref{eq:proper-two-fiber-estimate} gives the stronger
bound
\begin{align}
F_r(A)
=
\frac14\left(F_s(A_a)+2C(A_a,A_b)+3F_s(A_b)\right)
\le B_{s,d}
\le B_{r,m}.
\end{align}

It remains in this case to consider the possibility that both $a$ and $b$ are
nonzero. Since $a\neq b$, the local factor $\kappa(a,b)$ is zero. Therefore
\begin{align}
F_r(A)
=
\frac14\left(3F_s(A_a)+3F_s(A_b)\right)
\le
\frac32 B_{s,d}
=
B_{r,m}.
\end{align}

\emph{Case 3: $\pi(A)=\mathbb F_2^2$.}
In this case the projection of the direction $K$ has dimension $2$, and hence
$L$ has dimension $d=m-2$. Write the four fibers as
\begin{align}
A_0,
\qquad
A_a=A_0+h_a\quad(a\in\mathbb F_2^2\setminus\{0\}).
\end{align}
For every nonzero $a$, the vector $h_a$ is not in $L$. To see this, suppose
$h_a\in L$. Since $(h_a,a)\in K$ and $(h_a,0)\in K$, their difference
$(0,a)$ would belong to $K$. Since $\pi(K)=\mathbb F_2^2$, there is
$b\in\mathbb F_2^2$ with $\langle a,b\rangle_{\mathrm{sp}}=1$ and some
$(h_b,b)\in K$. But then
\begin{align}
\left\langle (0,a),(h_b,b)\right\rangle_{\mathrm{sp}}
=
\langle a,b\rangle_{\mathrm{sp}}
=
1,
\end{align}
contradicting the isotropy of $K$.

For $a,b\in\mathbb F_2^2$, set
\begin{align}
C_{ab}:=\frac{1}{4^s}\sum_{x\in A_a,\,y\in A_b}\kappa_s(x,y).
\end{align}
The local factor $\kappa(a,b)$ vanishes whenever $a$ and $b$ are distinct
nonzero elements of $\mathbb F_2^2$. Therefore
\begin{align}
F_r(A)
=
\frac14\left(
C_{00}+2\sum_{a\neq0}C_{0a}+3\sum_{a\neq0}C_{aa}
\right).
\label{eq:full-projection-expansion-repaired}
\end{align}
For each nonzero $a$, the set $A_0\cup A_a$ is affine isotropic and
$h_a\notin L$, so Eq.~\eqref{eq:proper-two-fiber-estimate} gives
\begin{align}
\frac14\left(C_{00}+2C_{0a}+3C_{aa}\right)
\le
B_{s,d}.
\end{align}
Summing this inequality over the three nonzero values of $a$ yields
\begin{align}
\frac14\left(
3C_{00}+2\sum_{a\neq0}C_{0a}+3\sum_{a\neq0}C_{aa}
\right)
\le
3B_{s,d}.
\end{align}
Since $C_{00}\ge0$, Eq.~\eqref{eq:full-projection-expansion-repaired} implies
\begin{align}
F_r(A)
\le
3B_{s,d}.
\end{align}
Finally, because $s=r-1$ and $d=m-2$,
\begin{align}
3B_{s,d}
=
3\left(\frac34\right)^{r-m+1}\left(\frac32\right)^{m-2}
=
\left(\frac34\right)^{r-m}\left(\frac32\right)^m
=
B_{r,m}.
\end{align}
This completes the induction and proves Eq.~\eqref{eq:cliff-affine-claim-repaired}.

Applying the claim to the stabilizer support $S\subseteq V_n$, which has
$r=m=n$, gives
\begin{align}
\tr\left[R_{4,\Cl}^{\otimes n}(\psi^{\ox 4})\right]
=
F_n(S)
\le
\left(\frac32\right)^n.
\end{align}
Equality is attained by product stabilizer states. Indeed, if
$S=\{0,p_1\}\times\cdots\times\{0,p_n\}$ with each $p_j\neq0$, then the sum
factorizes into the one-qubit value
\begin{align}
\frac14\left(\kappa(0,0)+2\kappa(0,p_j)+\kappa(p_j,p_j)\right)
=
\frac14(1+2+3)
=
\frac32
\end{align}
on every qubit. Therefore $F_n(S)=(3/2)^n$ for every product stabilizer state.
Together with Prop.~\ref{prop:cliff-v4-identical-pure} and
Prop.~\ref{prop:cliff-pure-stabilizer}, this proves
Thm.~\ref{thm:clifford-fourth-term-bound} of MT.
\end{proof}

\appsubsection{Proof of Cor.~\ref{lem:clifford-complexity}}
\label{app:clifford-complexity-proof}

We now combine the variance bounds available for local Clifford sampling to derive the optimized Chebyshev-based sufficient-copy bound summarized in Cor.~\ref{lem:clifford-complexity} of MT. By Appx.~\ref{app:clifford-first-three-terms},
\begin{align}
\mathbb{V}^{(1)}_{\Cl}\le 0,
\qquad
\mathbb{V}^{(2)}_{\Cl}\le \frac{3^n}{N_M^2},
\qquad
\mathbb{V}^{(3)}_{\Cl}\le 2\,\frac{N_M-1}{N_M^2}\left(\frac74\right)^n.
\end{align}
For the fourth term, Thm.~\ref{thm:clifford-fourth-term-bound} of MT gives
\begin{align}
\mathbb{V}^{(4)}_{\Cl}\le \left(\frac{N_M-1}{N_M}\right)^2\left(\frac32\right)^n.
\end{align}
Therefore, by Eq.~\eqref{eq:four-term-decomp},
\begin{align}
\Var[X_M]
&=
\mathbb{V}^{(1)}_{\Cl}
+\mathbb{V}^{(2)}_{\Cl}
+\mathbb{V}^{(3)}_{\Cl}
+\mathbb{V}^{(4)}_{\Cl}\\
&\le
\frac{3^n}{N_M^2}
+
2\,\frac{N_M-1}{N_M^2}\left(\frac74\right)^n
+
\left(\frac{N_M-1}{N_M}\right)^2\left(\frac32\right)^n.
\label{eq:appx-cliff-total-variance}
\end{align}
Substituting Eq.~\eqref{eq:appx-cliff-total-variance} into the general Chebyshev sufficient-copy relation Eq.~\eqref{eq:sample complexity-from-variance}, we obtain the sufficient condition
\begin{align}
\label{eq:appx-cliff-sample-bound-exact}
N
\ge
\frac{1}{\delta\ve^2}
\left[
\frac{3^n}{N_M}
+
2\,\frac{N_M-1}{N_M}\left(\frac74\right)^n
+
\frac{(N_M-1)^2}{N_M}\left(\frac32\right)^n
\right].
\end{align}
Since
\begin{align}
\frac{N_M-1}{N_M}\le 1,
\qquad
\frac{(N_M-1)^2}{N_M}\le N_M,
\end{align}
it further suffices to require
\begin{align}
\label{eq:appx-cliff-sample-bound-simple}
N
\ge
\frac{1}{\delta\ve^2}
\left[
\frac{3^n}{N_M}
+
2\left(\frac74\right)^n
+
N_M\left(\frac32\right)^n
\right].
\end{align}
The only $N_M$-dependent leading terms in Eq.~\eqref{eq:appx-cliff-sample-bound-simple} are
\begin{align}
g(N_M):=\frac{3^n}{N_M}+N_M\left(\frac32\right)^n,
\end{align}
and differentiating gives
\begin{align}
g'(N_M)=-\frac{3^n}{N_M^2}+\left(\frac32\right)^n.
\end{align}
Thus the minimizing scale is
\begin{align}
N_M^2=\frac{3^n}{(3/2)^n}=2^n,
\qquad
N_M=2^{n/2}.
\end{align}
Equivalently, this is the point where
\begin{align}
\frac{3^n}{N_M}=N_M\left(\frac32\right)^n=\sqrt{4.5^n}.
\end{align}
Hence choosing, for example,
\begin{align}
N_M=\lceil 2^{n/2}\rceil
\end{align}
makes the sufficient bound in Eq.~\eqref{eq:appx-cliff-sample-bound-simple} scale as
\begin{align}
\frac{2\sqrt{4.5^n}+2(7/4)^n}{\delta\ve^2},
\end{align}
and therefore
\begin{align}
N
=
\cO\!\left(
\frac{\sqrt{4.5^n}}{\delta\ve^2}
\right),
\end{align}
which is exactly Eq.~\eqref{eq:cliff-sample-main} in MT. 
Since the middle term in Eq.~\eqref{eq:appx-cliff-sample-bound-simple} is independent of $N_M$ and exponentially smaller than $\sqrt{4.5^n}$, while the remaining two terms are balanced at $N_M=2^{n/2}$, the optimizing scale is
\begin{align}
N_M=\Theta(2^{n/2}).
\end{align}
This proves Cor.~\ref{lem:clifford-complexity}.

\appsection{DIPE with single-qubit Haar ensemble}\label{app:haar}

This section focuses on the single-qubit Haar-specific part of the analysis. The first three variance terms were already derived in Appx.~\ref{app:clifford-first-three-terms}, since they depend only on the second and third moments and therefore coincide for the single-qubit Clifford and single-qubit Haar ensembles. We now study the fourth term $\mathbb{V}^{(4)}_{\mathrm{H}}$, prove the twirl statement and universal upper bound summarized in Thm.~\ref{thm:haar-local-twirl-of-clifford}, and then combine these ingredients with the shared first-three-term bounds to derive Cor.~\ref{lem:haar-complexity}.

\appsubsection{Calculations of the fourth term}\label{app:haar-fourth-term}

We first derive the operator form of $\mathbb{V}^{(4)}_{\mathrm{H}}$ from Eq.~\eqref{eq:fourth-term}. 

\vspace*{0.1in}
\textbf{Haar fourth-moment operator.}
For a product Haar sample $U=\bigotimes_{\ell=1}^n U_\ell$, define
\begin{align}
O(U)
:=
\sum_{\bm s,\bm t\in\{0,1\}^n}
f(\bm s,\bm t)
\left(U^\dagger\ketbra{\bm s}U\right)
\ox
\left(U^\dagger\ketbra{\bm t}U\right).
\end{align}
Since $f(\bm s,\bm t)=\prod_{\ell=1}^n h(s_\ell,t_\ell)$ and $U$ is a product unitary, we have
\begin{align}
O(U)=\bigotimes_{\ell=1}^n O_1(U_\ell),
\end{align}
where $O_1(\cdot)$ is defined in Eq.~\eqref{eq:O-1-V}. 
Therefore
\begin{align}\label{eq:mu-Haar-rho-sigma}
\mu_U(\rho,\sigma)
:=
\bE_{\bm s,\bm t\mid U}[f(\bm s,\bm t)]
=
\tr\left[O(U)(\rho\ox\sigma)\right],
\end{align}
and hence
\begin{align}
\bE_U[\mu_U(\rho,\sigma)^2]
=
\tr\left[
\bE_U\!\left[O(U)_{A_1B_1}\ox O(U)_{A_2B_2}\right]
(\rho\ox\sigma\ox\rho\ox\sigma)
\right].
\end{align}
By independence of the single-qubit Haar samples, the averaged operator factorizes as
\begin{align}
\bE_U\!\left[O(U)_{A_1B_1}\ox O(U)_{A_2B_2}\right]
=
R_{4,\mathrm{H}}^{\otimes n},
\end{align}
where the single-qubit local Haar fourth-moment operator $R_{4,\mathrm{H}}$ 
is defined as
\begin{align}\label{appx:eq:R-4-Haar-definition}
R_{4,\mathrm{H}}
:=
\bE_V\!\left[O_1(V)_{A_1B_1}\ox O_1(V)_{A_2B_2}\right]
\end{align}
with $V$ a single-qubit Haar random unitary. Combining this with Eq.~\eqref{eq:fourth-term}, we obtain
\begin{align}
\mathbb{V}^{(4)}_{\mathrm{H}}
=
\left(\frac{N_M-1}{N_M}\right)^2
\tr\left[
R_{4,\mathrm{H}}^{\otimes n}
(\rho\ox\sigma\ox\rho\ox\sigma)
\right],
\end{align}
which is exactly the operator form given in Eq.~\eqref{eq:haar-vterms-v4} of MT.

\vspace*{0.1in}
\textbf{Haar commutant expansion.}
For later use, we also record the explicit Haar commutant expansion of $R_{4,\mathrm{H}}$ 
by exploiting the unitary $4$-design property of the Haar ensemble.
With the replica ordering $(A_1,B_1,A_2,B_2)$, let $\Pi_\pi$ denote the permutation operator for $\pi\in S_4$, and write
$\bF_{ij}:=\Pi_{(ij)}$.
Then the single-qubit local Haar fourth-moment operator $R_{4,\mathrm{H}}$ 
admits the permutation-operator expansion
\begin{align}\label{appx:eq:Haar-commutant-expansion}
R_{4,\mathrm{H}}
= \frac15\left(\bI+\bF_{12}+\bF_{34}\right)
+\frac35\left(\bF_{12}\bF_{34}+\bF_{13}\bF_{24}+\bF_{14}\bF_{23}\right)
-\frac3{10}\left(\bF_{13}+\bF_{24}+\bF_{14}+\bF_{23}\right).
\end{align}
Equivalently, in cycle notation,
\begin{align}\label{appx:eq:Haar-commutant-cycle-notation}
R_{4,\mathrm{H}}
&=
\frac15\,\Pi_{\id}
+\frac15\,\Pi_{(12)}
+\frac15\,\Pi_{(34)}
+\frac35\,\Pi_{(12)(34)}
+\frac35\,\Pi_{(13)(24)}
+\frac35\,\Pi_{(14)(23)}\nonumber\\
&\quad
-\frac3{10}\,\Pi_{(13)}
-\frac3{10}\,\Pi_{(24)}
-\frac3{10}\,\Pi_{(14)}
-\frac3{10}\,\Pi_{(23)}.
\end{align}

\vspace*{0.1in}
\textbf{The optimization problem.}
The fourth-moment expression satisfies
\begin{align}
\tr\left[R_{4,\mathrm{H}}^{\otimes n}
(\rho\ox\sigma\ox\rho\ox\sigma)\right]
=
\bE_U[\mu_U(\rho,\sigma)^2],
\end{align}
where $\mu_U(\rho,\sigma)$ is defined in Eq.~\eqref{eq:mu-Haar-rho-sigma}.
We need to solve the following optimization problem in order to obtain the worst-case upper bound
on $\mathbb{V}^{(4)}_{\mathrm{H}}$ over all possible quantum states:
\begin{align}\label{appx:eq:Haar-upper-bound-optimization}
\max_{\rho,\sigma\in\density{(\bC^2)^{\otimes n}}}
\tr\left[
R_{4,\mathrm{H}}^{\otimes n}
(\rho\ox\sigma\ox\rho\ox\sigma)
\right].
\end{align}

Now we show that the maximum in Eq.~\eqref{appx:eq:Haar-upper-bound-optimization} is attained 
by a pair of identical pure states.
The argument closely parallels the proof of Prop.~\ref{prop:cliff-v4-identical-pure}
established for the single-qubit Clifford ensemble; 
for completeness, we present the Haar version here in a self-contained form.

\begin{proposition}
\label{prop:haar-v4-pure}
It holds that
\begin{align}
\max_{\rho,\sigma\in\density{(\bC^2)^{\otimes n}}}
\tr\left[
R_{4,\mathrm{H}}^{\otimes n}
(\rho\ox\sigma\ox\rho\ox\sigma)
\right]
=
\max_{\ket{\psi}\in(\bC^2)^{\otimes n}}
\tr\left[
R_{4,\mathrm{H}}^{\otimes n}
\psi^{\ox4}
\right].
\end{align}
In particular, the maximum is attained by a pair of identical pure states.
\end{proposition}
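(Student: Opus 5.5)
The plan is to mirror the proof of Prop.~\ref{prop:cliff-v4-identical-pure} step by step, replacing the uniform product Clifford average by the product Haar average. First, write the objective as
\begin{align}
\tr\left[R_{4,\mathrm{H}}^{\otimes n}(\rho\ox\sigma\ox\rho\ox\sigma)\right]=\bE_U[\mu_U(\rho,\sigma)^2],
\end{align}
with $\mu_U$ as in Eq.~\eqref{eq:mu-Haar-rho-sigma}. For each fixed product unitary $U$ and fixed $\sigma$, the map $\rho\mapsto\mu_U(\rho,\sigma)=\tr[O(U)(\rho\ox\sigma)]$ is affine. Composing with the convex function $x\mapsto x^2$ and averaging over $U$ therefore gives a convex function of $\rho$ on the compact convex set $\density{(\bC^2)^{\otimes n}}$. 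Its maximum is attained at an extreme point, namely a pure state. The same argument in the second argument shows that a maximizing pair $(\rho,\sigma)$ can be taken to consist of pure states. Compactness and continuity ensure that the maximum exists in the first place.

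Second, to make the two pure states identical, use the Walsh expansion of the product kernel. For fixed $U$, set $p_U^\tau(\bm s)=\tr[U^\dagger\ketbra{\bm s}U\,\tau]$ with Walsh coefficients $\wh p_U^\tau(S)$. Since $h(a,b)=3\delta_{a,b}-1=\frac12+\frac32(-1)^{a+b}$, the kernel factorizes and
\begin{align}
\mu_U(\rho,\sigma)=\sum_{S\subseteq[n]}\frac{3^{|S|}}{2^n}\,\wh p_U^\rho(S)\,\wh p_U^\sigma(S).
\end{align}
This is a positive-semidefinite bilinear form in the real Walsh vectors, and the weights involve only the Walsh transform of the outcome distributions, not the particular ensemble. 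Weighted Cauchy--Schwarz then gives $\mu_U(\rho,\sigma)^2\le\mu_U(\rho,\rho)\mu_U(\sigma,\sigma)$. Combining this with AM--GM yields
\begin{align}
\mu_U(\rho,\sigma)^2\le\tfrac12\bigl(\mu_U(\rho,\rho)^2+\mu_U(\sigma,\sigma)^2\bigr)
\end{align}
pointwise in $U$.

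Averaging over the Haar product measure gives
\begin{align}
\tr[R_{4,\mathrm H}^{\otimes n}(\rho\ox\sigma\ox\rho\ox\sigma)]\le\max\{\tr[R_{4,\mathrm H}^{\otimes n}\rho^{\ox4}],\tr[R_{4,\mathrm H}^{\otimes n}\sigma^{\ox4}]\}.
\end{align}
Applying this to the pure maximizing pair shows that $\psi=\rho$ or $\psi=\sigma$ attains the maximum with an identical pair. The reverse inequality is trivial, since identical pure pairs are feasible.

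The only step needing care is the Walsh identity: one has to check that $\mu_U$ depends on $U$ only through the outcome distributions, so that the nonnegativity of the weights $3^{|S|}/2^n$ holds uniformly for every $U$ in the continuous Haar ensemble. This is immediate from the product form of $f$, so no real obstacle is expected. The remaining measure-theoretic points, namely integrability and exchanging the pointwise inequality with the expectation, are routine because $|\mu_U|\le 2^n$.
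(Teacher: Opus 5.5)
Your proposal is correct and follows essentially the same route as the paper's proof. It first uses convexity of $\rho\mapsto\bE_U[\mu_U(\rho,\sigma)^2]$ in each argument to reduce to a pure pair, then applies the Walsh expansion with nonnegative weights $3^{|S|}/2^n$, weighted Cauchy--Schwarz and AM--GM pointwise in $U$, and averages to pass to an identical pure pair.
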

\begin{proof}
Recall the quantity $\mu_U(\rho,\sigma)$ defined in Eq.~\eqref{eq:mu-Haar-rho-sigma}.
For every fixed $\sigma$ and every Haar sample $U$, $\mu_U(\rho,\sigma)$ is affine in $\rho$. 
Therefore, for $\rho_1,\rho_2\in\density{(\bC^2)^{\otimes n}}$ and $\lambda\in[0,1]$,
\begin{align}
&\; \tr\left[R_{4,\mathrm H}^{\otimes n}
((\lambda\rho_1+(1-\lambda)\rho_2)\ox\sigma
\ox(\lambda\rho_1+(1-\lambda)\rho_2)\ox\sigma)
\right] \\
=&\;
\bE_U\!\left[
\left(
\lambda\mu_U(\rho_1,\sigma)+(1-\lambda)\mu_U(\rho_2,\sigma)
\right)^2
\right] \\
\leq&\; 
\lambda\tr\left[R_{4,\mathrm H}^{\otimes n}(\rho_1\ox\sigma\ox\rho_1\ox\sigma)\right]
+ (1-\lambda)\tr\left[R_{4,\mathrm H}^{\otimes n}(\rho_2\ox\sigma\ox\rho_2\ox\sigma)\right],
\end{align}
because $x\mapsto x^2$ is convex. Hence, for fixed $\sigma$, the Haar
fourth-moment objective is convex in $\rho$ on the compact convex set of
density operators, so its maximum is attained at an extreme point, namely a
pure state. Similarly, for fixed $\rho$, the objective is convex in $\sigma$.

Now let $(\rho_\star,\sigma_\star)$ be a maximizing pair. By convexity in the first argument, there exists a pure state $\ket{\psi}$ such that
\begin{align}
\tr\left[
R_{4,\mathrm H}^{\otimes n}
(\psi\ox\sigma_\star\ox\psi\ox\sigma_\star)
\right]
\ge
\tr\left[
R_{4,\mathrm H}^{\otimes n}
(\rho_\star\ox\sigma_\star\ox\rho_\star\ox\sigma_\star)
\right].
\end{align}
Since $(\rho_\star,\sigma_\star)$ is already globally optimal, equality must hold. Applying the same argument to the second argument yields a pure state $\ket{\phi}$ such that
\begin{align}
\tr\left[
R_{4,\mathrm H}^{\otimes n}
(\psi\ox\phi\ox\psi\ox\phi)
\right]
\ge
\tr\left[
R_{4,\mathrm H}^{\otimes n}
(\psi\ox\sigma_\star\ox\psi\ox\sigma_\star)
\right]
=
\tr\left[
R_{4,\mathrm H}^{\otimes n}
(\rho_\star\ox\sigma_\star\ox\rho_\star\ox\sigma_\star)
\right].
\end{align}
Thus equality must hold, so the global maximum is attained by the pure state pair $(\ketbra{\psi},\ketbra{\phi})$.

It remains to show that the two pure states may be chosen identical. For a fixed Haar sample $U$ and state $\tau$, define
\begin{align}
p_U^\tau(\bm s):=\tr\left[U^\dagger\ketbra{\bm s}U\,\tau\right],
\qquad
\wh p_U^\tau(S):=\sum_{\bm s\in\{0,1\}^n}
(-1)^{\sum_{\ell\in S}s_\ell}p_U^\tau(\bm s).
\end{align}
Here $p_U^\tau$ is the computational basis measurement distribution of $\tau$ after applying $U$, and $\wh p_U^\tau(S)$ is its Walsh coefficient on the subset $S\subseteq[n]$. With this notation, we write
\begin{align}
\mu_U(\rho,\sigma)
=
\sum_{S\subseteq[n]}\frac{3^{|S|}}{2^n}\,\wh p_U^\rho(S)\,\wh p_U^\sigma(S).
\end{align}
Since all weights $3^{|S|}/2^n$ are nonnegative, the weighted Cauchy--Schwarz inequality and the arithmetic-geometric mean inequality give
\begin{align}
\mu_U(\rho,\sigma)^2
\leq \mu_U(\rho,\rho)\,\mu_U(\sigma,\sigma)
\leq \frac{\mu_U(\rho,\rho)^2+\mu_U(\sigma,\sigma)^2}{2}.
\end{align}
Averaging over $U$ yields
\begin{align}
\tr\left[
R_{4,\mathrm H}^{\otimes n}
(\rho\ox\sigma\ox\rho\ox\sigma)
\right]
\le
\frac{
\tr\left[R_{4,\mathrm H}^{\otimes n}\rho^{\ox4}\right]
+
\tr\left[R_{4,\mathrm H}^{\otimes n}\sigma^{\ox4}\right]
}{2}
\le
\max\left\{
\tr\left[R_{4,\mathrm H}^{\otimes n}\rho^{\ox4}\right],
\tr\left[R_{4,\mathrm H}^{\otimes n}\sigma^{\ox4}\right]
\right\}.
\end{align}
Applying this to the maximizing pure state pair $(\ketbra{\psi},\ketbra{\phi})$, we obtain
\begin{align}
\tr\left[
R_{4,\mathrm H}^{\otimes n}
(\psi\ox\phi\ox\psi\ox\phi)
\right]
\le
\max\left\{
\tr\left[R_{4,\mathrm H}^{\otimes n}\psi^{\ox4}\right],
\tr\left[R_{4,\mathrm H}^{\otimes n}\phi^{\ox4}\right]
\right\}.
\end{align}
The left-hand side is the global maximum, while each term on the right-hand side is bounded above by the same quantity. Therefore equality must hold, and at least one of the identical pure state pairs $(\ketbra{\psi},\ketbra{\psi})$ or $(\ketbra{\phi},\ketbra{\phi})$ also attains the global maximum.
\end{proof}

\appsubsection{Proof of Thm.~\ref{thm:haar-local-twirl-of-clifford}}
\label{appendix:sec:haar-local-twirl-of-clifford}

\vspace*{0.1in}
\textbf{Proof of Eq.~\eqref{eq:haar-local-twirl-of-clifford-identity} 
in Thm.~\ref{thm:haar-local-twirl-of-clifford}.}
Recall the definitions of the one-qubit fourth-moment operators
$R_{4,\mathrm H}$ and $R_{4,\Cl}$ in
Eqs.~\eqref{appx:eq:Haar-commutant-expansion} and \eqref{eq:cliff-r4-main}. We first prove the
one-qubit twirling identity
\begin{align}\label{eq:haar-local-twirl-of-clifford}
R_{4,\mathrm{H}}
=
\bE_{W\sim\nu}\!\left[
(W^{\otimes 4})^\dagger R_{4,\Cl} W^{\otimes 4}
\right],
\end{align}
where $\nu$ denotes the single-qubit Haar measure.
For a one-qubit Hermitian unitary $M$ with eigenvalues $\pm1$, define
\begin{align}
\mathcal{K}(M)
:=
\left(\frac12\bI+\frac32\,M\ox M\right)_{A_1B_1}
\ox
\left(\frac12\bI+\frac32\,M\ox M\right)_{A_2B_2}.
\end{align}
Since
$O_1(V)=\frac12\bI+\frac32(V^\dagger ZV)\ox(V^\dagger ZV)$, the definition
of $R_{4,\mathrm H}$ in Eq.~\eqref{appx:eq:Haar-commutant-expansion} gives
\begin{align}\label{appx:eq-R-4-Haar-K}
R_{4,\mathrm{H}}
=
\bE_V\!\left[\mathcal{K}(V^\dagger ZV)\right],
\end{align}
where $V$ is a single-qubit Haar random unitary.

On the other hand, if $C$ is a uniformly random single-qubit Clifford, then $C^\dagger ZC$ is uniformly distributed over $\{\pm X,\pm Y,\pm Z\}$, 
and the sign disappears inside $M\ox M$. 
Hence it follows from the definition of $R_{4,\Cl}$ in Eq.~\eqref{eq:cliff-r4-main}
that,
\begin{align}\label{appx:eq-R-4-Clifford-K}
R_{4,\Cl}
=
\frac13\sum_{P\in\{X,Y,Z\}}\mathcal{K}(P).
\end{align}

For arbitrary single-qubit unitary $W$ sampled according to the Haar measure $\nu$, we have
\begin{align}
(W^{\otimes 4})^\dagger R_{4,\Cl} W^{\otimes 4}
&= (W^{\otimes 4})^\dagger \left(\frac13\sum_{P\in\{X,Y,Z\}}\mathcal{K}(P)\right) W^{\otimes 4}
\notag\\
&= \frac13\sum_{P\in\{X,Y,Z\}}(W^{\otimes 4})^\dagger \mathcal{K}(P) W^{\otimes 4}
\notag\\
&= \frac13\sum_{P\in\{X,Y,Z\}}\mathcal{K}(W^\dagger P W),
\end{align}
where the last step follows from the definition of $\mathcal{K}$, since
\begin{align}
(W^{\otimes 4})^\dagger\!\left[\left(\frac12\bI+\frac32\,P\ox P\right)_{A_1B_1}\ox\left(\frac12\bI+\frac32\,P\ox P\right)_{A_2B_2}\right]W^{\otimes 4}
=
\mathcal{K}(W^\dagger P W).
\end{align}
Averaging over all single-qubit Haar-random unitary $W$ gives
\begin{align}
\bE_{W\sim\nu}\!\left[(W^{\otimes 4})^\dagger R_{4,\Cl} W^{\otimes 4}\right]
= \frac13\sum_{P\in\{X,Y,Z\}}\bE_{W\sim\nu}\!\left[\mathcal{K}(W^\dagger P W)\right].
\end{align}

For each fixed $P\in\{X,Y,Z\}$, the random observable $W^\dagger P W$ is uniformly distributed over all Bloch-sphere directions
since $W$ is Haar distributed, and hence has the same distribution as $V^\dagger ZV$ for Haar-random $V$. 
Therefore each term in the sum equals $R_{4,\mathrm{H}}$, and so
\begin{align}
\bE_{W\sim\nu}\!\left[(W^{\otimes 4})^\dagger R_{4,\Cl} W^{\otimes 4}\right]
=
\frac13\sum_{P\in\{X,Y,Z\}}R_{4,\mathrm{H}}
=
R_{4,\mathrm{H}}.
\end{align}
This proves Eq.~\eqref{eq:haar-local-twirl-of-clifford}.

Taking the tensor product over $n$ independent qubits and using independence of $W_1,\dots,W_n$ yields
\begin{align}
R_{4,\mathrm{H}}^{\otimes n}
=
\bigotimes_{\ell=1}^n
\bE_{W_\ell\sim\nu}\!\left[
(W_\ell^{\otimes 4})^\dagger R_{4,\Cl} W_\ell^{\otimes 4}
\right]
=
\bE_{\bm W\sim\nu^{\otimes n}}\!\left[
(\bm W^{\otimes 4})^\dagger R_{4,\Cl}^{\otimes n}\bm W^{\otimes 4}
\right],
\end{align}
which is the operator identity~\eqref{eq:haar-local-twirl-of-clifford-identity} 
in Thm.~\ref{thm:haar-local-twirl-of-clifford}.

\vspace*{0.1in}
\textbf{Proof of Eq.~\eqref{eq:haar-local-twirl-of-clifford-induced-bound} 
in Thm.~\ref{thm:haar-local-twirl-of-clifford}.} For arbitrary
$n$-qubit states $\rho$ and $\sigma$, the operator identity just proved gives
\begin{align}
\tr\left[
R_{4,\mathrm H}^{\otimes n}
(\rho\ox\sigma\ox\rho\ox\sigma)
\right]
&=
\bE_{\bm W\sim\nu^{\otimes n}}
\tr\left[
(\bm W^{\otimes 4})^\dagger
R_{4,\Cl}^{\otimes n}
\bm W^{\otimes 4}
(\rho\ox\sigma\ox\rho\ox\sigma)
\right] \\
&=
\bE_{\bm W\sim\nu^{\otimes n}}
\tr\left[
R_{4,\Cl}^{\otimes n}
\left(
\bm W\rho\bm W^\dagger
\ox
\bm W\sigma\bm W^\dagger
\ox
\bm W\rho\bm W^\dagger
\ox
\bm W\sigma\bm W^\dagger
\right)
\right].
\end{align}
For each fixed $\bm W$, the pair
$(\bm W\rho\bm W^\dagger,\bm W\sigma\bm W^\dagger)$ is again a valid pair of
$n$-qubit density operators. Therefore Thm.~\ref{thm:clifford-fourth-term-bound}
implies the pointwise bound
\begin{align}
\tr\left[
R_{4,\Cl}^{\otimes n}
\left(
\bm W\rho\bm W^\dagger
\ox
\bm W\sigma\bm W^\dagger
\ox
\bm W\rho\bm W^\dagger
\ox
\bm W\sigma\bm W^\dagger
\right)
\right]
\le
\left(\frac32\right)^n.
\end{align}
Averaging over $\bm W$ yields
\begin{align}
\tr\left[
R_{4,\mathrm H}^{\otimes n}
(\rho\ox\sigma\ox\rho\ox\sigma)
\right]
\le
\left(\frac32\right)^n.
\end{align}
Since this holds for every $\rho$ and $\sigma$, taking the maximum over all
state pairs proves Eq.~\eqref{eq:haar-local-twirl-of-clifford-induced-bound}.
This completes the proof of Thm.~\ref{thm:haar-local-twirl-of-clifford}.

\appsubsection{Proof of Cor.~\ref{lem:haar-complexity}}
\label{app:haar-complexity-proof}

The proof is the same as the proof of Cor.~\ref{lem:clifford-complexity} in
Appx.~\ref{app:clifford-complexity-proof}, after replacing the Clifford
subscript by the Haar subscript. Indeed, the first three variance terms coincide
with the single-qubit Clifford case by the shared $3$-design calculation in
Appx.~\ref{app:clifford-first-three-terms}. For the fourth term,
Thm.~\ref{thm:haar-local-twirl-of-clifford}, together with
Eq.~\eqref{eq:haar-vterms-v4}, gives the same worst-case upper bound as in the
Clifford proof. Thus the Chebyshev bound and the optimization over $N_M$ are
identical to those in Appx.~\ref{app:clifford-complexity-proof}, yielding
Eq.~\eqref{eq:haar-sample-main} with the same optimizing scale
$N_M=\Theta(2^{n/2})$.

\appsubsection{The bound in Cor.~\ref{lem:haar-complexity} is not saturated}
\label{appx:sec:universal-upper-bound-not-achievable}

We now explain that the Haar sufficient-copy upper bound in Cor.~\ref{lem:haar-complexity} should be viewed only as a comparison
bound rather than a sharp sample complexity characterization. 
In particular, the underlying fourth-moment comparison is already not saturated for $n=1$ and $n=2$.

The key input is the operator identity~\eqref{eq:haar-local-twirl-of-clifford-identity} 
in Thm.~\ref{thm:haar-local-twirl-of-clifford}. 
Applying this identity to a pure state $\psi=\ketbra{\psi}$ gives the scalar
relation
\begin{align}
\tr\left[
R_{4,\mathrm H}^{\otimes n}\psi^{\ox4}
\right]
&=
\bE_{\bm W\sim\nu^{\otimes n}}
\tr\left[
(\bm W^{\otimes4})^\dagger
R_{4,\Cl}^{\otimes n}
\bm W^{\otimes4}
\psi^{\ox4}
\right]
\notag\\
&=
\bE_{\bm W\sim\nu^{\otimes n}}
\tr\left[
R_{4,\Cl}^{\otimes n}
(\bm W\psi\bm W^\dagger)^{\ox4}
\right]
\notag\\
&=
\bE_{\bm W\sim\nu^{\otimes n}}\!\left[
\tr\left[
R_{4,\Cl}^{\otimes n}
(\bm W\psi\bm W^\dagger)^{\ox4}
\right]
\right],
\end{align}
where $\bm W=\bigotimes_{\ell=1}^n W_\ell$ and each $W_\ell\sim\nu$.
This consequence of the operator identity yields a necessary condition for equality. Indeed, if a pure state $\psi$ were to satisfy
\(\tr\left[R_{4,\mathrm H}^{\otimes n}\psi^{\ox4}\right]=(3/2)^n\), then the random variable
\begin{align}
\tr\left[
R_{4,\Cl}^{\otimes n}
(\bm W\psi\bm W^\dagger)^{\ox4}
\right]
\end{align}
would have expectation equal to its pointwise upper bound $(3/2)^n$. Therefore it must equal $(3/2)^n$ for almost every $\bm W$. Since the map
\begin{align}
\bm W\mapsto
\tr\left[
R_{4,\Cl}^{\otimes n}
(\bm W\psi\bm W^\dagger)^{\ox4}
\right]
\end{align}
is continuous, it follows that in fact
\begin{align}\label{eq:haar-saturation-necessary-condition}
\tr\left[
R_{4,\Cl}^{\otimes n}
(\bm W\psi\bm W^\dagger)^{\ox4}
\right]
=
\left(\frac32\right)^n
\end{align}
for all local unitaries $\bm W$. Thus, Haar saturation would require that the \emph{entire local unitary orbit} of $\psi$ consists of Clifford maximizers. 
This is a highly rigid condition, and the two cases below already show that the universal upper bound is not generally saturated.

\vspace*{0.1in}
\textbf{The single-qubit case.} 
Let $\psi=\ketbra{\psi}$ be an arbitrary one qubit pure state. 
Since $\psi^{\ox 4}$ is invariant under every replica permutation, then for arbitrary $\pi\in S_4$,
\begin{align}
\Pi_\pi\,\psi^{\ox 4}\,\Pi_\pi^\dagger=\psi^{\ox 4},
\qquad
\tr\left[\Pi_\pi\psi^{\ox 4}\right]=1.
\end{align}
Eq.~\eqref{appx:eq:Haar-commutant-cycle-notation} gives
\begin{align}\label{appx:eq:the-single-qubit-case-Haar}
\max_{\ket{\psi}\in\bC^2}\tr\left[R_{4,\mathrm{H}}\psi^{\ox 4}\right]
= \frac15+\frac15+\frac15 + 3\cdot\frac35 - 4\cdot\frac3{10}
= \frac65
<
\frac32.
\end{align}
Therefore the upper bound $(3/2)^n$ is already not saturated for $n=1$.

\vspace*{0.1in}
\textbf{The two-qubit case.} 
For two qubits, local unitary invariance allows us to parameterize every bipartite pure state in the Schmidt form
\begin{align}
\ket{\psi_\lambda}
=
\sqrt{\lambda}\ket{00}
+
\sqrt{1-\lambda}\ket{11},
\qquad
0\le \lambda\le 1.
\end{align}
Let $\psi_\lambda:=\ketbra{\psi_\lambda}$. A direct Pauli-basis computation gives
\begin{align}
\wh\psi_\lambda(II) = \wh\psi_\lambda(ZZ) = 1,\quad
\wh\psi_\lambda(ZI) = \wh\psi_\lambda(IZ) = 2\lambda-1,\quad
\wh\psi_\lambda(XX) = - \wh\psi_\lambda(YY) = 2\sqrt{\lambda(1-\lambda)},
\end{align}
with all other Pauli coefficients equal to zero. Substituting these coefficients into the exact Pauli formula for the Haar fourth moment yields
\begin{align}
\tr\left[
R_{4,\mathrm H}^{\otimes2}\psi_\lambda^{\ox4}
\right]
&=
\frac{36}{25}
-
\frac{58}{25}\lambda(1-\lambda)
+
\frac{236}{25}\lambda^2(1-\lambda)^2.
\end{align}
Let $t:=\lambda(1-\lambda)\in[0,1/4]$, we obtain
\begin{align}
\tr\left[
R_{4,\mathrm H}^{\otimes2}\psi_\lambda^{\ox4}
\right]
&=
\frac{36}{25}
-
\frac{58}{25}t
+
\frac{236}{25}t^2.
\end{align}
This is a convex quadratic function of $t$, so its maximum over $[0,1/4]$ is attained at an endpoint. Therefore
\begin{align}
\max_{\lambda\in[0,1]}
\tr\left[
R_{4,\mathrm H}^{\otimes2}\psi_\lambda^{\ox4}
\right]
=
\max\left\{
\frac{36}{25},
\frac{29}{20}
\right\}
=
\frac{29}{20}.
\end{align}
Hence
\begin{align}
\max_{\ket{\psi}\in(\bC^2)^{\otimes 2}}
\tr\left[
R_{4,\mathrm H}^{\otimes2}\psi^{\ox4}
\right]
=
\frac{29}{20}
<
\left(\frac32\right)^2.
\end{align}
The maximum is attained by a Bell state, whereas product states attain only $(6/5)^2=36/25$.

\appsubsection{Discussions on Conj.~\ref{conj:Haar-complexity}}

This subsection collects analytical evidence for Conj.~\ref{conj:Haar-complexity}: we first record the product-state fourth-moment benchmark, then verify the conjectured sample complexity scaling on several structured state families, and finally explain the competing effects of entanglement on the second- and fourth-moment contributions.

\appsubsubsection{Product state benchmark}

We first prove Eq.~\eqref{eq:haar-product-fourth-main} of MT.
Let
\begin{align}
\ket{\psi}=\bigotimes_{\ell=1}^n\ket{\psi_\ell}
\end{align}
be an arbitrary pure product state. Since the Haar fourth-moment operator is
local across qubits, the expectation factorizes:
\begin{align}
\tr\left[R_{4,\mathrm H}^{\otimes n}\psi^{\ox4}\right]
=
\prod_{\ell=1}^n
\tr\left[R_{4,\mathrm H}\psi_\ell^{\ox4}\right].
\end{align}
For every single-qubit pure state $\psi_\ell$, Eq.~\eqref{appx:eq:the-single-qubit-case-Haar}
gives
\begin{align}
\tr\left[R_{4,\mathrm H}\psi_\ell^{\ox4}\right]
=
\frac65.
\end{align}
Therefore every pure product state satisfies
\begin{align}
\tr\left[R_{4,\mathrm H}^{\otimes n}\psi^{\ox4}\right]
=
\left(\frac65\right)^n.
\end{align}
Taking the maximum over pure product states gives
Eq.~\eqref{eq:haar-product-fourth-main}.

\appsubsubsection{Supporting evidence for Conj.~\ref{conj:Haar-complexity}}
\label{app:haar-special-cases}

We now verify the conjectured Haar sample complexity scaling on several
representative classes. For a state pair $(\rho,\sigma)$ and
$\mathrm E\in\{\Cl,\mathrm H\}$, write
\begin{align}
A_n(\rho,\sigma)
&:= \tr\left[(2\bI+\bF)^{\otimes n}(\rho\ox\sigma)\right],\label{eq:A-n} \\
C_n(\rho,\sigma)
&:= \tr\left[R_{AA'B}^{\otimes n}(\rho\ox\rho\ox\sigma)
\right] + \tr\left[R_{ABB'}^{\otimes n}(\rho\ox\sigma\ox\sigma)\right],
\label{eq:C-n}\\
B_{n,\mathrm E}(\rho,\sigma)
&:=
\tr\left[
R_{4,\mathrm E}^{\otimes n}
(\rho\ox\sigma\ox\rho\ox\sigma)
\right].\label{eq:B-n}
\end{align}
For $\mathrm E=\mathrm H$, by Eqs.~\eqref{eq:shared-vterms-v2} and
\eqref{eq:haar-vterms-v4}, these quantities are the state-dependent
coefficients of the Haar second, third, and fourth variance terms:
\begin{align}
\mathbb{V}^{(2)}_{\mathrm H}
&= \frac{A_n(\rho,\sigma)}{N_M^2}, \\
\mathbb{V}^{(3)}_{\mathrm H}
&= \frac{N_M-1}{N_M^2}C_n(\rho,\sigma), \\
\mathbb{V}^{(4)}_{\mathrm H}
&=
\left(\frac{N_M-1}{N_M}\right)^2
B_{n,\mathrm H}(\rho,\sigma).
\label{eq:A-B-variance-relation}
\end{align}
For Haar sampling, Eqs.~\eqref{eq:sample complexity-from-variance},
\eqref{eq:general-vterms}, and \eqref{eq:haar-vterms-v4} imply the
state-dependent sufficient condition
\begin{align}
N_{\star,\mathrm H}(\rho,\sigma;N_M)
\le
\frac{c_0}{\delta\ve^2}
\left[
\frac{A_n(\rho,\sigma)}{N_M}
+2\left(\frac74\right)^n
+N_M B_{n,\mathrm H}(\rho,\sigma)
\right]
\label{eq:haar-state-dependent-certificate}
\end{align}
for a universal constant $c_0>0$. Hence any class satisfying
\begin{align}
A_n(\rho,\sigma)B_{n,\mathrm H}(\rho,\sigma)
\le
\left(\frac{18}{5}\right)^n
\label{eq:haar-conjecture-direct-certificate}
\end{align}
has the conjectured scaling in that class: choosing
$N_M=\Theta(\sqrt{A_n(\rho,\sigma)/B_{n,\mathrm H}(\rho,\sigma)})$ gives
\begin{align}
N_{\star,\mathrm H}(\rho,\sigma)
=
\cO\!\left(\frac{\sqrt{3.6^n}}{\delta\ve^2}\right),
\end{align}
because $(7/4)^n$ is exponentially smaller than $(18/5)^{n/2}$.
Thus the examples below support Conj.~\ref{conj:Haar-complexity} directly by
proving Eq.~\eqref{eq:haar-conjecture-direct-certificate}.

\vspace*{0.2in}
\appsubsubsubsection{Product state pairs}

\begin{proposition}
\label{prop:haar-product-pair-complexity}
For arbitrary product state pairs
\begin{align}
\rho=\rho_1\ox\cdots\ox\rho_n,
\qquad
\sigma=\sigma_1\ox\cdots\ox\sigma_n,
\end{align}
one has
\begin{align}
A_n(\rho,\sigma)B_{n,\mathrm H}(\rho,\sigma)
\le
\left(\frac{18}{5}\right)^n.
\end{align}
Therefore product state pairs satisfy the sample complexity scaling in
Conj.~\ref{conj:Haar-complexity}. Equality is attained by identical pure
product states.
\end{proposition}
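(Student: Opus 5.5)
Since both operators $(2\bI+\bF)^{\otimes n}$ and $R_{4,\mathrm H}^{\otimes n}$ are tensor products of single-qubit operators, and $\rho,\sigma$ are product states, both $A_n$ and $B_{n,\mathrm H}$ factorize over qubits:
\begin{align}
A_n(\rho,\sigma)B_{n,\mathrm H}(\rho,\sigma)=\prod_{\ell=1}^n A_1(\rho_\ell,\sigma_\ell)\,B_{1,\mathrm H}(\rho_\ell,\sigma_\ell).
\end{align}
The plan is therefore to reduce everything to the single-qubit inequality $A_1(\rho_1,\sigma_1)B_{1,\mathrm H}(\rho_1,\sigma_1)\le 18/5$ for arbitrary one-qubit states, with equality for $\rho_1=\sigma_1$ pure. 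Note that each factor is nonnegative: $A_1\ge 0$ since $2\bI+\bF\ge 0$, and $B_{1,\mathrm H}=\bE_V[\mu_V^2]\ge0$. So the per-qubit bounds can be multiplied.

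For the single-qubit computation, write $\rho_1=\tfrac12(\bI+\bm r\cdot\bsigma)$ and $\sigma_1=\tfrac12(\bI+\bm s\cdot\bsigma)$ with $|\bm r|,|\bm s|\le1$. Then $A_1=2+\tr[\rho_1\sigma_1]=\tfrac52+\tfrac12\,\bm r\cdot\bm s$. For $B_{1,\mathrm H}$, I will use $\mu_V=\tr[O_1(V)(\rho_1\ox\sigma_1)]=\tfrac12+\tfrac32(\bm n\cdot\bm r)(\bm n\cdot\bm s)$, where $\bm n$ is uniform on the Bloch sphere. Squaring and averaging with the sphere moments $\bE[n_in_j]=\delta_{ij}/3$ and $\bE[n_in_jn_kn_l]=(\delta_{ij}\delta_{kl}+\delta_{ik}\delta_{jl}+\delta_{il}\delta_{jk})/15$ gives
\begin{align}
B_{1,\mathrm H}=\tfrac14+\tfrac12\,\bm r\cdot\bm s+\tfrac{3}{20}\bigl(|\bm r|^2|\bm s|^2+2(\bm r\cdot\bm s)^2\bigr).
\end{align}
As a check, $\bm r=\bm s$ a unit vector gives $B_{1,\mathrm H}=\tfrac14+\tfrac12+\tfrac{9}{20}=\tfrac65$, matching Eq.~\eqref{appx:eq:the-single-qubit-case-Haar}, and $A_1=3$, so the product is $18/5$.

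The main obstacle is maximizing $A_1B_{1,\mathrm H}$ over the pair of unit balls, and I would handle it by monotonicity in two variables. Let $c=\bm r\cdot\bm s\in[-1,1]$. Since $|\bm r|^2|\bm s|^2\le1$, we have $B_{1,\mathrm H}\le \tfrac14+\tfrac12 c+\tfrac3{20}(1+2c^2)=:b(c)$. Here $b(c)>0$ on $[-1,1]$ (its minimum is $0.4-\tfrac{5}{24}>0$), and $A_1=\tfrac52+\tfrac c2>0$. It then suffices to show that $a(c)b(c)$, with $a(c)=\tfrac52+\tfrac c2$, is increasing on $[-1,1]$. Its derivative is
\begin{align}
\tfrac12 b(c)+a(c)\bigl(\tfrac12+\tfrac35c\bigr).
\end{align}
For $c\ge -5/6$ this is clearly positive. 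On $[-1,-5/6]$ it is a cubic-free quadratic expression, and a direct check is needed, e.g.\ at $c=-1$ it equals $0.15+2(-0.1)=-0.05$. That value is negative, so the derivative fails to stay positive near $c=-1$. Monotonicity therefore does not hold everywhere, and I would instead simply bound the product directly. Since $a$ and $b$ are both at most their values at $c=1$ whenever $c\ge0$, we get $ab\le 3\cdot\tfrac65$ for $c\ge0$. For $c<0$ we have $a\le\tfrac52$ and $b\le b(0)$ or $b(-1)=0.3$, both below $0.4+$, so $ab\le \tfrac52\cdot 0.4=1<18/5$.

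Finally, the factorized product gives $A_nB_{n,\mathrm H}\le(18/5)^n$, with equality for identical pure product states. The sample-complexity statement then follows from Eq.~\eqref{eq:haar-conjecture-direct-certificate}.
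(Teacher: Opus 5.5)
Your proof is correct and follows the paper's own argument: factorize over qubits, use the Bloch-vector formula $B_{1,\mathrm H}=\tfrac14+\tfrac c2+\tfrac{3}{20}|\bm r|^2|\bm s|^2+\tfrac{3}{10}c^2$ obtained from the spherical moments, relax $|\bm r|\,|\bm s|\le 1$, and maximize over $c=\bm r\cdot\bm s\in[-1,1]$. You carry out that one-variable maximization explicitly, splitting into $c\ge0$ and $c<0$ and using convexity of $b$, where the paper only asserts it; you have one harmless arithmetic slip, $b(-1)=\tfrac25-\tfrac12+\tfrac3{10}=\tfrac15$ rather than $0.3$ (so the derivative at $c=-1$ is $-0.1$), which does not affect your bound $b\le\tfrac25$ on $[-1,0]$.
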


\begin{proof}
Both $A_n$ and $B_{n,\mathrm H}$ factorize on product state pairs, so it
suffices to prove the one-qubit bound. Write
\begin{align}
\rho_1=\frac12(\bI+\bm r\cdot\bsigma),
\qquad
\sigma_1=\frac12(\bI+\bm s\cdot\bsigma),
\end{align}
with $\|\bm r\|,\|\bm s\|\le 1$, and set
$x:=\bm r\cdot\bm s$ and $q:=\|\bm r\|\,\|\bm s\|$. Then
\begin{align}
A_1(\rho_1,\sigma_1)
=
2+\tr\left[\rho_1\sigma_1\right]
=
\frac{5+x}{2}.
\end{align}
For the Haar fourth-moment coefficient, measuring along a Haar-random Bloch
direction $\bm n$ gives the conditional kernel expectation
\begin{align}
\frac12+\frac32(\bm n\cdot\bm r)(\bm n\cdot\bm s).
\end{align}
Using the spherical moments
$\bE[n_i n_j]=\delta_{ij}/3$ and
$\bE[n_i n_j n_k n_l]
=(\delta_{ij}\delta_{kl}+\delta_{ik}\delta_{jl}+\delta_{il}\delta_{jk})/15$,
we obtain
\begin{align}
B_{1,\mathrm H}(\rho_1,\sigma_1)
=
\frac14+\frac{x}{2}+\frac{3}{20}q^2+\frac{3}{10}x^2.
\end{align}
Since $q\le 1$ and $-1\le x\le 1$,
\begin{align}
A_1(\rho_1,\sigma_1)B_{1,\mathrm H}(\rho_1,\sigma_1)
\le
\frac{5+x}{2}
\left(
\frac25+\frac{x}{2}+\frac{3x^2}{10}
\right)
\le
\frac{18}{5},
\end{align}
where the last step is an elementary one-variable maximization over
$x\in[-1,1]$, with maximum attained at $x=1$. Equality requires $q=1$ and
$x=1$, hence $\rho_1=\sigma_1$ is pure. Multiplying the one-qubit bounds over
all qubits gives the claim.
\end{proof}

\vspace*{0.2in}
\appsubsubsubsection{\texorpdfstring{$n=1$}{n=1}}

\begin{proposition}
\label{prop:haar-one-qubit-complexity}
For $n=1$, all state pairs satisfy
\begin{align}
A_1(\rho,\sigma)B_{1,\mathrm H}(\rho,\sigma)
\le
\frac{18}{5}.
\end{align}
Therefore Conj.~\ref{conj:Haar-complexity} holds in the complete one-qubit
case. Equality is attained by identical pure one-qubit states.
\end{proposition}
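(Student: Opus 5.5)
For $n=1$ every state pair is automatically a product state pair, since a single-qubit density operator is trivially a one-fold tensor product. The plan is therefore to observe that Prop.~\ref{prop:haar-product-pair-complexity} already covers this case and simply specialize it. Concretely, write $\rho=\frac12(\bI+\bm r\cdot\bsigma)$ and $\sigma=\frac12(\bI+\bm s\cdot\bsigma)$ with $\|\bm r\|,\|\bm s\|\le1$. The one-qubit formulas derived in that proof then apply verbatim:
\begin{align}
A_1(\rho,\sigma)=\frac{5+x}{2},\qquad
B_{1,\mathrm H}(\rho,\sigma)=\frac14+\frac{x}{2}+\frac{3}{20}q^2+\frac{3}{10}x^2,
\end{align}
where $x=\bm r\cdot\bm s$ and $q=\|\bm r\|\|\bm s\|$. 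No mixed-state or entanglement issue arises, because the operators act on a single qubit.

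The only genuine computation is the one-variable maximization that Prop.~\ref{prop:haar-product-pair-complexity} calls elementary, so we would make it explicit. Using $q\le1$, we get $B_{1,\mathrm H}\le \frac25+\frac x2+\frac{3x^2}{10}=:b(x)$. Two facts about $b$ are needed:
\begin{itemize}
\item $b$ is positive on $[-1,1]$: its discriminant is $\frac14-\frac{12}{25}<0$.
\item $A_1(\rho,\sigma)=\frac{5+x}{2}>0$ on $[-1,1]$.
\end{itemize}
Together these justify multiplying the two inequalities. Next consider $\phi(x)=\frac{5+x}{2}\,b(x)$, a cubic with nonnegative coefficients for the $x$, $x^2$ and $x^3$ terms:
\begin{align}
\phi(x)=1+\frac{29}{20}x+\frac{7}{10}x^2+\frac{3}{20}x^3 .
\end{align}
Its derivative $\frac{29}{20}+\frac75x+\frac{9}{20}x^2$ has discriminant $\frac{49}{25}-4\cdot\frac{29}{20}\cdot\frac{9}{20}<0$. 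Hence $\phi$ is increasing, and its maximum on $[-1,1]$ is $\phi(1)=3\cdot\frac65=\frac{18}{5}$.

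For the equality statement, $\phi(1)=18/5$ forces $x=1$, and the $q^2$ bound must also be tight, which forces $q=1$. Since $x\le q\le1$, this gives $\bm r=\bm s$ with unit norm, i.e.\ identical pure states. Conversely, identical pure states give $A_1=3$ and $B_{1,\mathrm H}=6/5$ by Eq.~\eqref{appx:eq:the-single-qubit-case-Haar}.

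Finally, the consequence for Conj.~\ref{conj:Haar-complexity} follows from the certificate Eq.~\eqref{eq:haar-conjecture-direct-certificate} together with Eq.~\eqref{eq:haar-state-dependent-certificate}. There is no real obstacle here; the one step needing care is checking the positivity of $b$ and $A_1$ so that the product bound is legitimate.
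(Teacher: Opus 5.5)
Your overall route matches the paper's: its proof cites Prop.~\ref{prop:haar-product-pair-complexity} for the upper bound and checks $A_1=3$, $B_{1,\mathrm H}=6/5$ on identical pure states via the commutant expansion. The gap is in the one step you made explicit, the one-variable maximization. Expanding $\frac{5+x}{2}\bigl(\frac25+\frac x2+\frac{3x^2}{10}\bigr)$ gives
\[
\phi(x)=1+\frac{29}{20}x+x^2+\frac{3}{20}x^3,
\]
so the $x^2$ coefficient is $1$, not $\frac{7}{10}$. Your cubic gives $\phi(1)=\frac{33}{10}$, which contradicts your own value $\frac{18}{5}$. With the correct coefficient, $\phi'(x)=\frac{29}{20}+2x+\frac{9}{20}x^2$ has discriminant $4-\frac{261}{100}>0$. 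It has a root at $x=(-20+\sqrt{139})/9\approx-0.91$ inside $[-1,1]$, and $\phi'(-1)=-\frac{1}{10}<0$. So $\phi$ is not increasing on $[-1,1]$. Both your maximization and your equality step (``$\phi(1)=18/5$ forces $x=1$'', which relied on strict monotonicity) therefore rest on a false claim.

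The conclusion is still true, and the repair is short. Since $\phi''(x)=2+\frac{9}{10}x>0$ on $[-1,1]$, $\phi$ is convex there. Hence $\phi(x)\le\frac{1-x}{2}\phi(-1)+\frac{1+x}{2}\phi(1)$, with $\phi(-1)=\frac25$ and $\phi(1)=\frac{18}{5}$, so $\phi(x)<\frac{18}{5}$ for every $x<1$. Alternatively, factor directly: $\frac{18}{5}-\phi(x)=\frac{(1-x)(3x^2+23x+52)}{20}$, and the quadratic factor has discriminant $529-624<0$. This difference is therefore nonnegative on $[-1,1]$ and vanishes only at $x=1$. With either fix, the rest of your equality argument goes through: $x=1$ together with $x\le q\le 1$ forces $q=1$, and then $\bm r=\bm s$ with unit norm. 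A minor point: positivity of $b$ is not needed to multiply the inequalities, since $A_1\ge 0$ alone suffices.
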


\begin{proof}
The one-qubit case is included in Prop.~\ref{prop:haar-product-pair-complexity},
so the upper bound holds for all one-qubit state pairs. Conversely, for any
pure one-qubit state $\psi$,
\begin{align}
A_1(\psi,\psi)
=
\tr\left[(2\bI+\bF)\psi^{\otimes 2}\right]
=
2+\tr\left[\psi^2\right]
=
3.
\end{align}
Using Eq.~\eqref{appx:eq:Haar-commutant-cycle-notation}, every replica
permutation has expectation one on $\psi^{\otimes 4}$, and therefore
\begin{align}
B_{1,\mathrm H}(\psi,\psi)
=
\frac15+\frac15+\frac15
+3\cdot\frac35
-4\cdot\frac3{10}
=
\frac65.
\end{align}
Thus
\begin{align}
A_1(\psi,\psi)B_{1,\mathrm H}(\psi,\psi)
=
\frac{18}{5},
\end{align}
which proves the reverse inequality and the claim.
\end{proof}

\vspace*{0.2in}
\appsubsubsubsection{\texorpdfstring{$n=2$}{n=2}}

\begin{proposition}
\label{prop:haar-two-qubit-identical-pure-complexity}
For identical pure two-qubit state pairs,
\begin{align}
\max_{\ket{\psi}\in(\bC^2)^{\otimes2}}
A_2(\psi,\psi)B_{2,\mathrm H}(\psi,\psi)
=
\left(\frac{18}{5}\right)^2,
\end{align}
and the maximum is attained by product states. Therefore the identical pure
two-qubit sector satisfies the sample complexity scaling in
Conj.~\ref{conj:Haar-complexity}.
\end{proposition}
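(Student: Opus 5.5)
The plan is to reduce to the Schmidt-parameterized family already used in Appx.~\ref{appx:sec:universal-upper-bound-not-achievable} and then carry out a one-variable polynomial analysis.

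\textbf{Step 1: invariance.} Both factors are invariant under local unitaries $W_1\ox W_2$.
\begin{itemize}
\item For $A_2$, the operator $2\bI+\bF$ on each qubit pair $A_\ell B_\ell$ commutes with $W_\ell\ox W_\ell$.
\item For $B_{2,\mathrm H}$, invariance follows from Thm.~\ref{thm:haar-local-twirl-of-clifford}, which writes $R_{4,\mathrm H}$ as a Haar twirl. It can also be seen directly from Eq.~\eqref{appx:eq:Haar-commutant-expansion}, since permutation operators commute with $W^{\ox4}$.
\end{itemize}
Hence it suffices to take $\ket{\psi_\lambda}=\sqrt\lambda\ket{00}+\sqrt{1-\lambda}\ket{11}$ and set $t=\lambda(1-\lambda)\in[0,1/4]$.

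\textbf{Step 2: the two factors in terms of $t$.} The two-qubit Haar computation in Appx.~\ref{appx:sec:universal-upper-bound-not-achievable} gives
\begin{align}
B_{2,\mathrm H}(\psi_\lambda,\psi_\lambda)=\frac{36-58t+236t^2}{25}.
\end{align}
For $A_2$, I will expand $(2\bI+\bF)^{\ox2}=\sum_{S\subseteq[2]}2^{2-|S|}\bF_S$ and use $\tr[\bF_S(\psi\ox\psi)]=\tr[\psi_S^2]$. The purities are $\tr[\psi^2]=1$ and $\tr[\psi_{1}^2]=\tr[\psi_{2}^2]=\lambda^2+(1-\lambda)^2=1-2t$. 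This gives
\begin{align}
A_2(\psi_\lambda,\psi_\lambda)=4+2\cdot2(1-2t)+1=9-8t.
\end{align}

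\textbf{Step 3: the polynomial inequality.} The product is
\begin{align}
25\,A_2B_{2,\mathrm H}=(9-8t)(36-58t+236t^2)=324-810t+2588t^2-1888t^3.
\end{align}
The claim is therefore equivalent to $t\,q(t)\le0$ on $[0,1/4]$, where $q(t):=-810+2588t-1888t^2$. The quadratic $q$ is concave with vertex at $t=2588/3776>1/4$, so it is increasing on $[0,1/4]$. Its value at $t=1/4$ is $-810+647-118=-281<0$, so $q<0$ throughout the interval. Thus the maximum $324/25=(18/5)^2$ is attained only at $t=0$, i.e.\ $\lambda\in\{0,1\}$, which corresponds to product states.

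This matches Prop.~\ref{prop:haar-product-pair-complexity} and also shows that the Bell state is beaten. At $t=1/4$ the product is $7\cdot\frac{29}{20}=10.15<12.96$.

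\textbf{Main obstacle.} The only delicate input is the closed form of $B_{2,\mathrm H}$ on $\psi_\lambda$. I would re-derive it by evaluating the Pauli coefficients listed earlier against the single-qubit kernel obtained from Eq.~\eqref{appx:eq:Haar-commutant-expansion}, checking the endpoints $t=0$ (value $36/25$) and $t=1/4$ (value $29/20$). The remaining steps are elementary algebra.
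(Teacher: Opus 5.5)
Your proposal is correct and follows the paper's proof: the same Schmidt-form reduction, the same closed forms $A_2=9-8t$ and $B_{2,\mathrm H}=(36-58t+236t^2)/25$, and a one-variable analysis of the resulting cubic. The only difference is in that last step: you factor $25A_2B_{2,\mathrm H}-324=t\,q(t)$ and show $q<0$ on $[0,1/4]$, whereas the paper shows the cubic is convex there and compares the endpoint values $324/25$ and $203/20$.
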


\begin{proof}
By local-unitary invariance, every pure two-qubit state
is equivalent to
\begin{align}
\ket{\psi_\lambda}
=
\sqrt{\lambda}\ket{00}
+
\sqrt{1-\lambda}\ket{11},
\qquad
0\le \lambda\le 1.
\end{align}
Set $t:=\lambda(1-\lambda)\in[0,1/4]$. The reduced one-qubit purities are
$1-2t$, and hence
\begin{align}
A_2(\psi_\lambda,\psi_\lambda)=9-8t.
\label{eq:appx-haar-schmidt-A2}
\end{align}
Direct substitution into the local Haar fourth moment gives
\begin{align}
B_{2,\mathrm H}(\psi_\lambda,\psi_\lambda)
=
\frac{36}{25}
-
\frac{58}{25}t
+
\frac{236}{25}t^2.
\label{eq:appx-haar-schmidt-B2}
\end{align}
Therefore
\begin{align}
A_2(\psi_\lambda,\psi_\lambda)B_{2,\mathrm H}(\psi_\lambda,\psi_\lambda)
=
\frac{324}{25}
-
\frac{162}{5}t
+
\frac{2588}{25}t^2
-
\frac{1888}{25}t^3.
\label{eq:appx-haar-schmidt-AB2}
\end{align}
The second derivative of the right-hand side is
\begin{align}
\frac{5176}{25}-\frac{11328}{25}t>0,
\qquad 0\le t\le \frac14.
\end{align}
Hence the cubic in Eq.~\eqref{eq:appx-haar-schmidt-AB2} is convex on $[0,1/4]$ and
attains its maximum at an endpoint. The endpoint values are
\begin{align}
\frac{324}{25}
\qquad\text{and}\qquad
\frac{203}{20}.
\end{align}
The larger value is $324/25$, attained at $t=0$, namely at product states.
This is $(18/5)^2$, proving the claim.
\end{proof}

\vspace*{0.2in}
\appsubsubsubsection{Biseparable states for \texorpdfstring{$n=3$}{n=3}}

\begin{proposition}
\label{prop:haar-three-qubit-biseparable-complexity}
For pure biseparable three-qubit states,
\begin{align}
\max_{\ket{\psi}\ \mathrm{biseparable}}
A_3(\psi,\psi)B_{3,\mathrm H}(\psi,\psi)
=
\left(\frac{18}{5}\right)^3.
\end{align}
The maximum is attained by fully product pure states. Therefore this sector
satisfies the sample complexity scaling in Conj.~\ref{conj:Haar-complexity}.
\end{proposition}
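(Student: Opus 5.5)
A pure biseparable three-qubit state is, up to a qubit permutation, of the form $\ket{\psi}=\ket{a}\ox\ket{\phi}$, with $\ket{a}$ a single-qubit pure state and $\ket{\phi}$ a two-qubit pure state. Both $A_3$ and $B_{3,\mathrm H}$ are built from tensor powers of single-qubit operators that are invariant under simultaneous qubit permutations across replicas. They are therefore invariant under qubit relabeling, so we may assume the singleton qubit is the first. (States where the singleton could be any of the three qubits are covered by the same reduction.)

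The operators $(2\bI+\bF)^{\otimes 3}$ and $R_{4,\mathrm H}^{\otimes 3}$ factorize across the cut $1|23$. Since the input is a product across this cut, both coefficients factorize:
\begin{align}
A_3(\psi,\psi)&=A_1(a,a)\,A_2(\phi,\phi),\nonumber\\
B_{3,\mathrm H}(\psi,\psi)&=B_{1,\mathrm H}(a,a)\,B_{2,\mathrm H}(\phi,\phi).\nonumber
\end{align}
Hence
\begin{align}
A_3B_{3,\mathrm H}=\bigl(A_1B_{1,\mathrm H}\bigr)(a,a)\cdot\bigl(A_2B_{2,\mathrm H}\bigr)(\phi,\phi).\nonumber
\end{align}
Both factors are nonnegative: $A_n>0$ because $2\bI+\bF\ge\bI$, and $B_{n,\mathrm H}=\bE_U[\mu_U^2]\ge0$. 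We can therefore bound each factor separately.

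For the first factor, Prop.~\ref{prop:haar-one-qubit-complexity} gives $A_1B_{1,\mathrm H}\le 18/5$, with equality for any pure single-qubit state; in fact $A_1=3$ and $B_{1,\mathrm H}=6/5$ for every pure $a$. For the second factor, Prop.~\ref{prop:haar-two-qubit-identical-pure-complexity} gives $A_2B_{2,\mathrm H}(\phi,\phi)\le(18/5)^2$, with equality exactly at $t=0$, i.e.\ at product $\ket{\phi}$. Multiplying yields $A_3B_{3,\mathrm H}\le(18/5)^3$.

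Equality holds when $\ket{\phi}$ is product, i.e.\ when $\ket{\psi}$ is fully product. This is consistent with Prop.~\ref{prop:haar-product-pair-complexity}, where identical pure product states give exactly $(18/5)^3$. This establishes the maximum and identifies the maximizers.

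The main point requiring care is the factorization step together with the nonnegativity of the factors. The latter is what allows us to multiply the upper bounds. The rest reduces directly to the earlier propositions, so no new computation is needed.
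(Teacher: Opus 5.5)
Your proposal is correct and follows the paper's proof. You reduce by a qubit permutation to a state that is product across a $1|2$ cut, factorize $A_3$ and $B_{3,\mathrm H}$ over the tensor factors, and invoke Props.~\ref{prop:haar-one-qubit-complexity} and~\ref{prop:haar-two-qubit-identical-pure-complexity}, with equality forcing the two-qubit factor to be product. Your explicit check that both factors are nonnegative, which is what allows the separate upper bounds to be multiplied, supplies a step the paper leaves implicit.
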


\begin{proof}
Any biseparable pure three-qubit state is, up to permutation of the parties,
of the form
\begin{align}
\ket{\psi}=\ket{\varphi_{AB}}\ox\ket{\chi_C},
\end{align}
where $\ket{\varphi_{AB}}$ is a pure two-qubit state and $\ket{\chi_C}$ is a
pure one-qubit state. Since both $A_n$ and $B_{n,\mathrm H}$ are defined by
local tensor-product operators, they factorize across tensor-product states:
\begin{align}
A_3(\psi,\psi)B_{3,\mathrm H}(\psi,\psi)
&=
A_2(\varphi_{AB},\varphi_{AB})B_{2,\mathrm H}(\varphi_{AB},\varphi_{AB})
A_1(\chi_C,\chi_C)B_{1,\mathrm H}(\chi_C,\chi_C).
\end{align}
By Props.~\ref{prop:haar-one-qubit-complexity} and
\ref{prop:haar-two-qubit-identical-pure-complexity}, this is at most
\begin{align}
\left(\frac{18}{5}\right)^2\frac{18}{5}
=
\left(\frac{18}{5}\right)^3.
\end{align}
Equality requires the two-qubit factor to be product, and hence the whole
three-qubit state to be fully product.
\end{proof}

\vspace*{0.2in}
\appsubsubsubsection{GHZ-family}

\begin{proposition}
\label{prop:haar-ghz}
For the standard GHZ family
\begin{align}
\ket{\mathrm{GHZ}_n}:=\frac{\ket{0^n}+\ket{1^n}}{\sqrt2},
\qquad
\rho_{\mathrm{GHZ},n}:=\ketbra{\mathrm{GHZ}_n},
\end{align}
one has
\begin{align}
A_n(\rho_{\mathrm{GHZ},n},\rho_{\mathrm{GHZ},n})
B_{n,\mathrm H}(\rho_{\mathrm{GHZ},n},\rho_{\mathrm{GHZ},n})
\le
\left(\frac{18}{5}\right)^n.
\end{align}
Therefore the GHZ family satisfies the sample complexity scaling in
Conj.~\ref{conj:Haar-complexity}.
\end{proposition}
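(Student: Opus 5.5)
The plan is to compute both factors $A_n$ and $B_{n,\mathrm H}$ in closed form for $\rho=\sigma=\rho_{\mathrm{GHZ},n}$, and then prove the scalar inequality in $n$ directly.

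\textbf{The factor $A_n$.} Expanding $(2\bI+\bF)^{\otimes n}=\sum_{S\subseteq[n]}2^{n-|S|}\bF_S$ gives
\begin{align}
A_n=\sum_{S}2^{n-|S|}\tr\left[\rho_S^2\right].
\end{align}
For GHZ, $\tr[\rho_S^2]=1$ when $S=\emptyset$ or $S=[n]$, and $\tr[\rho_S^2]=1/2$ otherwise, since every proper marginal is $\frac12(\ketbra{0^k}+\ketbra{1^k})$. Hence
\begin{align}
A_n=\frac{3^n+2^n+1}{2}.
\end{align}

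\textbf{The factor $B_{n,\mathrm H}$.} I would use the representation $B_{n,\mathrm H}=\bE_U[\mu_U(\psi,\psi)^2]$ together with the Walsh expansion from the proof of Prop.~\ref{prop:haar-v4-pure}:
\begin{align}
\mu_U(\psi,\psi)=2^{-n}\sum_{S}3^{|S|}\,\wh p_U^\psi(S)^2,\qquad \wh p_U^\psi(S)=\Big\langle \textstyle\bigotimes_{\ell\in S}\bm n_\ell\cdot\bsigma\Big\rangle_\psi,
\end{align}
where $\bm n_\ell$ are i.i.d.\ uniform Bloch vectors. For GHZ and $S\neq[n]$, only $Z$-strings of even weight survive, so
\begin{align}
\wh p(S)=\prod_{\ell\in S}n_{\ell,z}\ \text{ for }|S|\text{ even},\qquad \wh p(S)=0\ \text{ for }|S|\text{ odd}.
\end{align}
For $S=[n]$ there is an extra term:
\begin{align}
\wh p([n])=\mathbb{1}[n\text{ even}]\prod_\ell n_{\ell,z}+\mathrm{Re}\prod_\ell\big(n_{\ell,x}+in_{\ell,y}\big).
\end{align}
Writing $z_\ell:=n_{\ell,z}^2$, the sum over proper even subsets equals
\begin{align}
\frac12\Big[\prod_\ell(1+3z_\ell)+\prod_\ell(1-3z_\ell)\Big]
\end{align}
minus the top term, so $\mu_U$ is a sum of a few product-form random variables. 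Squaring and averaging then only needs independent one-qubit moments: $\bE z=1/3$, $\bE z^2=1/5$, and the mixed moments of $|n_x+in_y|^2=1-z$ and its phase. The phase of $\prod_\ell(n_{\ell,x}+in_{\ell,y})$ is uniform, which kills cross terms between $\mathrm{Re}\prod(\cdot)$ and functions of the $z_\ell$'s, except through $\mathrm{Re}(\cdot)^2$, which averages to $\frac12\prod_\ell(1-z_\ell)$. This yields $B_{n,\mathrm H}$ as an explicit finite combination of terms $c_j a_j^n$ with computable bases. For example, $\bE(1+3z)^2=1+2+9/5=24/5$ gives the base $24/5$, and the top-term correction involves bases such as $8/15$ times powers of $9$ with prefactor $4^{-n}$.

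\textbf{The scalar inequality.} Finally I would check
\begin{align}
A_nB_{n,\mathrm H}\le(18/5)^n
\end{align}
in two regimes. For small $n$ (say $n\le 3$) I would verify it directly; $n=2$ is already covered by the Bell endpoint $203/20<324/25$ of Prop.~\ref{prop:haar-two-qubit-identical-pure-complexity}. For larger $n$, the dominant term in $B_{n,\mathrm H}$ should scale like $\frac18(24/5)^n\cdot 4^{-n}=\frac18(6/5)^n$, with the two product terms and their cross term each contributing. Combined with $A_n\approx 3^n/2$, the leading product is about $\frac{1}{16}(18/5)^n$ up to constants, and the remaining terms carry exponentially smaller bases. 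So the inequality holds with a constant-factor margin, and I would bound each subleading term crudely by its base to close the argument for all $n\ge n_0$.

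\textbf{Main obstacle.} The main obstacle is the bookkeeping in $\bE[\mu_U^2]$: the parity restriction, the special top-subset term, and the correlations between the top term and the $z$-products. I expect the even/odd $n$ distinction to require separate handling, and I would treat the two parities as separate cases.
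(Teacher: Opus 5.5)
Your proposal is correct and yields exactly the paper's closed form. Your $A_n$ computation is the same as the paper's, but you compute $B_{n,\mathrm H}$ by a different route.

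The paper writes $\rho_{\mathrm{GHZ},n}=\frac12\sum_{a,b}(\ket{a}\!\bra{b})^{\otimes n}$ and regroups $\rho^{\otimes4}$ qubit by qubit. This gives $B_{n,\mathrm H}=\frac1{16}\sum_{\ba,\bb\in\{0,1\}^4}\bra{\bb}R_{4,\mathrm H}\ket{\ba}^n$, so the $n$-dependence is just $n$th powers of the entries of one fixed $16\times16$ matrix, and both parities come out at once. The cost is tabulating those entries, which the paper leaves implicit.

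You instead use the Walsh form of $\mu_U$ and the Pauli structure of GHZ. This reduces everything to $\bE z=\frac13$, $\bE z^2=\frac15$, and the azimuthal averages $\bE\cos^2\Phi=\frac12$, $\bE\cos^4\Phi=\frac38$, where $W:=\prod_\ell(n_{\ell,x}+in_{\ell,y})=e^{i\Phi}\prod_\ell\sqrt{1-z_\ell}$ and $\Phi$ is uniform and independent of the $n_{\ell,z}$. Set $Q:=\frac12\bigl[\prod_\ell(1+3z_\ell)+\prod_\ell(1-3z_\ell)\bigr]$ and let $T$ be the top-subset correction, so that $\mu_U=2^{-n}(Q+T)$. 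One finds
\begin{align*}
4^{-n}\bE[Q^2]&=\tfrac14\Bigl[\bigl(\tfrac65\bigr)^n+2\bigl(-\tfrac15\bigr)^n+\bigl(\tfrac15\bigr)^n\Bigr],\qquad
2\cdot4^{-n}\bE[QT]=\tfrac12\Bigl[\bigl(\tfrac45\bigr)^n+\bigl(\tfrac15\bigr)^n\Bigr],\\
4^{-n}\bE[T^2]&=\tfrac38\bigl(\tfrac65\bigr)^n+\Bigl[\bigl(\tfrac3{10}\bigr)^n+\bigl(-\tfrac3{10}\bigr)^n\Bigr].
\end{align*}
These sum to the paper's expression. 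Your route also shows where each base comes from. For example, the $(\pm\frac3{10})^n$ terms come from the diagonal--coherence cross term $2\prod_\ell n_{\ell,z}\,\mathrm{Re}\,W$ in the top Walsh coefficient, which is present only for even $n$.

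One correction to your heuristic: the top-subset term is not subleading. Indeed $4^{-n}9^n\bE[(\mathrm{Re}\,W)^4]=\frac38(\frac65)^n$, which is leading order, as your own remark about $(8/15)^n9^n4^{-n}$ already indicates. Also $Q^2$ contributes $\frac14(\frac65)^n$, not $\frac18$. So the leading coefficient of $B_{n,\mathrm H}$ is $\frac58$, and $A_nB_{n,\mathrm H}/(18/5)^n\to\frac5{16}$ rather than $\frac1{16}$. This still leaves a constant margin below one, so nothing breaks. As in the paper, you can bound each subleading ratio by its value at $n=2$ to get $A_nB_{n,\mathrm H}\le\frac{1015}{1296}(18/5)^n$ for all $n\ge2$ at once, with no separate small-$n$ regime.
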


\begin{proof}
For the GHZ state, every nontrivial proper subsystem has purity $1/2$, while
the empty subsystem and the full subsystem have purity one. Therefore
\begin{align}
A_n(\rho_{\mathrm{GHZ},n},\rho_{\mathrm{GHZ},n})
=
\frac{3^n+2^n+1}{2}.
\label{eq:haar-ghz-A}
\end{align}
For the fourth moment, write
\begin{align}
\rho_{\mathrm{GHZ},n}
=
\frac12\sum_{a,b\in\{0,1\}}
\left(\ket{a}\!\bra{b}\right)^{\otimes n}.
\end{align}
Then
\begin{align}
B_{n,\mathrm H}(\rho_{\mathrm{GHZ},n},\rho_{\mathrm{GHZ},n})
=
\frac1{16}\sum_{\ba,\bb\in\{0,1\}^4}m(\ba,\bb)^n,
\qquad
m(\ba,\bb):=\bra{\bb}R_{4,\mathrm H}\ket{\ba}.
\end{align}
A direct evaluation of this one-qubit matrix gives
\begin{align}
B_{n,\mathrm H}(\rho_{\mathrm{GHZ},n},\rho_{\mathrm{GHZ},n})
&=
\frac58\left(\frac65\right)^n
+\frac12\left(\frac45\right)^n
+\frac34\left(\frac15\right)^n
+\frac12\left(-\frac15\right)^n
\notag\\
&\qquad
+\left(\frac3{10}\right)^n
+\left(-\frac3{10}\right)^n.
\label{eq:haar-ghz-value}
\end{align}
For $n=1$, the GHZ state is a one-qubit pure product state and equality holds.
For $n\ge 2$, divide the product of Eqs.~\eqref{eq:haar-ghz-A} and
\eqref{eq:haar-ghz-value} by $(18/5)^n$. Since
$(2/3)^n\le 4/9$, $3^{-n}\le 1/9$, $(1/6)^n\le 1/36$, and
$(1/4)^n\le 1/16$, we obtain
\begin{align}
\frac{A_n(\rho_{\mathrm{GHZ},n},\rho_{\mathrm{GHZ},n})
B_{n,\mathrm H}(\rho_{\mathrm{GHZ},n},\rho_{\mathrm{GHZ},n})}{(18/5)^n}
&\le
\frac{7}{9}
\left(
\frac58+\frac29+\frac{5}{144}+\frac18
\right)
=
\frac{1015}{1296}
<1.
\end{align}
This proves the claimed bound.
\end{proof}

\vspace*{0.2in}
\appsubsubsubsection{W-family}

\begin{proposition}
\label{prop:haar-w}
For the $W$ family
\begin{align}
\ket{W_n}:=\frac{1}{\sqrt n}\sum_{\ell=1}^n \ket{0^{\ell-1}1\,0^{n-\ell}},
\qquad
\rho_{W,n}:=\ketbra{W_n},
\end{align}
one has
\begin{align}
A_n(\rho_{W,n},\rho_{W,n})B_{n,\mathrm H}(\rho_{W,n},\rho_{W,n})
\le
\left(\frac{18}{5}\right)^n.
\end{align}
Therefore the $W$ family satisfies the sample complexity scaling in
Conj.~\ref{conj:Haar-complexity}.
\end{proposition}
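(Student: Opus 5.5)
The plan follows the GHZ template: compute $A_n$ and $B_{n,\mathrm H}$ for $\rho_{W,n}$ in closed form, then show their product divided by $(18/5)^n$ is at most one.

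\textbf{Step 1: the second-moment factor $A_n$.} Expanding $(2\bI+\bF)^{\otimes n}=\sum_{S\subseteq[n]}2^{n-|S|}\bF_S$ gives $A_n=\sum_S 2^{n-|S|}\tr[\rho_S^2]$. For the $W$ state, the reduced state on a subset of size $k$ is
\begin{align}
\rho_S=\frac{n-k}{n}\ketbra{0^k}+\frac{k}{n}\ketbra{W_k}.
\end{align}
Its purity is therefore $\bigl((n-k)^2+k^2\bigr)/n^2$. Summing with binomial weights yields an explicit expression
\begin{align}
A_n=\sum_{k=0}^n\binom nk 2^{n-k}\,\frac{(n-k)^2+k^2}{n^2}.
\end{align}
This is $3^n$ times a quantity tending to $5/9$. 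Using the standard moments of a Binomial$(n,1/3)$ variable, it can be written as $3^n\bigl(5/9+c/n\bigr)$ for an explicit constant $c$.

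\textbf{Step 2: the fourth-moment factor $B_{n,\mathrm H}$.} Write $\ket{W_n}=n^{-1/2}\sum_\ell\ket{e_\ell}$. Then
\begin{align}
B_{n,\mathrm H}=n^{-4}\sum_{\ell_1,\dots,\ell_8}\prod_{q=1}^n\bra{\bm b_q}R_{4,\mathrm H}\ket{\bm a_q},
\end{align}
where each qubit carries a 4-bit ket pattern and a 4-bit bra pattern. A bit in a pattern is $1$ exactly on the replicas whose chosen index equals $q$. Qubits not selected by any index contribute $m(0000,0000)=\tr[R_{4,\mathrm H}\ketbra{0}^{\otimes4}]=6/5$.

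I would use particle-number conservation of the commutant, $m(\ba,\bb)=0$ unless $|\ba|=|\bb|$, together with the permutation expansion in Eq.~\eqref{appx:eq:Haar-commutant-cycle-notation}. With these, the sum organizes by the set $T$ of qubits carrying a $1$. This set has size at most $4$. The result is
\begin{align}
B_{n,\mathrm H}=n^{-4}\sum_{j=1}^{4}\binom nj\Bigl(\tfrac65\Bigr)^{n-j}\gamma_j,
\end{align}
where each constant $\gamma_j$ is a finite sum of products of one-qubit entries $m(\ba,\bb)$ with support on the qubits in $T$. So $B_{n,\mathrm H}=(6/5)^n\,p(n)/n^4$ with $p$ a polynomial of degree at most $4$ and explicitly computable coefficients. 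Its leading coefficient comes from four distinct qubits each carrying a single excitation, and should be at most $\bigl(\tfrac56\bigr)^4$ times a product of one-excitation matrix elements.

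\textbf{Step 3: the product.} Combining the two steps,
\begin{align}
\frac{A_nB_{n,\mathrm H}}{(18/5)^n}=\Bigl(\tfrac59+\tfrac cn\Bigr)\frac{p(n)}{n^4}.
\end{align}
I would bound this by an explicit rational function of $n$ and check it is at most $1$. For large $n$, the leading term $\tfrac59$ times the limit of $p(n)/n^4$ should be clearly below $1$. The finitely many small $n$ are checked directly. The case $n=1$ is a product state and gives equality; $n=2$ is Bell-like and is already covered by Prop.~\ref{prop:haar-two-qubit-identical-pure-complexity}.

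\textbf{Main obstacle.} The hard part is the bookkeeping of $\gamma_1,\dots,\gamma_4$. Each requires summing $R_{4,\mathrm H}$ matrix elements over excitation patterns distributed among replicas and qubits, and a crude bound may not give a ratio below $1$ uniformly in $n$. My first attempt would be the crude bound $|m(\ba,\bb)|\le\|R_{4,\mathrm H}\|_\infty$ on excited qubits, with the exact value $6/5$ on unexcited ones. This loses only polynomial factors against the exponential factor $(6/5)^n$ from the unexcited qubits. If that is still too lossy for moderate $n$, I would compute $\gamma_j$ exactly using the permutation expansion, where each $\Pi_\pi$ contributes a count of index coincidences.
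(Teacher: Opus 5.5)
Your route is the paper's. The purity formula gives $A_n=3^n(5n+4)/(9n)$ (so $c=4/9$), and grouping by the $r\le 4$ excited sites gives $B_{n,\mathrm H}=n^{-4}\sum_{r=1}^4 (n)_r(6/5)^{n-r}C_r$, which is your expansion with $\gamma_j=j!\,C_j$.

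The gap is that the proposal stops exactly where the proof's content lies. You never determine $\gamma_1,\dots,\gamma_4$, and the shortcut you propose for them fails asymptotically, not only for moderate $n$. Your justification that the crude bound ``loses only polynomial factors against the exponential factor $(6/5)^n$'' is the error: there is no exponential slack. The $(6/5)^n$ from unexcited qubits is exactly the $(6/5)^n$ in $(18/5)^n=3^n(6/5)^n$, and $A_n/3^n\to 5/9$. Hence $A_nB_{n,\mathrm H}/(18/5)^n\to\frac59\kappa$ with $\kappa:=\lim_n (5/6)^nB_{n,\mathrm H}=(5/6)^4C_4$, so the statement is a comparison of constants. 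The true value $\kappa=1561/5184$ gives the limiting ratio $7805/46656\approx 0.17$, a margin of only about a factor $6$. Bounding each excited-site entry by $\|R_{4,\mathrm H}\|_\infty=3$ (the spectrum of $R_{4,\mathrm H}$ is $\{3,\tfrac65,1,0,-1\}$) gives $\kappa\le 24\cdot 3^4(5/6)^4=937.5$, i.e.\ a ratio bound near $520$. Even bounding every single-excitation entry by the largest one, $4/5$, gives the limiting bound $\frac59\cdot 24\cdot(2/3)^4=1920/729>2$. You need the actual entries.

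So the exact enumeration is unavoidable, and the paper supplies it: $C_1=6/5$, $C_2=254/25$, $C_3=587/125$, $C_4=1561/2500$. Hence $B_{n,\mathrm H}=\frac{1561n^3+4722n^2+11483n-12582}{5184\,n^3}(6/5)^n$ and $A_nB_{n,\mathrm H}/(18/5)^n=(5n+4)(1561n^3+4722n^2+11483n-12582)/(46656\,n^4)$. This ratio equals $1$ at $n=1$. It is below $1$ for every $n\ge 2$ by the factorization $46656n^4-(5n+4)(1561n^3+\cdots)=(n-1)(38851n^3+8997n^2-67306n-50328)$, so no separate small-$n$ check is needed.

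The bookkeeping is lighter than you fear once you combine number conservation with the $X^{\otimes 4}$ invariance of $R_{4,\mathrm H}$. Only two blocks enter:
\begin{itemize}
\item the single-excitation block, with entries $-1/5$ on the diagonal, $4/5$ between $A_1,B_1$ and between $A_2,B_2$, and $3/10$ across the two pairs;
\item the $6\times 6$ two-excitation block.
\end{itemize}
For example, $C_4$ is the permanent of the $4\times 4$ single-excitation block. That is a sum over $24$ pairings, not the single product of one-excitation elements you describe. Likewise, $2C_2$ is the sum of the squared Frobenius norms of the one-, two- and three-excitation blocks.
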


\begin{proof}
For a subsystem of size $k$, the reduced state of $\rho_{W,n}$ has purity
$((n-k)^2+k^2)/n^2$. Summing the second-term contribution over subsystem sizes gives
\begin{align}
A_n(\rho_{W,n},\rho_{W,n})
=
\frac{1}{n^2}
\sum_{k=0}^n
\binom{n}{k}2^{n-k}\bigl((n-k)^2+k^2\bigr)
=
3^n\frac{5n+4}{9n}.
\label{eq:haar-w-A}
\end{align}
For the fourth moment, write
\begin{align}
\rho_{W,n}
=
\frac1n\sum_{i,j=1}^n \ket{e_i}\!\bra{e_j},
\end{align}
where $\ket{e_i}:=\ket{0^{i-1}1\,0^{n-i}}$. Grouping the eight replica indices
by the number $r$ of active sites gives
\begin{align}
B_{n,\mathrm H}(\rho_{W,n},\rho_{W,n})
=
\frac{1}{n^4}
\sum_{r=1}^4
(n)_r
\left(\frac65\right)^{n-r}
C_r,
\end{align}
where direct enumeration of the set partitions of the eight labels yields
\begin{align}
C_1=\frac65,
\qquad
C_2=\frac{254}{25},
\qquad
C_3=\frac{587}{125},
\qquad
C_4=\frac{1561}{2500}.
\end{align}
Therefore
\begin{align}
B_{n,\mathrm H}(\rho_{W,n},\rho_{W,n})
=
\frac{1561n^3+4722n^2+11483n-12582}{5184\,n^3}
\left(\frac65\right)^n.
\label{eq:haar-w-value}
\end{align}
Combining Eqs.~\eqref{eq:haar-w-A} and \eqref{eq:haar-w-value},
\begin{align}
\frac{A_n(\rho_{W,n},\rho_{W,n})B_{n,\mathrm H}(\rho_{W,n},\rho_{W,n})}{(18/5)^n}
=
\frac{(5n+4)(1561n^3+4722n^2+11483n-12582)}{46656\,n^4}.
\end{align}
The right-hand side is equal to one at $n=1$. For $n\ge2$,
\begin{align}
&46656n^4-(5n+4)(1561n^3+4722n^2+11483n-12582)
\notag\\
&\qquad=
(n-1)(38851n^3+8997n^2-67306n-50328)>0.
\end{align}
The cubic factor is positive for $n\ge2$: its derivative is positive on
$[2,\infty)$ and its value at $n=2$ is positive.
Hence the state-dependent product of the second- and fourth-moment coefficients is bounded by $(18/5)^n$, as claimed.
\end{proof}

\vspace*{0.2in}
\appsubsubsubsection{Bell-dimer family}

\begin{proposition}
\label{prop:haar-bell-dimer}
Let
$\ket{\Phi^+}:=(\ket{00}+\ket{11})/\sqrt2$ and
$\rho_{\mathrm{Bell}}:=\ketbra{\Phi^+}$. Define
\begin{align}
\rho_{\mathrm{BD},2m}:=\rho_{\mathrm{Bell}}^{\ox m},
\qquad
\rho_{\mathrm{BD},2m+1}:=\rho_{\mathrm{Bell}}^{\ox m}\ox\ketbra{\phi},
\end{align}
where $\ket{\phi}$ is an arbitrary one-qubit pure state. Then
\begin{align}
A_n(\rho_{\mathrm{BD},n},\rho_{\mathrm{BD},n})
&=
7^{\lfloor n/2\rfloor}3^{n\bmod 2},
\\
B_{n,\mathrm H}(\rho_{\mathrm{BD},n},\rho_{\mathrm{BD},n})
&=
\left(\frac{29}{20}\right)^{\lfloor n/2\rfloor}
\left(\frac65\right)^{n\bmod 2},
\\
A_n(\rho_{\mathrm{BD},n},\rho_{\mathrm{BD},n})
B_{n,\mathrm H}(\rho_{\mathrm{BD},n},\rho_{\mathrm{BD},n})
&\le
\left(\frac{18}{5}\right)^n.
\end{align}
Therefore the Bell-dimer family satisfies the sample complexity scaling in
Conj.~\ref{conj:Haar-complexity}.
\end{proposition}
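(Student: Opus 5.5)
The plan is to use the fact that both $A_n$ and $B_{n,\mathrm H}$ are traces of operators of the form $T^{\otimes n}$, with one tensor factor per qubit. These therefore factorize over any tensor-product decomposition of the input. For $\rho_{\mathrm{BD},n}=\rho_{\mathrm{Bell}}^{\ox m}$ (or with an extra factor $\ketbra{\phi}$), we have
\begin{align}
A_n=A_2(\rho_{\mathrm{Bell}},\rho_{\mathrm{Bell}})^{m}\,A_1(\phi,\phi)^{n\bmod 2},
\end{align}
and likewise for $B_{n,\mathrm H}$. This is the same factorization used in the proof of Prop.~\ref{prop:haar-three-qubit-biseparable-complexity}. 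The whole statement then reduces to the two-qubit Bell values and the one-qubit pure values.

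For the one-qubit factor, Prop.~\ref{prop:haar-one-qubit-complexity} already gives $A_1(\phi,\phi)=3$ and $B_{1,\mathrm H}(\phi,\phi)=6/5$. For the Bell factor, I would use the Schmidt parametrization from Prop.~\ref{prop:haar-two-qubit-identical-pure-complexity} at $\lambda=1/2$, i.e.\ $t=1/4$. Eq.~\eqref{eq:appx-haar-schmidt-A2} gives $A_2=9-2=7$. As a cross-check, the subsystem-purity expansion gives $4\cdot1+2\cdot2\cdot\tfrac12+1=7$. Eq.~\eqref{eq:appx-haar-schmidt-B2} gives
\begin{align}
B_{2,\mathrm H}=\frac{36}{25}-\frac{58}{100}+\frac{236}{400}=\frac{29}{20},
\end{align}
which matches the two-qubit computation in Appx.~\ref{appx:sec:universal-upper-bound-not-achievable}. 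Multiplying these factors yields the two closed-form expressions in the statement.

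For the inequality, the per-block product is $7\cdot\frac{29}{20}=\frac{203}{20}$, to be compared with $(18/5)^2=\frac{324}{25}$. Cross-multiplying, $203\cdot25=5075<6480=324\cdot20$, so each Bell block is strictly below the product-state value. The leftover qubit, when $n$ is odd, contributes exactly $3\cdot\frac65=\frac{18}{5}$. The total is therefore
\begin{align}
\left(\frac{203}{20}\right)^{\lfloor n/2\rfloor}\left(\frac{18}{5}\right)^{n\bmod 2}\le\left(\frac{18}{5}\right)^n,
\end{align}
with equality only when $m=0$. The conclusion about the conjectured scaling then follows from the certificate Eq.~\eqref{eq:haar-conjecture-direct-certificate}.

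The only nontrivial input is the value $B_{2,\mathrm H}(\rho_{\mathrm{Bell}},\rho_{\mathrm{Bell}})=29/20$, and it is taken from the earlier Schmidt-form formula. If that formula were not trusted, I would recompute it directly. Write $\rho_{\mathrm{Bell}}=\frac14(II+XX-YY+ZZ)$ and expand $B_{2,\mathrm H}=\bE_{\bm n_1,\bm n_2}[\mu^2]$ using the spherical moments from Prop.~\ref{prop:haar-product-pair-complexity}. That verification is the main, though routine, computational step.
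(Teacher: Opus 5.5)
Your proposal is correct and follows the paper's proof essentially step for step. Like the paper, you factorize $A_n$ and $B_{n,\mathrm H}$ over the Bell blocks and the leftover qubit, and evaluate the Schmidt-family formulas at $t=1/4$ to get $7$ and $29/20$. You use the one-qubit values $3$ and $6/5$, and you conclude from $7\cdot\frac{29}{20}=\frac{203}{20}<\left(\frac{18}{5}\right)^2$ together with $3\cdot\frac65=\frac{18}{5}$.
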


\begin{proof}
The Bell-dimer state is a tensor product across Bell pairs and, for odd $n$,
one additional one-qubit pure state. Hence both $A_n$ and $B_{n,\mathrm H}$
factorize over these tensor factors. For a Bell pair, Eqs.~\eqref{eq:appx-haar-schmidt-A2}
and \eqref{eq:appx-haar-schmidt-B2} at $t=1/4$ give
\begin{align}
A_2(\rho_{\mathrm{Bell}},\rho_{\mathrm{Bell}})=7,
\qquad
B_{2,\mathrm H}(\rho_{\mathrm{Bell}},\rho_{\mathrm{Bell}})=\frac{29}{20}.
\end{align}
For a one-qubit pure state,
\begin{align}
A_1(\phi,\phi)=3,
\qquad
B_{1,\mathrm H}(\phi,\phi)=\frac65.
\end{align}
The displayed formulas for $A_n$ and $B_{n,\mathrm H}$ follow by multiplying
these tensor factors. Therefore, with $r:=n\bmod 2$,
\begin{align}
A_n(\rho_{\mathrm{BD},n},\rho_{\mathrm{BD},n})B_{n,\mathrm H}(\rho_{\mathrm{BD},n},\rho_{\mathrm{BD},n})
&=
\left(7\cdot\frac{29}{20}\right)^{\lfloor n/2\rfloor}
\left(3\cdot\frac65\right)^r.
\end{align}
Since
\begin{align}
7\cdot\frac{29}{20}
&=
\frac{203}{20}
<
\left(\frac{18}{5}\right)^2,
\qquad
3\cdot\frac65
=
\frac{18}{5},
\end{align}
the desired bound follows.
\end{proof}

\appsubsubsection{Opposing effects of entanglement}

The certificate in Eq.~\eqref{eq:haar-conjecture-direct-certificate} should be
understood as a competition between two opposing effects of entanglement.

On the one hand,
\begin{align}
A_n\left(\psi,\psi\right) = \sum_{S\subseteq[n]}2^{\,n-\left|S\right|}\tr\left[\psi_S^2\right]
\end{align}
is a weighted sum of subsystem purities. Hence $A_n\left(\psi,\psi\right)\leq 3^n$
with equality if and only if $\ket{\psi}$ is a product state. In this sense,
$A_n$ \emph{penalizes entanglement}: once $\ket{\psi}$ is entangled, 
some reduced states become mixed, and the corresponding purities decrease.

On the other hand, $B_{n,\mathrm H}\left(\psi,\psi\right)$ behaves differently.
It is not simply maximized by product states, and it may even increase for
entangled states. For instance, in the two-qubit Schmidt family
\begin{align}
\ket{\psi_\lambda}
=
\sqrt{\lambda}\ket{00}
+
\sqrt{1-\lambda}\ket{11},
\qquad
t=\lambda\left(1-\lambda\right),
\end{align}
one has
\begin{align}
A_2\left(\psi_\lambda,\psi_\lambda\right)
&=
9-8t,
\\
B_{2,\mathrm H}\left(\psi_\lambda,\psi_\lambda\right)
&=
\frac{36}{25}
-\frac{58}{25}t
+\frac{236}{25}t^2.
\end{align}
Thus
\begin{align}
A_2\big|_{t=0}
&=
9,
&
B_{2,\mathrm H}\big|_{t=0}
&=
\frac{36}{25},
\end{align}
for product states, whereas for the Bell state \(t=1/4\),
\begin{align}
A_2\big|_{t=1/4}
&=
7,
&
B_{2,\mathrm H}\big|_{t=1/4}
&=
\frac{29}{20}.
\end{align}
Therefore entanglement decreases $A_2$ but slightly increases $B_{2,\mathrm H}$. However,
\begin{align}
A_2B_{2,\mathrm H}\big|_{t=0}
&=
9\cdot\frac{36}{25}
=
\frac{324}{25}
>
\frac{203}{20}
=
7\cdot\frac{29}{20}
=
A_2B_{2,\mathrm H}\big|_{t=1/4},
\end{align}
so the product state still gives the larger state-dependent product of the second- and fourth-moment coefficients.

This example illustrates the main difficulty of the general problem:
entanglement can help $B_{n,\mathrm H}$, but it simultaneously hurts $A_n$.
The direct certificate for Conj.~\ref{conj:Haar-complexity} asks whether, in
the product
\begin{align}
A_n\left(\psi,\psi\right)B_{n,\mathrm H}\left(\psi,\psi\right),
\end{align}
the purity loss in $A_n$ always outweighs any possible gain in
$B_{n,\mathrm H}$ at the level needed for the conjectured sample complexity
scaling.

\appsection{DIPE with independent single-qubit classical shadow}
\label{app:independent-pauli-shadow}

This appendix gives the details behind
Prop.~\ref{prop:independent-pauli-shadow-complexity-main}. The protocol is
based on independent local classical shadows and should be distinguished from
the local randomized-measurement estimator studied in the main text: here Alice
and Bob do not share their single-qubit Pauli bases. Instead, each party forms
an empirical local Pauli shadow state, and the final estimate is the
Hilbert--Schmidt inner product of the two empirical shadows.

\vspace*{0.1in}
\textbf{Pauli shadow.}
The one-qubit Pauli shadow is obtained by choosing one of the
three Pauli bases $X,Y,Z$ uniformly at random, measuring in that basis, and
then applying the inverse of the corresponding classical-shadow measurement
channel. Equivalently, define the six one-qubit Pauli eigenstate projectors
\begin{align}
\cP_1
:=
\left\{
\proj{0},
\proj{1},
\proj{+},
\proj{-},
\proj{i+},
\proj{i-}
\right\},
\end{align}
where $\ket{i\pm}:=(\ket{0}\pm i\ket{1})/\sqrt2$. The induced POVM elements
are
\begin{align}
E_\psi=\frac13\psi,\qquad \psi\in\cP_1,
\end{align}
because each of the three bases is selected with probability $1/3$. For an
observed one-qubit projector $\psi$, the inverse channel assigns the snapshot
\begin{align}
s_\psi:=3\psi-\bI_2 .
\end{align}
The six states in $\cP_1$ form a one-qubit projective $3$-design, and this
measurement is the standard single-qubit Pauli shadow~\cite{huang2020shadows}.
For an $n$-qubit state, the measurement and reconstruction are applied
independently on each qubit. Thus, for an outcome
$\bm\psi=(\psi_1,\ldots,\psi_n)$, the corresponding product shadow snapshot of
$\rho$ is
\begin{align}
\hat{\rho}_{\bm\psi}:=\bigotimes_{\ell=1}^n s_{\psi_\ell}.
\end{align}
If $\bm\psi$ is sampled from the Pauli measurement distribution of $\rho$,
then the tensor-product inverse channel guarantees
\begin{align}
\bE[\hat{\rho}_{\bm\psi}]=\rho.
\end{align}
We use the analogous notation $\hat{\sigma}_{\bm\varphi}$ for Bob's local
Pauli-shadow snapshots sampled from $\sigma$.

\vspace*{0.1in}
\textbf{DIPE with independent Pauli shadows.}
Alice and Bob do not share their single-qubit Pauli bases. 
Instead, Alice constructs the empirical local Pauli shadow
$N^{-1}\sum_i \hat{\rho}_i$, Bob constructs the independent empirical local
Pauli shadow $N^{-1}\sum_j \hat{\sigma}_j$, and the final estimate is the Hilbert--Schmidt
inner product of these two empirical shadows. The corresponding estimation protocol is given in
Algorithm~\ref{alg:independent-pauli-shadow-dipe}.

\begin{algorithm}[H]
\caption{DIPE with independent Pauli shadows}
\label{alg:independent-pauli-shadow-dipe}
\begin{algorithmic}[1]
\REQUIRE $N$ copies of $\rho$ held by Alice and $N$ copies of $\sigma$ held by Bob.
\ENSURE An estimate of $f=\tr\left[\rho\sigma\right]$.
\STATE For each $i=1,\ldots,N$, Alice independently measures every qubit in a uniformly random Pauli basis and forms $\hat{\rho}_i=\bigotimes_{\ell=1}^n(3\psi_{i,\ell}-\bI_2)$.
\STATE For each $j=1,\ldots,N$, Bob independently measures every qubit in a uniformly random Pauli basis and forms $\hat{\sigma}_j=\bigotimes_{\ell=1}^n(3\varphi_{j,\ell}-\bI_2)$.
\STATE Return
\[
\wh g_{\mathrm P}:=
\tr\left[
\left(\frac1N\sum_{i=1}^N \hat{\rho}_i\right)
\left(\frac1N\sum_{j=1}^N \hat{\sigma}_j\right)
\right]
=
\frac1{N^2}\sum_{i,j=1}^N\tr\left[\hat{\rho}_i\hat{\sigma}_j\right].
\]
\end{algorithmic}
\end{algorithm}

Since Alice's and Bob's shadows are independent and unbiased,
\begin{align}
\bE[\wh g_{\mathrm P}]
=
\frac1{N^2}\sum_{i,j=1}^N
\tr\left[\bE[\hat{\rho}_i]\bE[\hat{\sigma}_j]\right]
=
\tr\left[\rho\sigma\right].
\end{align}

Now we are ready to prove Prop.~\ref{prop:independent-pauli-shadow-complexity-main}
in the main text.

\begin{proof}[Proof of Prop.~\ref{prop:independent-pauli-shadow-complexity-main}]
Write
\begin{align}
h_{ij}:=\tr\left[\hat{\rho}_i\hat{\sigma}_j\right],
\qquad
\wh g_{\mathrm P}=\frac1{N^2}\sum_{i,j=1}^N h_{ij},
\qquad
f:=\tr\left[\rho\sigma\right].
\end{align}
Introduce the three second-moment quantities
\begin{align}
A_{\rho,\sigma}^{\mathrm P}
:=
\bE_{\hat{\rho},\hat{\sigma}}\tr\left[\hat{\rho}\hat{\sigma}\right]^2,\qquad
B_{\rho,\sigma}^{\mathrm P}
:=
\bE_{\hat{\rho}}\tr\left[\hat{\rho}\sigma\right]^2,\qquad
B_{\sigma,\rho}^{\mathrm P}
:=
\bE_{\hat{\sigma}}\tr\left[\rho\hat{\sigma}\right]^2.
\end{align}
Expanding the square gives
\begin{align}
\bE[\wh g_{\mathrm P}^2]
=
\frac1{N^4}
\sum_{i,j,q,r=1}^N
\bE[h_{ij}h_{qr}].
\end{align}
The summands are classified by the same index coincidences as in the global
shadow analysis of Ref.~\cite{anshu2022distributed}. If $i=q$ and $j=r$, there
are $N^2$ terms, each equal to $A_{\rho,\sigma}^{\mathrm P}$. If
$i\neq q$ and $j\neq r$, all four snapshots are independent and each term
equals $f^2$. If $i=q$ and $j\neq r$, conditioning on $\hat{\rho}_i$ gives
$B_{\rho,\sigma}^{\mathrm P}$. If $i\neq q$ and $j=r$, the analogous
contribution is $B_{\sigma,\rho}^{\mathrm P}$. Therefore
\begin{align}
\bE[\wh g_{\mathrm P}^2]
=
\frac{A_{\rho,\sigma}^{\mathrm P}}{N^2}
+
\frac{(N-1)^2}{N^2}f^2
+
\frac{N-1}{N^2}
\left(B_{\rho,\sigma}^{\mathrm P}+B_{\sigma,\rho}^{\mathrm P}\right),
\end{align}
and hence
\begin{align}
\label{eq:pauli-shadow-exact-variance}
\Var[\wh g_{\mathrm P}]
=
\frac{A_{\rho,\sigma}^{\mathrm P}}{N^2}
+
\frac{N-1}{N^2}
\left(B_{\rho,\sigma}^{\mathrm P}+B_{\sigma,\rho}^{\mathrm P}\right)
-
\frac{2N-1}{N^2}f^2.
\end{align}
Dropping the negative term and using $(N-1)/N^2\le 1/N$, we obtain
\begin{align}
\label{eq:pauli-shadow-variance-reduction}
\Var[\wh g_{\mathrm P}]
\le
\frac{A_{\rho,\sigma}^{\mathrm P}}{N^2}
+
\frac{B_{\rho,\sigma}^{\mathrm P}+B_{\sigma,\rho}^{\mathrm P}}{N}.
\end{align}

It remains to bound $A_{\rho,\sigma}^{\mathrm P}$ and
$B_{\rho,\sigma}^{\mathrm P}$. For one qubit define
\begin{align}
\omega_2^{\mathrm P}
:=
\sum_{\psi,\varphi\in\cP_1}
E_\psi\ox E_\varphi\,\tr\left[s_\psi s_\varphi\right]^2.
\end{align}
Then
\begin{align}
A_{\rho,\sigma}^{\mathrm P}
=
\tr\left[
(\rho\ox\sigma)(\omega_2^{\mathrm P})^{\otimes n}
\right].
\end{align}
For $\psi,\varphi\in\cP_1$,
\begin{align}
\tr\left[s_\psi s_\varphi\right]
=
\tr\left[(3\psi-\bI_2)(3\varphi-\bI_2)\right]
=
9\tr\left[\psi\varphi\right]-4.
\end{align}
Thus $\tr\left[s_\psi s_\varphi\right]^2$ is $25$ if $\psi=\varphi$, $16$ if
$\psi$ and $\varphi$ are orthogonal states in the same Pauli basis, and
$1/4$ if they belong to different Pauli bases. By one-qubit Clifford
symmetry,
\begin{align}
\omega_2^{\mathrm P}=\alpha\bI_4+\beta\bF.
\end{align}
The two traces determining $\alpha,\beta$ are
\begin{align}
\tr\left[\omega_2^{\mathrm P}\right]
&=
\frac19\left(6\cdot25+6\cdot16+24\cdot\frac14\right)
=28,\\
\tr\left[\omega_2^{\mathrm P}\bF\right]
&=
\frac19\left(6\cdot25+24\cdot\frac14\cdot\frac12\right)
=17.
\end{align}
Since
\begin{align}
\tr\left[\alpha\bI_4+\beta\bF\right]=4\alpha+2\beta,\qquad
\tr\left[(\alpha\bI_4+\beta\bF)\bF\right]=2\alpha+4\beta,
\end{align}
we get $\alpha=13/2$ and $\beta=1$. Hence
\begin{align}
\omega_2^{\mathrm P}
=
\frac{13}{2}\bI_4+\bF
=
\frac{15}{2}\Pi_{\mathrm{sym}}^{(2)}
+
\frac{11}{2}\Pi_{\mathrm{asym}}^{(2)}.
\end{align}
Therefore $\norm{\omega_2^{\mathrm P}}{\infty}=15/2$, and
\begin{align}
\label{eq:pauli-shadow-A-bound}
A_{\rho,\sigma}^{\mathrm P}
\le
\left(\frac{15}{2}\right)^n.
\end{align}
This bound is saturated by identical pure product states. Indeed, if
$\rho=\sigma=\bigotimes_{\ell=1}^n\ketbra{\phi_\ell}$, then each local two-copy
factor lies in the symmetric subspace, and hence
$A_{\rho,\sigma}^{\mathrm P}=(15/2)^n$.

For the one-sided term define
\begin{align}
\omega_3^{\mathrm P}
:=
\sum_{\psi\in\cP_1}
E_\psi\ox s_\psi\ox s_\psi.
\end{align}
Then
\begin{align}
B_{\rho,\sigma}^{\mathrm P}
=
\tr\left[
(\rho\ox\sigma\ox\sigma)(\omega_3^{\mathrm P})^{\otimes n}
\right].
\end{align}
Using $E_\psi=\psi/3$ and $s_\psi=3\psi-\bI_2$,
\begin{align}
\omega_3^{\mathrm P}
=
\frac13\sum_{\psi\in\cP_1}
\left(
9\psi\ox\psi\ox\psi
-3\psi\ox\psi\ox\bI_2
-3\psi\ox\bI_2\ox\psi
+\psi\ox\bI_2\ox\bI_2
\right).
\end{align}
The projective $3$-design identities for $\cP_1$ are
\begin{align}
\sum_{\psi\in\cP_1}\psi=3\bI_2,\qquad
\sum_{\psi\in\cP_1}\psi^{\otimes2}=2\Pi_{\mathrm{sym}}^{(2)},\qquad
\sum_{\psi\in\cP_1}\psi^{\otimes3}=\frac32\Pi_{\mathrm{sym}}^{(3)}.
\end{align}
Substitution gives
\begin{align}
\omega_3^{\mathrm P}
=
\frac92\Pi_{\mathrm{sym}}^{(3)}
-2\Pi_{\mathrm{sym},12}^{(2)}
-2\Pi_{\mathrm{sym},13}^{(2)}
+\bI_2^{\otimes3},
\end{align}
where
\begin{align}
\Pi_{\mathrm{sym},12}^{(2)}=\frac{\bI+\bF_{12}}2,\qquad
\Pi_{\mathrm{sym},13}^{(2)}=\frac{\bI+\bF_{13}}2
\end{align}
are embedded in $(\bC^2)^{\otimes3}$. Direct diagonalization gives
\begin{align}
\operatorname{spec}(\omega_3^{\mathrm P})
=
\left\{
\frac32 \text{ with multiplicity }4,\;
-2 \text{ with multiplicity }2,\;
0 \text{ with multiplicity }2
\right\}.
\end{align}
Thus $\norm{\omega_3^{\mathrm P}}{\infty}=2$. Since
$B_{\rho,\sigma}^{\mathrm P}$ is a square moment, it is nonnegative, and
\begin{align}
\label{eq:pauli-shadow-B-bound}
0\le B_{\rho,\sigma}^{\mathrm P}\le 2^n.
\end{align}
The same bound holds for $B_{\sigma,\rho}^{\mathrm P}$.

Combining Eqs.~\eqref{eq:pauli-shadow-variance-reduction},
\eqref{eq:pauli-shadow-A-bound}, and \eqref{eq:pauli-shadow-B-bound} gives
\begin{align}
\label{eq:pauli-shadow-variance-bound}
\Var[\wh g_{\mathrm P}]
\le
\frac{(15/2)^n}{N^2}
+
\frac{2^{n+1}}{N}.
\end{align}
Chebyshev's inequality then gives
\begin{align}
\bP\left(|\wh g_{\mathrm P}-f|\ge \ve\right)
\le
\frac1{\ve^2}
\left(
\frac{(15/2)^n}{N^2}
+
\frac{2^{n+1}}{N}
\right).
\end{align}
Choosing the universal constant $C$ large enough makes the right-hand side at
most $\delta$, which proves the stated sufficient-copy guarantee.

For fixed $\ve$ and $\delta$, the leading large $n$ scale of this Chebyshev-based sufficient bound is therefore
$\sqrt{7.5^n}$. 
The dominant variance term $A_{\rho,\sigma}^{\mathrm P}$ is saturated by identical pure product states; 
hence this leading variance-based scale cannot be improved within the present Chebyshev analysis.
\end{proof}

Finally, replacing the Pauli shadow by a local classical shadow constructed
from the single-qubit Haar ensemble does not improve this sufficient-copy
scaling, because the above analysis uses only the single-qubit projective
$3$-design identities.

\appsection{Additional notes on numerical validation}
\label{app:numerics}

This appendix supplements Sec.~\ref{sec:numerical-simulation} with numerical details that are not included in the core panels of the main text. We pursue two complementary goals. First, we resolve the state-dependent moment coefficients $A_n$, $C_n$, and $B_{n,\mathrm E}$ that determine the positive variance terms after the $N_M$-dependent prefactors are restored. Second, we examine continuous state families to test whether the worst-case behavior observed among the benchmark families remains stable under smooth deformations of purity and entanglement structure.

Unless stated otherwise, all plotted values are obtained from exact state representations together with either exact finite-size tensor-network contraction or Monte Carlo averaging over local measurement settings, depending on the system size. The plotted quantities are coefficients rather than full variance terms, so they do not include the prefactors $1/N_M^2$, $(N_M-1)/N_M^2$, or $((N_M-1)/N_M)^2$. The fitting lines quoted in the main text are extracted from least-squares fits of the logarithm of the relevant quantity versus $n$, using the largest accessible system sizes for each state family.

\appsubsection{State-dependent moment coefficients}

We begin with the coefficients entering the three nontrivial positive variance contributions introduced in Eq.~\eqref{eq:general-vterms}. This decomposition is especially useful for explaining why the local Clifford and local Haar ensembles behave so differently in the worst case. We use the state-dependent moment coefficients $A_n$, $C_n$, and $B_{n,\mathrm E}$ defined in Eqs.~\eqref{eq:A-n}--\eqref{eq:B-n}. The coefficients $A_n$ and $C_n$ are common to both ensembles, whereas the fourth-moment coefficient $B_{n,\mathrm E}$ carries the ensemble dependence and the sensitivity to entanglement.
For $\mathrm E\in\{\Cl,\mathrm H\}$, these coefficients satisfy
\begin{align}
\mathbb{V}^{(2)}_{\mathrm E}
&=
\frac{A_n(\rho,\sigma)}{N_M^2},
&
\mathbb{V}^{(3)}_{\mathrm E}
&=
\frac{N_M-1}{N_M^2}C_n(\rho,\sigma),
&
\mathbb{V}^{(4)}_{\mathrm E}
&=
\left(\frac{N_M-1}{N_M}\right)^2
B_{n,\mathrm E}(\rho,\sigma).
\end{align}

Fig.~\ref{fig:appendix-variance-components} resolves $A_n$, $C_n$, and the product $A_nB_{n,\mathrm E}$ for the five benchmark state families used throughout the paper. For local Clifford sampling, the data match the theoretical picture from Appx.~\ref{app:clifford}: pure stabilizer states sit at the top boundary, and in particular the product state $\ket{+}^{\otimes n}$ realizes the worst-case growth in the displayed coefficients. The Bell-dimer and GHZ families stay below this envelope, while Haar-random and $W$ states are further suppressed.

For local Haar sampling, the decomposition reveals the microscopic origin of the competition effect emphasized in Sec.~\ref{sec:numerical-simulation}. The Bell-dimer family can beat the pure product benchmark at the level of $B_{n,\mathrm H}$, as shown in the main text. However, this gain is offset by a pronounced reduction in $A_n$, which tracks subsystem purity. The third column displays this compensation directly through $A_nB_{n,\mathrm H}$: the product state remains on the worst-case envelope among the tested families even though entangled states can be superior in the fourth-moment coefficient alone. The coefficient $C_n$ is shown separately because it is controlled by a distinct third-moment contribution.

\begin{figure*}[t]
\centering
\subfloat[Local Clifford: $A_n$.\label{fig:appendix-clifford-A}]{%
\includegraphics[width=0.32\textwidth]{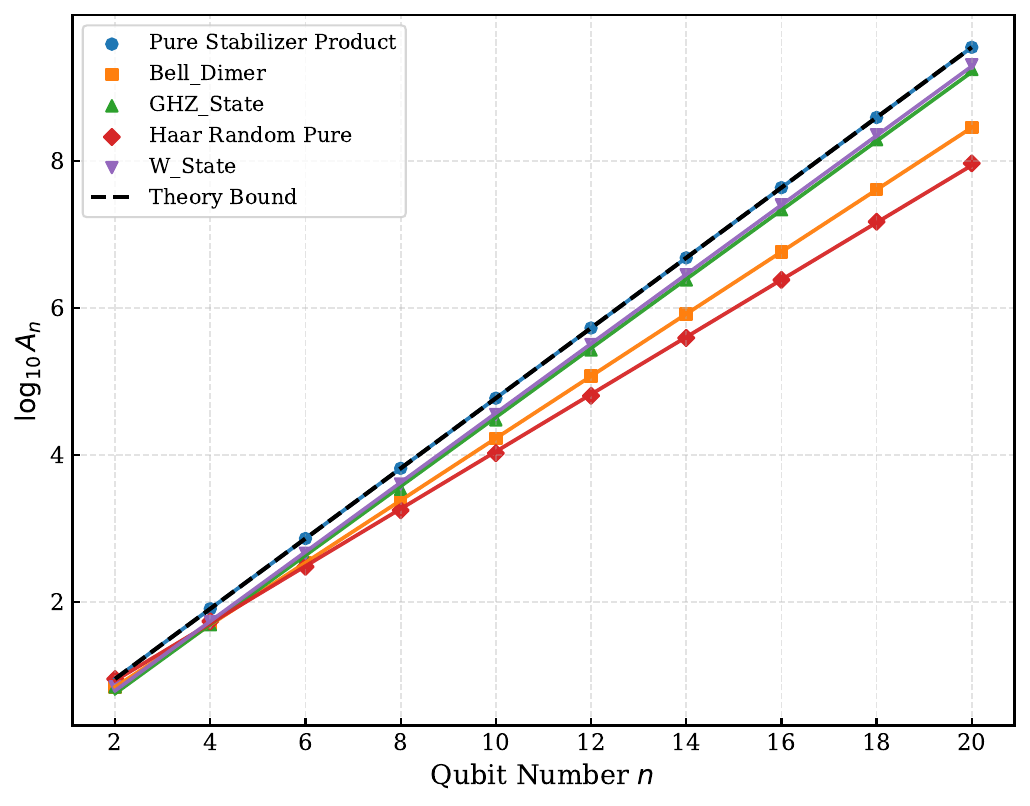}}
\hfill
\subfloat[Local Clifford: $C_n$.\label{fig:appendix-clifford-C}]{%
\includegraphics[width=0.32\textwidth]{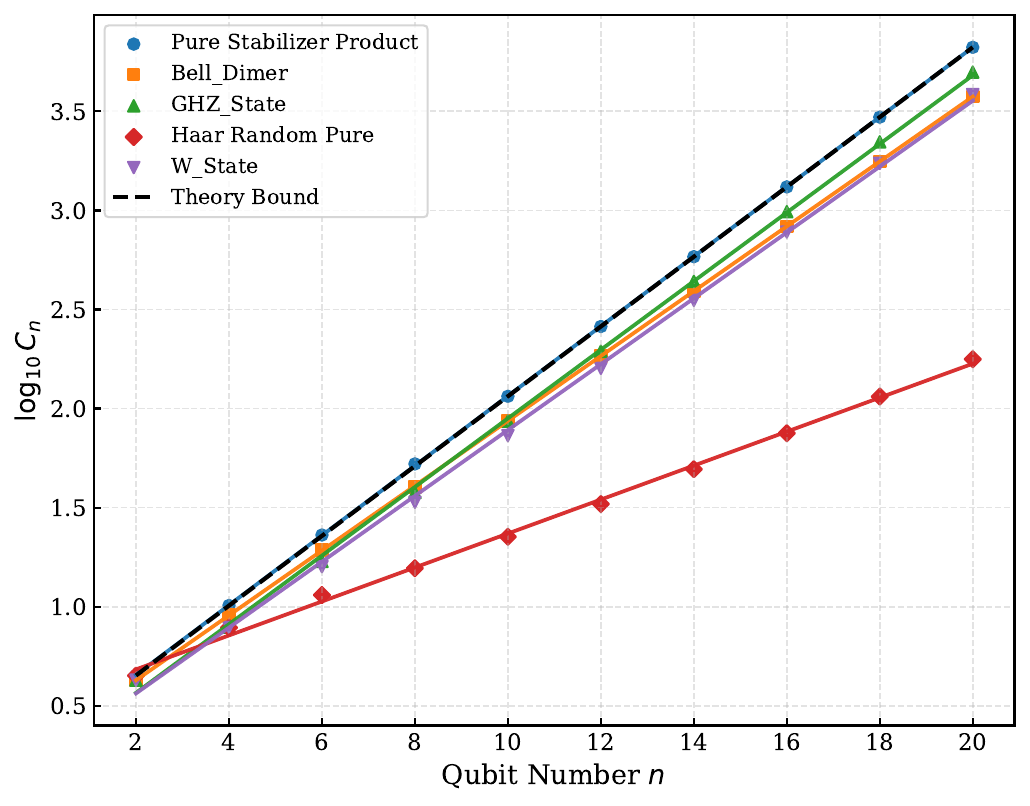}}
\hfill
\subfloat[Local Clifford: $A_nB_{n,\Cl}$.\label{fig:appendix-clifford-AxB}]{%
\includegraphics[width=0.32\textwidth]{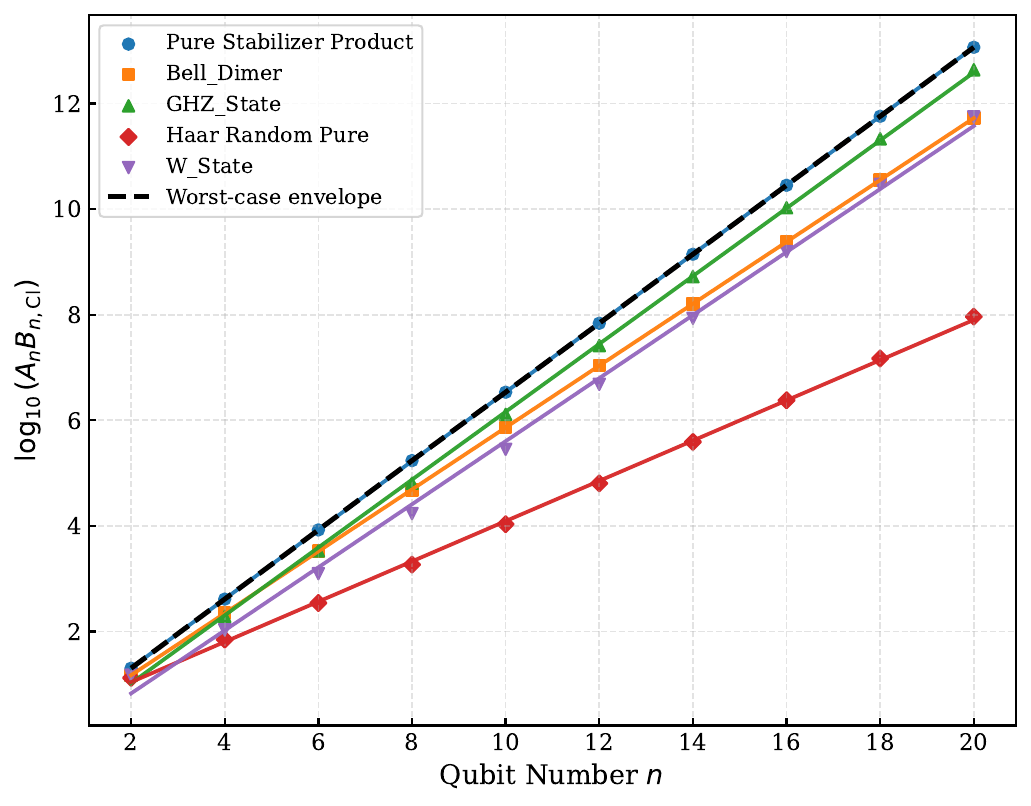}}\\[0.8ex]
\subfloat[Local Haar: $A_n$.\label{fig:appendix-haar-A}]{%
\includegraphics[width=0.32\textwidth]{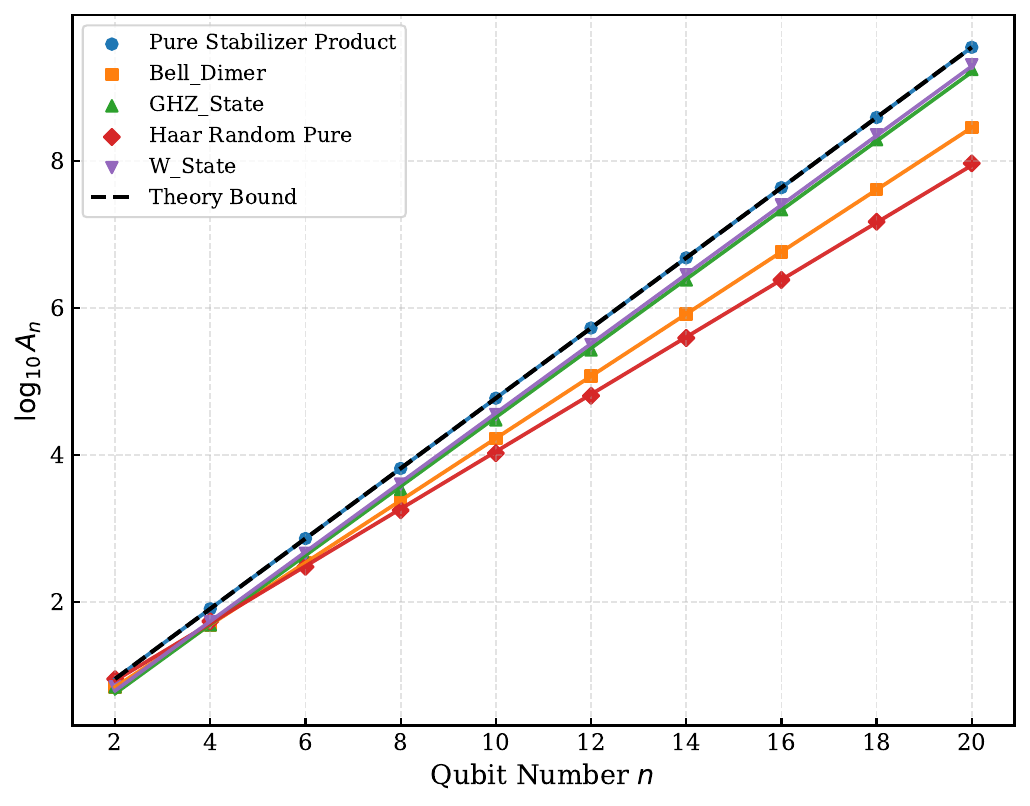}}
\hfill
\subfloat[Local Haar: $C_n$.\label{fig:appendix-haar-C}]{%
\includegraphics[width=0.32\textwidth]{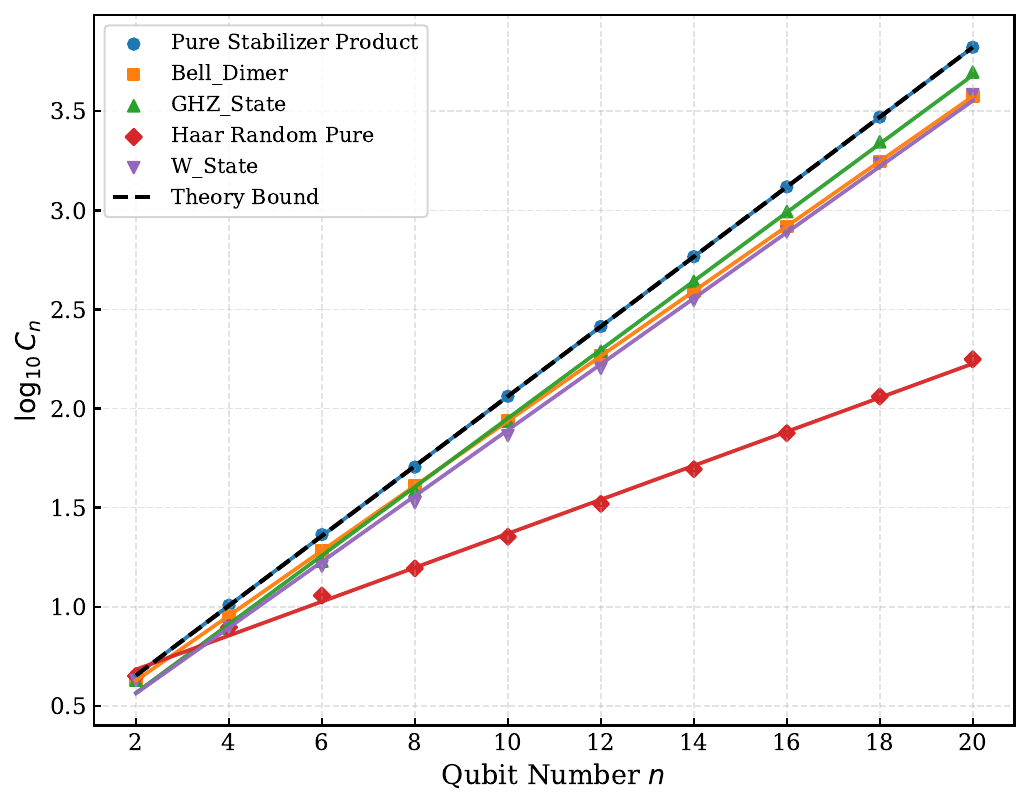}}
\hfill
\subfloat[Local Haar: $A_nB_{n,\mathrm H}$.\label{fig:appendix-haar-AxB}]{%
\includegraphics[width=0.32\textwidth]{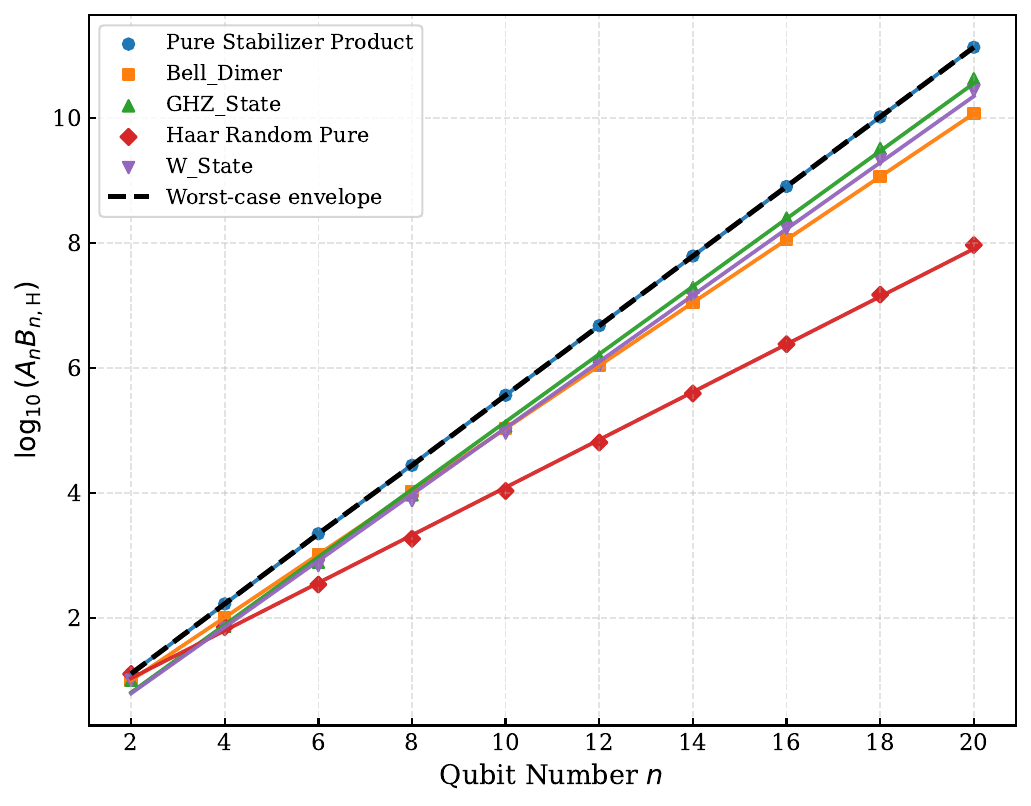}}
\caption{Resolved state-dependent moment coefficients for the benchmark state families. The top row shows local Clifford sampling and the bottom row shows local Haar sampling. The first two columns show $A_n$ and $C_n$, while the third column shows $A_nB_{n,\mathrm E}$. The data confirm that local Clifford worst-case behavior is governed by pure stabilizer states, whereas local Haar sampling exhibits a genuine competition: Bell-dimer states enhance $B_{n,\mathrm H}$ but are penalized in $A_n$.}
\label{fig:appendix-variance-components}
\end{figure*}

\appsubsection{Parameterized state families}

To test the robustness of the numerical conclusions beyond a few discrete exemplars, we next study two parameterized families that continuously interpolate between physically distinct regimes. These deformations probe whether the extremal behavior observed in the benchmark set is accidental or persists under smooth changes of the underlying states.

\appsubsubsection{Purity deformation}

The first deformation starts from the pure reference state $\rho_0=\ketbra{+}^{\otimes n}$ and applies an independent depolarizing channel to each qubit. For a single-qubit operator $\tau$, we use
\begin{equation}
\mathcal{D}_p(\tau)=(1-p)\tau+p\,\tr\left[\tau\right]\frac{\bI_2}{2},
\qquad p\in[0,1],
\end{equation}
and the noisy $n$-qubit state is $\rho_p=\mathcal{D}_p^{\otimes n}(\rho_0)$. In the simulation, this channel is implemented equivalently at the level of measurement probabilities by applying, independently on each measured bit, the classical transition matrix with flip probability $p/2$. Thus the horizontal axis in Fig.~\ref{fig:appendix-purity-deformation} is the local depolarizing parameter $p$. This construction continuously lowers subsystem purity while keeping a transparent control parameter. The resulting curves are shown in Fig.~\ref{fig:appendix-purity-deformation}. For both local Clifford and local Haar measurements, the fourth-moment contribution decreases monotonically as the depolarization strength increases. Numerically, we do not observe any mixed-state enhancement over the pure-state boundary. This supports the simplifying principle used throughout the paper: the worst-case growth occurs on or arbitrarily close to the pure-state boundary of state space, within the tested families.

\begin{figure}[t]
\centering
\subfloat[Local Clifford.\label{fig:appendix-clifford-purity}]{%
\includegraphics[width=0.48\columnwidth]{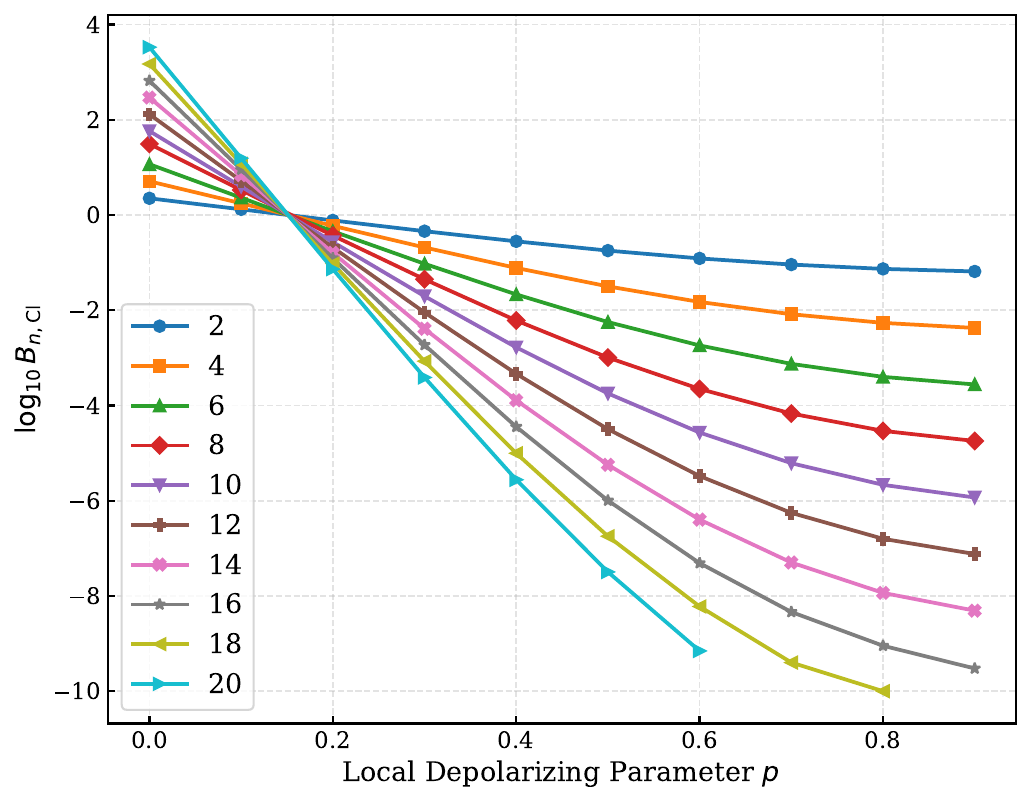}}
\hfill
\subfloat[Local Haar.\label{fig:appendix-haar-purity}]{%
\includegraphics[width=0.48\columnwidth]{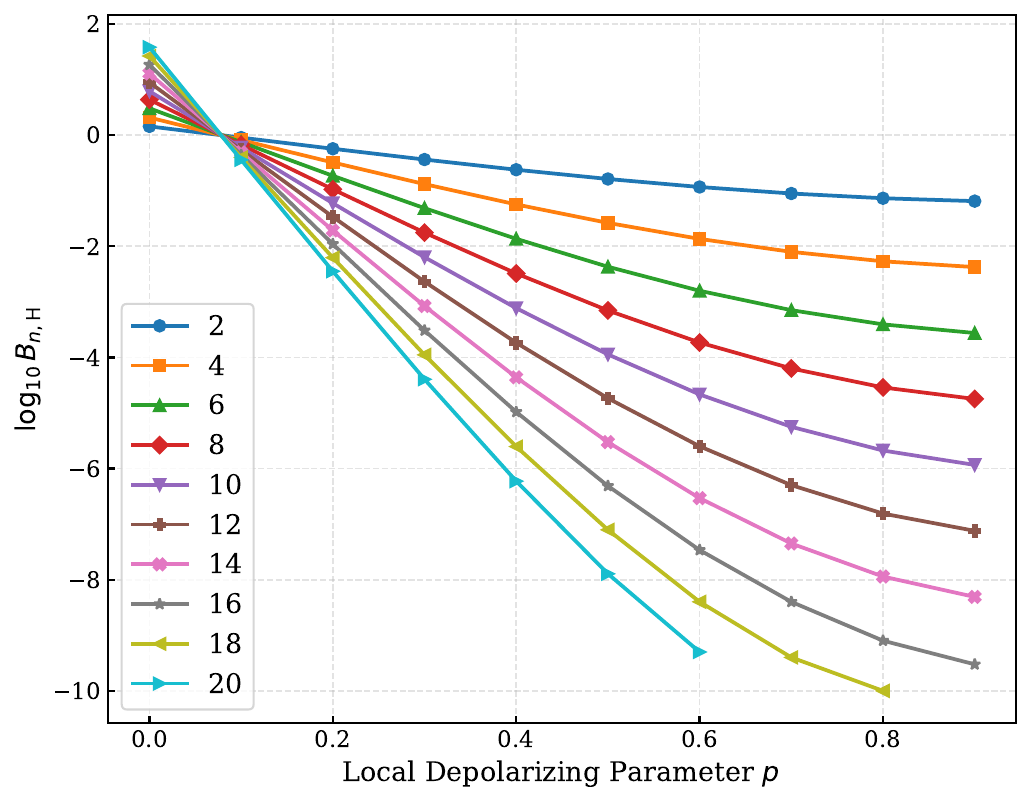}}
\caption{Variation with purity loss induced by local depolarization. Each curve corresponds to a fixed system size $n$, starting from the deterministic product state $\ket{+}^{\otimes n}$ and varying the local depolarizing parameter $p$. The states are not randomly generated; randomness enters only through the sampling of local measurement settings when Monte Carlo estimation is used. Both ensembles show decreasing fourth-moment contributions as the state moves away from the pure-state boundary.}
\label{fig:appendix-purity-deformation}
\end{figure}

\appsubsubsection{Entanglement-structure interpolation}

Finally, we vary the entanglement pattern within a deterministic family of pure stabilizer states. Starting from $\ket{+}^{\otimes n}$, we apply controlled-$Z$ gates along the first $m$ edges of a one-dimensional chain:
\begin{equation}
\ket{\phi_{n,m}}
=
\left(\prod_{j=1}^{m}CZ_{j,j+1}\right)\ket{+}^{\otimes n},
\qquad
0\le m\le n-1 .
\end{equation}
Thus $m=0$ is fully separable, while $m=n-1$ is the one-dimensional cluster state on an open chain. The outcomes are displayed in Fig.~\ref{fig:appendix-entanglement-interpolation}. This parameterized study most clearly separates the local Clifford and local Haar ensembles. In the Clifford case, the curves remain essentially rigid across the accessible parameter range: once the state remains in the stabilizer manifold, changing the number of chain edges does not change the asymptotic fourth-term value. In the Haar case, entanglement can enhance the fourth moment, but the same states are penalized in the second variance term.

\begin{figure}[t]
\centering
\subfloat[Local Clifford.\label{fig:appendix-clifford-entanglement}]{%
\includegraphics[width=0.48\columnwidth]{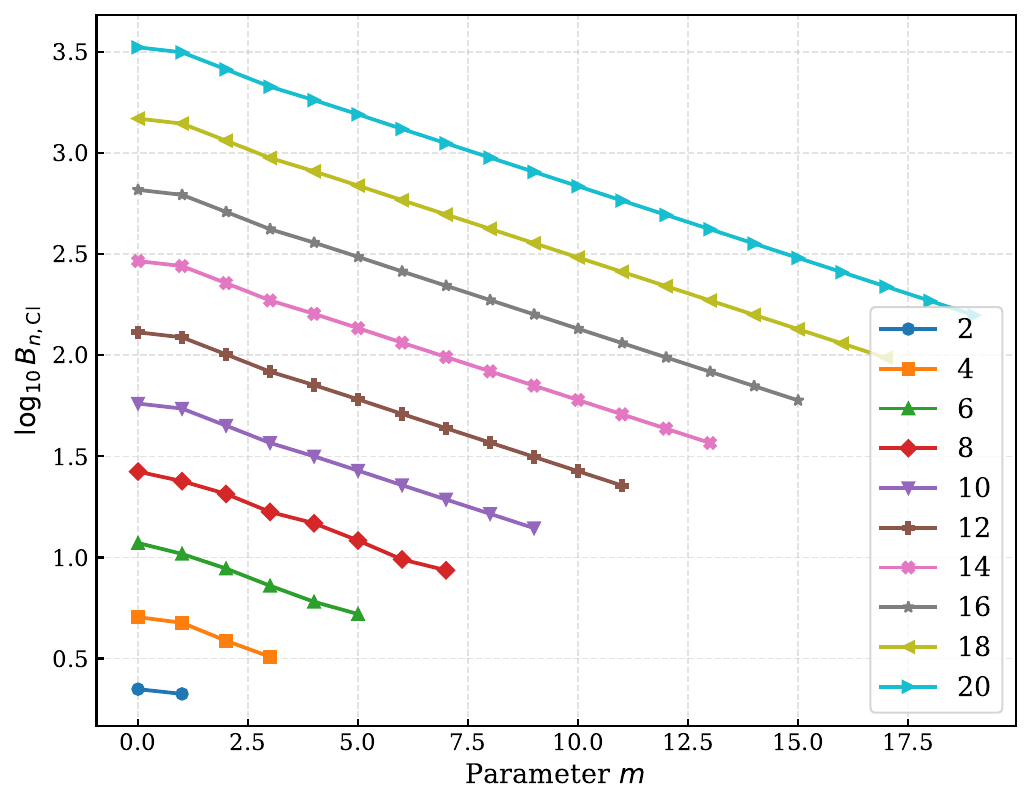}}
\hfill
\subfloat[Local Haar.\label{fig:appendix-haar-entanglement}]{%
\includegraphics[width=0.48\columnwidth]{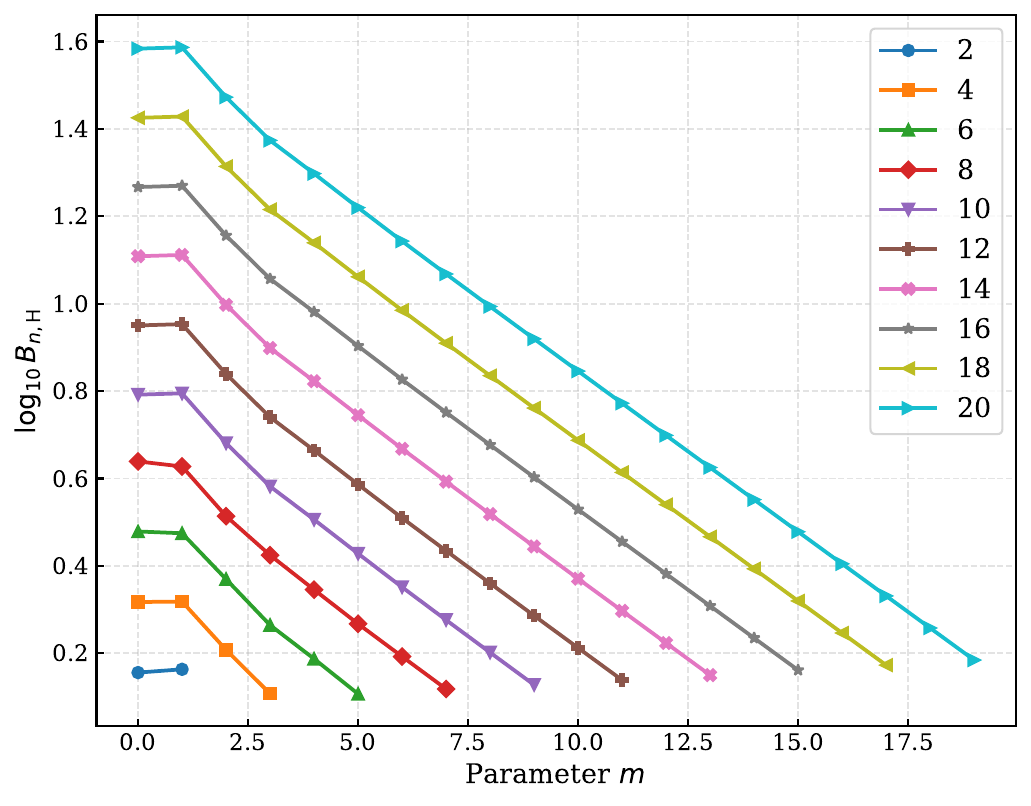}}
\caption{Variation with entanglement structure within the chain graph-state family $\ket{\phi_{n,m}}$. Each curve corresponds to a fixed system size $n$, and the horizontal axis gives the number $m$ of nearest-neighbor $CZ$ edges. Local Clifford sampling is largely insensitive to this deformation within the stabilizer manifold, whereas local Haar sampling reacts strongly to the amount of entanglement across qubits.}
\label{fig:appendix-entanglement-interpolation}
\end{figure}

Taken together, these parameterized studies reinforce the conceptual picture developed in the main text. Local Clifford worst-case behavior is rigid and controlled by the stabilizer structure, while local Haar behavior reflects a genuine competition between entanglement-enhanced fourth-moment coefficients and purity-induced suppression in $A_n$. Within the tested families, the product state remains on the worst-case envelope not because entanglement is always harmless, but because its possible gain in $B_{n,\mathrm H}$ is numerically outweighed by the concurrent loss in $A_n$.

\end{document}